\newtheoremstyle{theorem}
	{6pt}
	{}
	{\itshape}
	{}
	{\bfseries}
	{:}
	{.5em}
	{}
\theoremstyle{theorem}
\newtheorem{theorem}{Theorem}[section]
\newtheorem{lemma}{Lemma}[section]
\newtheorem{corollary}{Corollary}[section]
\newtheoremstyle{definition}
	{6pt}
	{}
	{}
	{}
	{\bfseries}
	{:}
	{.5em}
	{}
\theoremstyle{definition}
\newtheorem{definition}{Definition}[section]
\newtheorem{remark}{Remark}[section]
\newtheorem{example}{Example}[section]
\DeclareMathAlphabet\mathbfcal{OMS}{cmsy}{b}{n}
\renewcommand{\sec}{Sec$.$~}
\newcommand{\app}{Appendix~}
\newcommand{\fig}{Fig$.$~}
\newcommand{\prog}{Prog$.$~}
\newcommand{\lem}{Lemma~}
\newcommand{\defn}{Def$.$~}
\newcommand{\rem}{Remark~}
\newcommand{\ex}{Ex$.$~}
\newcommand{\nl}{~\\}
\newcommand{\Q}{\mathcal{Q}}
\newcommand{\cB}{\mathcal{B}}
\newcommand{\cK}{\mathcal{K}}
\renewcommand{\P}{\mathcal{P}}
\newcommand{\cR}{\mathcal{R}}
\newcommand{\cV}{\mathcal{V}}
\newcommand{\Qsub}{\tilde{\Q}}
\newcommand{\X}{\mathcal{X}}
\newcommand{\Y}{\mathcal{Y}}
\newcommand{\A}{\mathcal{A}}
\newcommand{\B}{\mathcal{B}}
\newcommand{\ABXY}{\A\B\X\Y}
\newcommand{\C}{\mathcal{C}}
\newcommand{\R}{\mathbb R}
\newcommand{\G}{\mathcal{G}}
\newcommand{\V}{V}
\newcommand{\pabgxy}{p(a,b|x,y)}
\newcommand{\tr}[1]{\mathrm{Tr}\left[#1\right]}
\newcommand{\ptr}[2]{\mathrm{Tr}_{#1}\left[#2\right]}
\newcommand{\id}{\mathcal{I}}
\newcommand{\identity}{\mathbb{1}}
\renewcommand{\H}{\mathcal{H}}
\newcommand{\ot}{\otimes}
\renewcommand{\S}{\mathcal{S}}
\newcommand{\ketbra}[1]{\ket{#1}\!\!\bra{#1}}
\newcommand{\pg}{p_{\mathrm{guess}}}
\newcommand{\hmin}{H_{\min}}
\newcommand{\bin}[2]{\mathrm{Bin}\left(#1,#2\right)}
\newcommand{\D}{\mathcal{D}}
\newcommand{\DD}{\mathfrak{D}}
\newcommand{\iid}{i$.$i$.$d$.$ }
\newcommand{\x}{\tilde{x}}
\newcommand{\y}{\tilde{y}}
\newcommand{\out}{ab}
\newcommand{\inp}{xy}
\newcommand{\outc}{ab}
\newcommand{\inpc}{x,y}
\newcommand{\Out}{AB}
\newcommand{\Outc}{A,B}
\newcommand{\eps}{\varepsilon}
\newcommand{\epeat}{\epsilon_{\mathrm{EAT}}}
\newcommand{\epsmooth}{\epsilon_{s}}
\renewcommand{\AA}{\bm{\mathrm{A}}}
\newcommand{\BB}{\bm{\mathrm{B}}}
\newcommand{\CC}{\bm{\mathrm{C}}}
\newcommand{\XX}{\bm{\mathrm{X}}}
\newcommand{\YY}{\bm{\mathrm{Y}}}
\newcommand{\II}{\bm{\mathrm{I}}}
\newcommand{\EE}{\mathrm{E}}
\newcommand{\TT}{\bm{\mathrm{T}}}
\newcommand{\N}{\mathcal{N}}
\renewcommand{\d}{\bm{\delta}}
\newcommand{\dpm}{\d_{\sgn}}
\newcommand{\ext}{R_{\mathrm{ext}}}
\newcommand{\zraw}{\AA\BB}
\newcommand{\zout}{\bm{\mathrm{Z}}}
\newcommand{\zseed}{\bm{\mathrm{D}}}
\newcommand{\epext}{\epsilon_{\mathrm{ext}}}
\newcommand{\eploss}{\ell_{\mathrm{ext}}}
\newcommand{\pabort}{\mathrm{P}_{\mathrm{abort}}}
\newcommand{\epcomp}{\varepsilon_{\mathrm{comp}}}
\newcommand{\epsound}{\varepsilon_{\mathrm{sound}}}
\renewcommand{\l}{\bm{\lambda}}
\newcommand{\w}{\bm{\omega}}
\newcommand{\p}{\bm{p}}
\renewcommand{\v}{\bm{\nu}}
\newcommand{\lv}{\l_{\v}}
\newcommand{\lw}{\l_{\w}}
\newcommand{\pr}[1]{\mathrm{Pr}\left[#1\right]}
\newcommand{\E}[1]{\mathbb{E}\left[#1\right]}
\newcommand{\ind}[1]{\multido{}{#1}{\pcind}}
\newcommand{\nln}[1]{{\scriptsize #1:}}
\newcommand{\tbf}[1]{\textbf{#1}}
\newcommand{\sgn}{\mathrm{sgn}}
\newcommand{\gen}{\mathrm{gen}}
\newcommand{\test}{\mathrm{test}}
\newcommand{\chsh}{\mathrm{CHSH}}
\newcommand{\cross}{\mathrm{align}}
\newcommand{\FB}{\mathrm{EB}}
\newcommand{\al}{\braket{AB}}
\newcommand{\supp}{\mathrm{supp}}
\newcommand{\kmax}{k_{\max}}
\newcommand{\epsamp}{\epsilon_{\mathrm{RIA}}}
\newcommand{\epdist}{\epsilon_{\mathrm{dist}}}
\newcommand{\nseed}{N_{\max}}
\newcommand{\ceil}[1]{\left\lceil #1 \right\rceil}
\newcommand{\floor}[1]{\left\lfloor #1 \right\rfloor}
\newcommand{\e}{\bm{\mathrm{e}}}
\newcommand{\rhotheta}{\rho_{\theta}}
\newcommand{\wch}{\omega_{\chsh}}
\newcommand{\wcr}{\omega_{\cross}}
\newcommand{\expo}{\mathrm{e}}
\newcommand{\protoName}{QRE}
\newcommand{\Qk}{\Q^{(k)}}
\newcommand{\pk}{p^{(k)}_{\text{\scriptsize \selectfont \hspace{-0.5pt}guess}}}
\newcommand{\dk}{d^{(k)}_{\text{\scriptsize \selectfont \hspace{-0pt}guess}}}
\newcommand{\errV}{\epsilon_V}
\newcommand{\errK}{\epsilon_K}
\newcommand{\errW}{\epsilon_\Omega}
\newcommand{\Max}[1]{\mathrm{Max}[#1]}
\newcommand{\Min}[1]{\mathrm{Min}[#1]}
\newcommand{\Var}[1]{\mathrm{Var}[#1]}
\newcommand{\respecting}{\Gamma} 
\newcommand{\frestricted}{\left.f\right|_{\respecting}}
\newcommand{\lmin}{\lambda_{\min}}
\newcommand{\lmax}{\lambda_{\max}}
\newcommand{\Av}{A_{\v}}
\newcommand{\Bv}{B_{\v}}
\newcommand{\QG}{\Q_{\G}}
\newcommand{\Ci}{C_i}
\newcommand{\errDFR}{\epsilon_{\text{DFR16}}}
\newcommand{\smax}{s_{\max}}
\newcommand{\sbar}{\bar{s}}
\newcommand{\Fmin}{\mathcal{F}_{\min}}
\newcommand{\DF}{\mathrm{DF}18}
\newcommand{\DFR}{\mathrm{DFR}16}
\newcommand{\PABGXY}{\P_{\A\B|\X\Y}}
\newcommand{\QABGXY}{\Q_{\A\B|\X\Y}}
\newcommand{\freqc}{F_{\CC}}
\newcommand{\freqcbm}{\bm{F}_{\CC}}
\newcommand{\freqexp}{\w}
\newcommand{\noabort}{\Omega}
\newcommand{\cnorm}{c_{\text{norm}}}
\begin{document}

\author{Peter J. Brown\footnote{peter.brown@york.ac.uk}}
\author{Sammy Ragy\footnote{sammy.ragy@york.ac.uk}}
\author{Roger Colbeck\footnote{roger.colbeck@york.ac.uk}}
\affil{Department of Mathematics, University of York, Heslington, York YO10 5DD, United Kingdom}

\renewcommand\Affilfont{\itshape\small}
\title{A framework for quantum-secure device-independent
  randomness expansion}
\maketitle

\begin{abstract}
A device-independent randomness expansion protocol aims to take an initial random seed and generate a longer one without relying on details of how the devices operate for security. A large amount of work to date has focussed on a particular protocol based on spot-checking devices using the CHSH inequality. Here we show how to derive randomness expansion rates for a wide range of protocols, with security against a quantum adversary. Our technique uses semidefinite programming and a recent improvement of the entropy accumulation theorem. To support the work and facilitate its use, we provide code that can generate lower bounds on the amount of randomness that can be output based on the measured quantities in the protocol. As an application, we give a protocol that robustly generates up to two bits of randomness per entangled qubit pair, which is twice that established in existing analyses of the spot-checking CHSH protocol in the low noise regime. 
\end{abstract}

\section{Introduction}
Random numbers are an essential resource in the information processing
era, finding applications in gaming, simulations and
cryptography. Cryptographic protocols are frequently built upon an
assumption of access to a private random seed. Using poor-quality
randomness can be fatal to the security of the protocol (see,
e.g.,~\cite{weakkeys12}). Thus, in order to adhere to these standard
protocol assumptions, it is imperative that we are able to certify the
generation of private random numbers.

The intrinsic randomness of quantum theory provides a
natural mechanism with which we can generate random numbers: a simple
source of perfectly random bits could be a device that prepares a
$\sigma_x$ eigenstate and then measures $\sigma_z$. However, the use
of such a source comes with a significant caveat: the internal
mechanisms of the preparation and measurement devices must be
well-characterized and kept stable throughout their use. Any mismatch
between the characterization and how the device operates in practice
may be an exploitable weakness in the hands of a smart enough
adversary; such mismatches have been used to compromise commercially
available quantum key distribution (QKD)
systems (see e.g.,~\cite{Mak17-qkd_attack}).

While weaknesses caused by mismatches may be mitigated by increasingly
detailed descriptions of the quantum devices, generating such
descriptions rapidly becomes unwieldy and remaining vulnerabilities
can be difficult to detect. This is reminiscent of the situation in
modern software engineering where security flaws are frequently
discovered and patched. Fixing hardware vulnerabilities, such as those
exploited in the aforementioned QKD attacks, can be more difficult
logistically and economically.

Fortunately, quantum theory provides a means to address this
problem. Going back to~\cite{MayersYao} and using an important insight
of~\cite{Ekert}, device-independent quantum cryptography establishes security independently of the devices involved within a protocol, relying only on the validity of quantum theory and the imposition of certain no-signalling constraints between devices. Security is subsequently verified through the observation of non-local output statistics, which in turn act as witnesses to the inner workings of the devices. Limiting the number of initial assumptions greatly reduces the threat of side-channel attacks. 

In this work we focus on the task of randomness expansion: a procedure wherein one attempts to transform a short private seed into a much larger (still private) source of uniform random bits. Randomness expansion in a device-independent setting was proposed
in~\cite{ColbeckThesis,CK2} with further development and experimental
testing following shortly after~\cite{PAMBMMOHLMM}. Subsequent work
provided security proofs against classical
adversaries~\cite{PM,FGS}. Security against quantum adversaries---who
may share entanglement with the internal state of the device---came
later~\cite{MS2,MS1,VV}, progressively increasing in noise-tolerance
and generality, with the recently introduced entropy accumulation
theorem (EAT)~\cite{DFR,DF}, on which our work is based, providing
asymptotically optimal rates~\cite{ARV,arnon2018practical}. A new
proof technique that is also asymptotically optimal has recently
appeared~\cite{KZF}.

In~\cite{ARV} the EAT was
applied to the task of randomness expansion and a general entropy
accumulation procedure was detailed. The security of the resulting
randomness expansion protocol relies on the construction of a
randomness bounding function (known as a min-tradeoff function) that
characterizes the average entropy gain during the protocol. Unfortunately,
the analysis in~\cite{ARV} applies only to protocols based on the CHSH
inequality, and relies on some analytic steps that do not directly
generalize to arbitrary protocols\footnote{In particular,
	simplifications that arise due to the two party, two input, two
	output scenario being reducible to qubits.}. However, as was also noted in \cite{ARV}, one could look to use the device-independent guessing probability (DIGP)~\cite{NPS14,
	BSS14,KRS} in
conjunction with the semidefinite hierarchy~\cite{NPA,NPA2} to obtain
computational constructions of the required min-tradeoff functions. 

Here we detail a precise method for combining these semidefinite
programming tools with the EAT to construct min-tradeoff functions. We
then apply this construction to the task of randomness expansion to
prove security of protocols based upon arbitrary nonlocal games. This
includes protocols with arbitrary (but finite) numbers of
inputs-outputs, as well as protocols based upon multiple
Bell-inequalities~\cite{NBS16}. It is worth noting that this
construction could also be readily extended to multipartite scenarios
although we do not discuss these in this work.  Moreover, as this
computational method takes the form of a semidefinite program these
constructions are both computationally efficient and reliable,
although at the cost of producing potentially suboptimal bounds. To
accompany this work, we provide a code package (available
at~\cite{dirng-github}) for the construction and analysis of these
randomness expansion protocols.


In more detail, we give a template protocol, Protocol~\protoName, from
which a user can develop their randomness expansion protocol. Given
certain parameters chosen by a user, e.g., time constraints, choice of
non-locality tests and security tolerances, the projected randomness
expansion rates to be calculated. If these rates are unsatisfactory,
then modifications to the protocol's design can be made and the
rates recalculated. As the computations can be done with a computationally efficient procedure, the user can optimize
their protocol parameters to best fit their experimental
setup. Once a choice of experimental design has been made, the
resulting randomness expansion procedure can be performed. Subject to
the protocol not aborting, this gives a certifiably private random
bit-string.

We apply our technique to several example protocols. In particular, we
look at randomness expansion using the complete empirical distribution
as well as a simple extension of the CHSH protocol, showing
noise-tolerant rates of up to two bits per entangled qubit pair,
secure against quantum adversaries. Although means of generating two
bits of randomness per entangled qubit pair have been considered
before~\cite{MP13} to the best of our knowledge our work is the first
to present a full protocol and prove that this rate can be robustly
achieved taking into account finite statistics.  The nonlocal game we
use for this is related to that in~\cite{MP13}. We also compare the
achievable rates for these protocols to the protocol presented
in~\cite{ARV} which is based upon a direct von Neumann entropy bound.
Our comparison demonstrates that some of the protocols from the
framework are capable of achieving higher rates than the protocol
of~\cite{ARV}, in both the low and high noise regimes. Improved rates
in the high noise regime are of particular importance when considering
current experimental implementations, because of the difficulty of
significantly violating the CHSH inequality while closing the
detection loophole~\cite{Giustina,Shalm,Hensen}. Additionally, we include
in the appendices a full non-asymptotic account of input randomness
necessary for running the protocols.

The paper is structured as follows: in \sec\ref{sec:prelim} we
introduce the material relevant for our construction. In
\sec\ref{sec:adaptive-framework} we detail the various components of
our framework and present the template protocol with full security statements and proofs. We provide examples of several randomness expansion protocols built within our framework in \sec\ref{sec:examples} before concluding with some open problems in \sec\ref{sec:conclusion}.

\section{Preliminaries}\label{sec:prelim}
\subsection{General notation}\label{sec:notation}

Throughout this work, the calligraphic symbols $\A$, $\B$, $\X$ and
$\Y$ denote finite alphabets and we use the notational shorthand
$\A\B$ to denote the Cartesian product alphabet $\A\times\B$. We refer
to a \emph{behaviour} (or \emph{strategy}) on these alphabets as some
conditional probability distribution, $(p(a,b|x,y))_{ab|xy}$ with
$abxy \in \ABXY$, which we view as a vector $\p\in\R^{|\ABXY|}$. That is, by denoting the set of canonical bases vectors of $\R^{|\ABXY|}$ by $\{\e_{ab|xy}\}_{abxy}$, we write $\p = \sum_{abxy} p(a,b|x,y) \e_{ab|xy}$. We make the distinction between the vector and its elements through the use of boldface, i.e., $p(a,b|x,y) = \bm p \cdot \e_{ab|xy}$. Throughout this work we assume that all
conditional distributions obey the no-signalling constraints that
$\sum_{a\in \A} p(a,b|x,y)$ is independent of $x$ and hence can be written $p(b|y)$ and similarly
$\sum_{b\in\B} p(a,b|x,y) = p(a|x)$. We denote the set of all no-signalling
behaviours by $\P_{\A\B |\X\Y} \subset \R^{|\ABXY|}$. Given an
alphabet $\C$ we denote the set of all distributions over $\C$ by
$\P_\C$, and given a
sequence $\CC = (c_i)_{i=1}^n$, with $c_i \in \C$ for each
$i=1,\dots,n$, we denote the frequency distribution induced by $\CC$ by
\begin{equation}
\freqc(x) = \frac{\sum_{i=1}^n \delta_{x c_i}}{n},
\end{equation}
where $\delta_{ab}$ is the Kronecker delta on the set $\C$.

We use the symbol $\H$ to denote a Hilbert space, subscripting with
system labels when helpful. For a system $E$, we denote the set of
positive semidefinite operators with unit trace acting on $\H_E$ by
$\S(E)$ and its subnormalized extension (i.e., the set that arises
when the trace is restricted to be in the interval $[0,1]$) by
$\tilde\S(E)$ (we extend the use of tildes to other sets to denote
their subnormalized extensions). We refer to a state
$\rho_{XE}\in\S(XE)$ as a \emph{classical-quantum state} (cq-state) on
the joint system $XE$ if it can be written in the form
$\rho_{XE}=\sum_x p(x)\ketbra{x}\otimes\rho_E^x$ where $\{\ket{x}\}_x$
is a set of orthonormal vectors in $\H_X$. Letting $\Omega\subseteq\X$
be an event on the alphabet $\X$, we define the \emph{conditional
  state} (conditioned on the event $\Omega$) by
\begin{equation}
\rho_{XE|{\Omega}}=\frac{1}{\pr{\Omega}}\sum_{x\in\Omega}p(x)\ketbra{x}\ot\rho_E^x.
\end{equation} We denote the identity operator of a system $E$ by
$\identity_E$. We write the natural logarithm as $\ln(\cdot)$ and the
logarithm base $2$ as $\log(\cdot)$. The function
$\sgn:\R\rightarrow\{-1,0,1\}$ is the sign function, mapping all
positive numbers to $1$, negative numbers to $-1$ and $0$ to $0$.

We say that a behaviour $\p\in\PABGXY$ is \emph{quantum} if its
elements can be written in the form
$p(a,b|x,y)=\tr{\rho_{AB}(N_{a|x}\ot M_{b|y})}$ where
$\rho_{AB}\in\S(\H_A\ot\H_B)$ and $\{\{N_{a|x}\}_{a\in\A}\}_{x\in\X}$,
$\{\{M_{b|y}\}_{b\in\B}\}_{y\in\Y}$ are sets of POVMs; we denote the
set of all quantum behaviours by $\Q$. Additionally, we use $\Qsub$ to
denote the subnormalized extension of this set.

Note that randomness expansion is a single-party protocol; there is
one user who wishes to expand an initial private random
string. However, that user may work with a bipartite setup in which
they use two devices that are prevented from signalling to one
another; in such a case we sometimes refer to Alice and Bob as the
users of each device.  Note though that, unlike in QKD, Alice and Bob
are agents of the same party and are within the same laboratory.
There may also be a dishonest party, Eve, trying to gain information
about the random outputs.

\subsection{Entropies and SDPs}\label{sec:entropies-and-sdps}
The von Neumann entropy of $\rho\in\S(A)$ is 
\begin{equation}
H(A)_{\rho}:=-\tr{\rho\log(\rho)}.
\end{equation}
For a bipartite state $\rho_{AE}\in\S(AE)$, we use the notation
$\rho_{E}$ for $\ptr{A}{\rho_{AE}}$ and define the conditional von Neumann entropy of system $A$ given system $E$ when the joint system is in state $\rho_{AE}$ by 
\begin{equation}
H(A|E)_{\rho}:=H(AE)_{\rho}-H(E)_{\rho}\,.
\end{equation}
In addition, for a tripartite system $\rho_{ABE}\in\S(ABE)$, the
conditional mutual information between $A$ and $B$ given $E$ is
defined by
$$I(A:B|E)_{\rho}=H(A|BE)_{\rho}-H(A|E)_{\rho}\,.$$
We drop the state subscript whenever the state is clear from the
context.

In this work it is useful to consider the conditional
min-entropy~\cite{Renner} in its operational
formulation~\cite{KRS}. Given a cq-state $\rho_{XE}=\sum_x p(x)\ketbra{x}\otimes \rho_E^x$, the maximum probability with which an agent holding system $E$ can guess the outcome of a measurement on $X$ is
\begin{equation}\label{eq:g-prob}
\pg(X|E):=\max_{\{M_x\}_x}\sum_x p(x)\tr{M_x\rho_E^x},
\end{equation}
where the maximum is taken over all POVMs $\{M_x\}_x$ on system $E$. Using this we can define the min-entropy of a classical system given quantum side information as 
\begin{equation}\label{eq:hmin}
\hmin(X|E):=-\log\left(\pg(X|E)\right). 
\end{equation}
The final entropic quantity we consider is the $\epsilon$\emph{-smooth min-entropy}~\cite{RenWolf04-smooth_entropies}. Given some $\epsilon\geq 0$ and $\rho_{XE}\in\S(XE)$, the $\epsilon$-smooth min-entropy $H_{\min}^\epsilon$ is defined as the supremum of the min-entropy over all states $\epsilon$-close to $\rho_{XE}$,
\begin{equation} \label{eq:smooth-entropy}
\hmin^\epsilon(X|E)_\rho:=\sup_{\rho'\in B_{\epsilon}(\rho)}\hmin(X|E)_{\rho'},
\end{equation} 
where $B_{\epsilon}(\rho)$ is the $\epsilon$-ball centred at $\rho$ defined with respect to the purified trace distance~\cite{TCR2}. For a thorough overview of smooth entropies and their properties we refer the reader to~\cite{Tom15-smooth_entropies_book}.

In the device-independent scenario we do not know the quantum states
or measurements being performed. Instead, our entire knowledge about these must be inferred
from the observed input-output behaviour of the devices
used. In particular, observing correlations that violate a Bell
inequality provides a coarse-grained characterization of the
underlying system. In a device-independent protocol, the idea is to
use only this to infer bounds on particular system quantities, e.g.,
the randomness present in the outputs. As formulated above, the
guessing probability~\eqref{eq:g-prob} is not a device-independent
quantity because its computation requires knowing $\rho^x_E$. However,
the guessing probability can be reformulated in a device-independent
way~\cite{HR,NPS14,BSS14,NBS16} as we now explain.

Consider a tripartite system $\rho_{ABE}$ shared between two devices
in the user's lab and Eve.  Because we are assuming an adversary
limited by quantum theory, we can suppose that, upon receiving some inputs $(x,y) \in \X\Y$, the devices
work by performing measurements $\{E_{a|x}\}_a$ and $\{F_{b|y}\}_b$
respectively, which give rise to some probability distribution $\p \in \Q_{\A\B|xy}$,
and overall state
$$\sigma^{x,y}_{ABE}=\sum_{ab}\ketbra{a}\ot\ketbra{b}\ot\tilde{\rho}_E^{abxy}\,,$$
where
$\ptr{AB}{(E_{a|x} \ot F_{b|y}\ot\identity_E)\rho_{ABE}}=\tilde{\rho}_E^{abxy}$,
and $p(a,b|x,y)=\tr{\tilde{\rho}_E^{abxy}}$.
Note that the user of the protocol is not aware of what the devices
are doing.

Consider the best strategy for Eve to guess the value of $AB$ using her system
$E$.  She can perform a measurement on her system to try to
distinguish $\{\rho_E^{\out\inp}\}_{\out}$ (occurring with probability
$p(a,b|x,y)$).
Denoting Eve's POVM $\{M_c\}_c$
with outcomes in one-to-one correspondence with the values $AB$ can take
(say $c_{\outc}$
being the value corresponding to a best guess of
$AB = (a,b)$)\footnote{Without
  loss of generality we can assume Eve's measurement has as many
  outcomes as what she is trying to guess.}, then given some values of
$a,b,x$ and $y$,
Eve's outcomes are distributed as
$p(c_{a'b'}|a,b,x,y)=\tr{M_{c_{a'b'}}\rho_E^{\out\inp}}$,
and her probability of guessing
correctly is
$p(c_{ab}|a,b,x,y)=\tr{M_{c_{ab}}\rho_E^{\out\inp}}$.
Hence, the overall probability of guessing $AB$
correctly given $E$
and $XY=(\inpc)$
for the quantum realisation of the statistics, $q=\{\rho_{ABE},\{E_{a|x}\},\{F_{b|y}\}\}$,
is
\begin{align*}
\pg(\Out|\inpc,E,q) &= \sup_{\{M_c\}_c}\sum_{\out} \tr{(E_{a|x}\ot F_{b|y}\ot M_{c_{ab}} )\rho_{ABE}} \\
&= \sup_{\{M_c\}_c} \sum_{\out} p(a,b,c_{ab}|x,y,q)\\
&=\sup_{\{M_c\}_c}\sum_{\out}p(c_{\outc}|a,b,x,y,q)p(a,b|x,y,q)\,.
\end{align*}
Note that the guessing probability depends on the inputs $\inpc$.  In the protocols
we consider later, there will only be one pair of inputs for which Eve is
interested in guessing the outputs.  We denote these inputs by $\x$ and $\y$.

In the device-independent scenario, Eve can also optimize over all
quantum states and measurements that could be used by the devices.
However, she wants to do so while restricting the devices to obey
certain relations which depend on the protocol (for example, the CHSH
violation that could be observed by the user).  For the moment,
without specifying these relations precisely, call the set of quantum
states and measurements obeying these relations $\cR$. Hence, we seek
$$\pg(\Out|\x,\y,E)=\sup_{q\in\cR,\{M_c\}_c}\sum_{\out}p(a,b|\x,\y,q)p(c_{\outc}|a,b,\x,\y,q)\,.$$

Because Eve's measurement commutes with those of the devices, due to
no signalling we can use Bayes' rule to rewrite the optimization
as\footnote{This rewriting makes sense provided no information leaks
  to Eve during the protocol, which is reasonable for randomness expansion
  since it takes place in one secure lab.}
\begin{eqnarray*}
\sup_{q\in\cR,\{M_c\}_c}\sum_{\out}p(c_{\outc}|\x,\y,q)p(a,b|c_{ab},\x,\y,q)\,.
\end{eqnarray*}
With this rewriting it is evident that we can think about Eve's
strategy as follows: Eve randomly chooses a value of $C$ and then
prepares the device according to that choice, i.e., trying to bias
$\Outc$ towards the values $a, b$ corresponding to the chosen $c$.

We can hence write
$$\pg(\Out|\x,\y,E)=\sup_{\{\p_c\}_c}\sum_{\outc}\pr{C=c_{ab}}p_{c_{ab}}(a,b|\x,\y,q)\,,$$
where $\sum_cp(c)\p_{c}$ satisfies some relations (equivalent to
the restriction to the set $\cR$) and $\p_{c}\in\Q_{\A\B|\X\Y}$ for each $c$.
Provided the relations satisfied are linear, which we will henceforth
assume, they can be expressed as a matrix equation $\bm{W}\p = \w$ and the whole optimization is a conic program (the set of un-normalized
quantum-realisable distributions forms a convex cone). By writing
$\pr{C=c} \p_{c}$ as the subnormalized distribution
$\tilde{\p}_{c}$ the problem can be expressed as
\begin{equation}\label{prog:digp}
  \begin{aligned}
    \sup_{\{\tilde{\p}_{c}\}_c} &&&\sum_{ab}\tilde{p}_{c_{ab}}(a,b|\x,\y)\\
\text{subj.\
  to}&&&{\sum_c\bm{W}\tilde \p_{c}}=\w\\
&&&\tilde{\p}_{c}\in\Qsub_{\A\B|\X\Y}\quad\forall\ c\,.
\end{aligned}
\end{equation}
Note that the normalisation condition, $\sum_{abc}\tilde{p}_c(a,b|\x,\y)=1$, is assumed to be contained within (or a consequence of) the conditions imposed by $\bm{W}$. For the particular sets of conditions that we impose later, normalization always follows.

Optimizing over the set of quantum correlations is a difficult
problem, in part because the dimension of the quantum system achieving
the optimum could be arbitrarily large.  Because of this, we consider
a computationally tractable relaxation of the problem, by instead
optimizing over distributions within some level of the
semidefinite hierarchy~\cite{NPA,NPA2}.  We denote the $k^{\text{th}}$ level
by $\Qsub^{(k)}$. This relaxation of the problem
takes the form of a semidefinite program that can be solved in an
efficient manner, at the expense of possibly not obtaining the same
optimum value. The corresponding relaxed program is
\begin{equation}\label{prog:relaxed-primal}
\begin{aligned}
\pk(\w):=&\sup_{\{\tilde{\p}_{c}\}_c} &&\sum_{ab}\tilde{p}_{ c_{ab}}(a,b|\x,\y)\\
&\text{subj.\
  to}&&{\sum_c\bm{W}\tilde \p_{c}}=\w\\
&&&\tilde{\p}_{c}\in\Qsub^{(k)}\quad\forall\ c\,.
\end{aligned}
\end{equation}

This program has a dual. In Appendix~\ref{app:cones} we show that
there is an alternative program with the same properties\footnote{In
  particular, the weak duality statement holds.} as the standard dual.
To specify this, we define the set $\cV^{(k)}$ of \emph{valid constraint
  vectors at level $k$} by the set of vectors $\v$ for which there exists
${\p}\in\Q^{(k)}$ such that $\bm{W}{\p}=\v$.

The alternative dual then takes the form
\begin{equation}\label{prog:relaxed-dual}
\begin{aligned}
\dk(\w):=&\inf_{\l} &&\l\cdot\w\\
&\text{subj.\
  to}&
&\pk(\v)\leq \l\cdot\v,\ \ \forall\ \v\in\cV^{(k)},
\end{aligned}
\end{equation}
with $\l\in\R^{\|\w\|_0}$. Since the NPA hierarchy forms a sequence of outer
approximations to the set of quantum correlations,
$\Q_1\supseteq\Q_2\supseteq\dots\supseteq\Q$, the relaxed
guessing probability provides an upper bound on the true guessing
probability, i.e., $\pg(\w)\leq \pk(\w)$. Combined
with~\eqref{eq:hmin}, one can use the relaxed programs to compute
valid device-independent lower bounds on $\hmin$.

Programs~\eqref{prog:relaxed-primal} and~\eqref{prog:relaxed-dual} are
parameterized by a vector $\w$. We denote a feasible point of the dual
program parameterized by $\w$ by $\lw$. Note that for our later
analysis we only need $\lw$ to be a feasible point of the dual
program, we do not require it to be optimal.\footnote{An optimal choice of $\l$ for \eqref{prog:relaxed-dual} may not even exist.}

\subsection{Devices and nonlocal games}\label{sec:devices-and-games}
Device-independent protocols involve a series
of interactions with some \emph{untrusted devices}. A \emph{device}
$\D$ refers to some physical system that receives classical inputs
and produces classical outputs. Furthermore, we say that $\D$ is
\emph{untrusted} if the mechanism by which $\D$ produces the outputs
from the inputs need not be characterized. During the protocol, the user interacts with their untrusted devices within the following scenario:\footnote{One does not have to recreate this scenario exactly in order to perform the protocol. Instead, the given scenario establishes one situation in which the protocol remains secure (see \defn\ref{def:security} for a precise definition of security).}
\begin{enumerate}
\item The protocol is performed within a secure lab from which
  information can be prevented from leaking.
\item This lab can be partitioned into disconnected sites
  (one controlled by Alice and one by Bob).
\item The user can send information freely between these sites without
  being overheard, while at the same time, they can prevent
  unwanted information transfer between the sites.\footnote{In this
    work we need to ensure that the user's devices are unable
    to communicate at certain points of the protocol (when Bell tests
    are being done), but not at others (e.g., when entanglement is
    being distributed).  However, they should never be allowed to
    send any information outside the lab after the protocol begins.}
\item The user has two devices to which they can provide inputs (taken
  from alphabets $\X$ and $\Y$) and receive outputs (from alphabets
  $\A$ and $\B$).
\item These devices operate according to quantum theory, i.e.,
  $\p_{AB|XY}\in\QABGXY$.  Any eavesdropper is also limited by
  quantum theory\footnote{In parts of this paper we allow the
    eavesdropper limited additional power---the bounds will then still
    apply if the eavesdropper is limited by quantum theory.}.  We use
  $\DD_{ABE}$ to denote the collection of devices (including any held
  by an eavesdropper) and refer to this as an \emph{untrusted device
    network}.
\item The user has an initial source of private random numbers and a
  trusted device for classical information processing.
\end{enumerate}

One of the key advantages of a device-independent protocol is that
because no assumptions are made on the inner workings of the devices
used, the protocol checks that the devices are working sufficiently
well on-the-fly. The protocols hence remain impervious to many
side-channel attacks, malfunctioning devices or prior tampering. The idea behind their security is that by testing that
the devices exhibit `nonlocal' correlations, their internal workings
are sufficiently restricted to enable the task at hand.

In this work, we formulate the testing of the devices through
\emph{nonlocal} games.  A nonlocal game is initiated by a referee who
sends the two players their own question chosen according to some
distribution, $\mu$. The players then respond with their answers
chosen from $\A$ and $\B$ respectively. Using the predefined scoring
rule $V$, the referee then announces whether or not they won the
game. The game is referred to as \emph{nonlocal} because prior to
receiving their questions, the players are separated and unable to
communicate until they have given their answers. The question sets,
answer sets, distribution $\mu$ and the scoring rule $V$ are all
public knowledge. Moreover, the players are allowed to confer prior to
the start of the game.
	 \begin{definition}\label{def:nonlocalgame}
	 	Let $\A, \B, \X, \Y$ and $\mathcal{V}$ be finite sets. A (two-player) \textit{nonlocal game} $\G=(\mu,V)$  (on $\ABXY$) consists of a set of question
	 	pairs $(x,y) \in \X\Y$ chosen according to some probability
	 	distribution $\mu:\X\Y \rightarrow [0,1]$, a set of answer pairs
	 	$(a,b) \in \A\B$ and a scoring function
	 	$V:\A\B\X\Y\rightarrow \mathcal{V}$. A \emph{strategy} for $\G$ is a conditional distribution
	 	$\p\in \PABGXY$ defined on the question and answer sets.
	 \end{definition}
 \begin{remark}
 	We will abuse notation and use the symbol $\G$ to refer to both the nonlocal game and the set of possible scores. I.e., we may refer to the players receiving a score $c \in \G$. Furthermore, we denote the number of different scores by $|\G|$.
 \end{remark}
 
  If the players play $\G$ using the strategy $\p$, then this induces a frequency distribution $\w_{\G}$ over the set of possible scores. That is, 
  \begin{equation}\label{eq:expected-freq-dist}
  \omega_{\G}(c) = \sum_{abxy} \mu(x,y)p(a,b|x,y)\,\delta_{V(a,b,x,y), c}
  \end{equation}
  for each $c \in \G$. The expected frequency distribution, $\w_{\G}$, will be the figure of merit by which we evaluate the performance of our untrusted devices. We denote the set of possible frequency distributions achievable by the agents whilst playing according to quantum strategies by $\Q_\G$.

\begin{example}[Extended CHSH game $(\G_{\chsh})$]\label{ex:chsh}
	The \emph{extended CHSH game} has appeared already in the device-independent literature in the context of QKD (see, e.g.,~\cite{ABGMPS}). It extends the standard CHSH game to include a correlation check between one of Alice's CHSH inputs and an additional input from Bob. It is defined by the question-answer sets $\X=\{0,1\}$, $\Y=\{0,1,2\}$ and $\A=\B=\{0,1\}$, the scoring set $\mathcal{V} = \{c_{\chsh}, c_{\cross},0\}$ and the scoring rule
	\begin{align}
	V_{\chsh}(a,b,x,y) &:=
	\begin{cases}
	c_{\chsh} \qquad \qquad &\text{if } x\cdot y = a\oplus b \text{ and } y\neq2 \\
	c_{\cross} \qquad \qquad &\text{if } (x,y) = (0,2) \text{ and } a \oplus b = 0 \\ 
	0 \qquad \qquad &\text{otherwise}.
	\end{cases}
	\end{align}
	The input distribution we consider is defined by $\mu_{\chsh}(x,y)=\frac18$ for
	$(x,y) \in\{0,1\}^2$, $\mu_{\chsh}(0,2)=\frac12$ and $\mu_{\chsh}(x,y)=0$
	otherwise. This game is equivalent to choosing to play either the $\chsh$ game or
	the game corresponding to checking the alignment of the inputs $(0,2)$ uniformly at random and then proceeding with the chosen
	game. The frequency distribution then tells us the relative frequencies with which we win each game. The motivation behind $\G_{\chsh}$ can be understood by considering a schematic of an ideal implementation on a bipartite qubit system as given in \fig\ref{fig:EkertDiag}. If we observe the maximum winning probability for the CHSH game, as well as perfect alignment for the inputs $(0,2)$, then the inputs $(\x,\y) = (1,2)$ should produce two perfectly uniform bits. 
	
	\begin{figure}[t!]
		\begin{center}
			\begin{tikzpicture}[scale=1]
			\begin{scope}[shift={(0.5,0)}]%
			\draw[thick, rounded corners = 2, fill = lightgray] (-1,1,1) rectangle (2,5.5,1);
			\draw[thick, rounded corners = 2, fill = lightgray] (-1,5.5,1) -- (-1,5.5,0) -- (2,5.5,0) -- (2,5.5,1) -- cycle;
			\draw[thick, rounded corners = 2, fill = lightgray] (2,5.5,1) -- (2,5.5,0) -- (2,1,0) -- (2,1,1) -- cycle;
			
			\draw[thick, rounded corners = 2, fill = lightgray] (9.75,1,1) rectangle (12.75,5.5,1);
			\draw[thick, rounded corners = 2, fill = lightgray] (9.75,5.5,1) -- (9.75,5.5,0) -- (12.75,5.5,0) -- (12.75,5.5,1) -- cycle;
			\draw[thick, rounded corners = 2, fill = lightgray] (12.75,5.5,1) -- (12.75,5.5,0) -- (12.75,1,0) -- (12.75,1,1) -- cycle;
			
			\draw (0.25,5.3) node {\tbf{\textit{Alice's device}}};
			\draw (11,5.3) node {\tbf{\textit{Bob's device}}};
			\end{scope}
			\def\x{0};
			\def\y{4};
			\draw[fill=gray!25] (\x,\y) circle (.5cm);
			\draw[dotted] (\x-0.5,\y) -- (\x+.5,\y);
			\draw[dotted] (\x,\y-.5) -- (\x,\y+.5);
			\draw (\x+1.2,\y) node {$X=0$};
			\draw (\x,\y-.7) node {$\sigma_z$};
			\draw [<->,thick] (\x,\y+.5) -- (\x,\y-.5);

			\def\x{0};
			\def\y{2};
			\draw[fill=gray!25] (\x,\y) circle (.5cm);
			\draw[dotted] (\x-0.5,\y) -- (\x+.5,\y);
			\draw[dotted] (\x,\y-.5) -- (\x,\y+.5);
			\draw (\x+1.2,\y) node {$X=1$};
			\draw (\x,\y-.7) node {$\sigma_x$};
			\draw [<->,thick] (\x+.5,\y) -- (\x-.5,\y);
			
			\def\x{12};
			\def\y{4.5};
			\draw[fill=gray!25] (\x,\y) circle (.5cm);
			\draw[dotted] (\x-0.5,\y) -- (\x+.5,\y);
			\draw[dotted] (\x,\y-.5) -- (\x,\y+.5);
			\draw (\x-1.2,\y) node {$Y=0$};
			\draw (\x,\y-.7) node {$\sigma_{\pi/4}$};
			\draw [<->,thick] (\x+0.35,\y+0.35) -- (\x-0.35,\y-0.35);
			
			\def\x{12};
			\def\y{3};
			\draw[fill=gray!25] (\x,\y) circle (.5cm);
			\draw[dotted] (\x-0.5,\y) -- (\x+.5,\y);
			\draw[dotted] (\x,\y-.5) -- (\x,\y+.5);
			\draw (\x-1.2,\y) node {$Y=1$};
			\draw (\x,\y-.7) node {$\sigma_{-\pi/4}$};
			\draw [<->,thick] (\x-0.35,\y+0.35) -- (\x+0.35,\y-0.35);

			\def\x{12};
			\def\y{1.5};
			\draw[fill=gray!25] (\x,\y) circle (.5cm);
			\draw[dotted] (\x-0.5,\y) -- (\x+.5,\y);
			\draw[dotted] (\x,\y-.5) -- (\x,\y+.5);
			\draw (\x-1.2,\y) node {$Y=2$};
			\draw (\x,\y-.7) node {$\sigma_{z}$};
			\draw [<->,thick] (\x,\y+.5) -- (\x,\y-.5);

			\draw (6,-0.5) node (state) {$\ket{\psi}_{AB} = \frac{1}{\sqrt{2}} (\ket{00} +\ket{11})$};
			\draw[->, line width=0.4mm, dotted] (state) .. controls (1, -.5) and (0,-.5) ..  (0.5,0.5);
			\draw[->, line width=0.4mm, dotted] (state) .. controls (10, -.5) and (12,-.5) ..  (11.5,0.5);
			
			\node[scale = 0.75] (table) at (6.2,4.5) {%
				\begin{tabular}{c c|c|c|c|c|c|c|}
				\multicolumn{2}{l|}{\parbox[t][4mm]{4mm}{\multirow{2}{*}{\rotatebox[origin=l]{45}{$P_{AB|XY}$\,}}}} & \multicolumn{2}{c|}{$Y=0$} & \multicolumn{2}{c|}{$Y=1$} & \multicolumn{2}{c|}{$Y=2$} \tabularnewline
				\hhline{~~------}
				&& $B=0$ & $B=1$ & $B=0$ & $B=1$ & $B=0$ & $B=1$ \tabularnewline \hline
				\multicolumn{1}{c|}{\bigstrut[1] \multirow{2}{*}{\rotatebox[origin=c]{90}{$X=0$\,\,\,\,\,}}} &\parbox[c][12mm]{1mm}{\rotatebox[origin=b]{90}{$A = 0$}}&
				\cellcolor{green!10}{\large$\frac{2 + \sqrt{2}}{8}$ }& {\large$\frac{2 - \sqrt{2}}{8}$ }& \cellcolor{green!10}{\large$\frac{2 + \sqrt{2}}{8}$}& {\large$\frac{2 - \sqrt{2}}{8}$} & \cellcolor{blue!10}{\large$\frac{1}{2}$ }&{\large $0$}\tabularnewline
				\hhline{~-------}
				\multicolumn{1}{c|}{~} & \parbox[c][12mm]{1mm}{\rotatebox[origin=c]{90}{$A = 1$}}
				&{\large $\frac{2 - \sqrt{2}}{8}$} & {\cellcolor{green!10}}{\large$\frac{2 + \sqrt{2}}{8}$} & {\large$\frac{2 - \sqrt{2}}{8}$} & \cellcolor{green!10}{\large$\frac{2 + \sqrt{2}}{8}$} & {\large$0$} & \cellcolor{blue!10}{\large$\frac{1}{2}$}\tabularnewline
				\hline
				\multicolumn{1}{c|}{\multirow{2}{*}{\rotatebox[origin=c]{90}{$X=1$\,\,\,\,\,}}} &\parbox[c][12mm]{1mm}{\rotatebox[origin=b]{90}{$A = 0$}}
				& \cellcolor{green!10}{\large$\frac{2 + \sqrt{2}}{8}$} & {\large$\frac{2 - \sqrt{2}}{8}$} & {\large$\frac{2 - \sqrt{2}}{8}$ }& \cellcolor{green!10}{\large$\frac{2 + \sqrt{2}}{8}$} & \cellcolor{red!10}{\large$\frac{1}{4}$} & \cellcolor{red!10}{\large$\frac{1}{4}$}\tabularnewline
				\hhline{~-------}
				\multicolumn{1}{c|}{~} & \parbox[c][12mm]{1mm}{\rotatebox[origin=c]{90}{$A = 1$}} 
				& {\large$\frac{2 - \sqrt{2}}{8}$} & \cellcolor{green!10}{\large$\frac{2 + \sqrt{2}}{8}$ }& \cellcolor{green!10}{\large$\frac{2 + \sqrt{2}}{8}$} & {\large$\frac{2 - \sqrt{2}}{8}$} & \cellcolor{red!10}{\large$\frac{1}{4}$ }& \cellcolor{red!10}{\large$\frac{1}{4}$}\tabularnewline
				\hline 
				\end{tabular}
			};
			\node(CHSH) [below left =0.cm and -2.1cm of table,align=left] {CHSH};
			\node(chshbox)[draw, left=.1cm of CHSH,  fill=green!10] {};
			\node(align) [below right =0.cm and -1.31cm of CHSH, align=left] {Alignment};
			\node[draw, left=.12cm of align,  fill=blue!10] {};
			\node(gen) [below right =-0.cm and -2.1cm of table] {Generation};
			\node[draw, left=.12cm of gen,  fill=red!10] {};
			\end{tikzpicture} 
		\end{center}
		
		\caption[Qubit implementation of extended CHSH game]{A measurement schematic for a qubit implementation of $\G_{\chsh}$. Measurements are depicted in the $x$-$z$ plane of the
			Bloch-sphere with
			$\sigma_{\varphi} = \cos(\varphi) \sigma_z + \sin(\varphi) \sigma_x$
			for $\varphi \in (-\pi,\pi]$. Using the maximally entangled state
			$\ket{\psi}_{AB}=(\ket{00}+\ket{11})/\sqrt{2}$ with the measurements
			depicted, one has a frequency distribution of
			$\w_{\G} = \tfrac12~\left(\tfrac12 + \tfrac{\sqrt{2}}{4}, 1, \tfrac12 - \tfrac{\sqrt{2}}{4}\right)$, where the scores are ordered $(c_{\chsh}, c_{\cross}, 0)$. The setup achieves Tsirelson's bound for the CHSH game as well as
			perfect correlations for the $X=0$ and $Y=2$ inputs. In addition,
			self-testing results~\cite{PopescuRohrlich} give a converse result:
			these scores completely characterize the devices up to local
			isometries. This implies that the state used by the devices is
			uncorrelated with an adversary and that the measurement pair
			$(X,Y)=(1,2)$ yields uniformly random results, certifying the
			presence of $2$ bits of private randomness in the outputs.}
		\label{fig:EkertDiag}
	\end{figure}
\end{example}

\subsection{Device-independent randomness expansion protocols and their security}
\label{sec:re-protocols}
A device-independent randomness expansion protocol is a procedure by
which one attempts to use a uniform, trusted seed, $\zseed$, to
produce a longer uniform bit-string, $\zout$, through repeated
interactions with some untrusted devices. We consider so-called
\textit{spot-checking} protocols, which involve two round types:
test-rounds, during which one attempts to produce certificates of
nonlocality, and generation rounds in which a fixed input is given to
the devices and the outputs are recorded. By choosing the rounds
randomly according to a distribution heavily favouring generation
rounds, we are able to reduce the size of the seed whilst sufficiently
constraining the device's behaviour, guaranteeing the presence of
randomness in the outcomes (except with some small probability).

Using the setup described in \sec\ref{sec:devices-and-games}, our template randomness expansion protocol consists of two main steps.
\begin{enumerate}
\item \textbf{Accumulation}: In this phase the user repeatedly
  interacts with the separated devices. Each interaction is randomly
  chosen to be a generation round or a test round in a coordinated way
  using the initial random seed.\footnote{For example, a central source of randomness could be used
  	to choose the round type. This information could then be communicated to each party (in such a way that
  	the devices do not learn this choice).} On generation rounds the devices are
  provided with some fixed inputs $(\x,\y)\in\X\Y$, whereas during test rounds,
  the testing procedure specific to the protocol is followed. After many
  interactions, the recorded outputs are concatenated to give
  $\AA\BB$. Using the statistics collected during test rounds, a
  decision is made about whether to abort or not based on how close
  the observations are to some pre-defined expected device behaviour.
\item \textbf{Extraction}: Subject to the protocol not aborting in the
  accumulation step, a quantum-proof randomness extractor is applied
  to $\zraw$. This maps the partially random $\zraw$ to a shorter
  string $\zout$ that is the output of the protocol.
\end{enumerate}  

We define security of a randomness expansion protocol according to a
composable
definition~\cite{Canetti,Can01,PW,Ben-OrMayers,PR2014}. Using
composable security ensures that the output randomness can be used in
any other application with only an arbitrarily small probability of it
being distinguishable from perfect randomness.  To make this more
precise, consider a hypothetical device that outputs a string $\zout$ that
is uniform and uncorrelated with any information held by an
eavesdropper. In other words, it outputs $\tau_m\ot\rho_E$, where
$\tau_m$ is the maximally mixed state on $m$ qubits.  The ideal
protocol is defined as the protocol that involves first doing the real
protocol, then, in the case of no abort, replacing the output with a
string of the same length taken from the hypothetical device. The
protocol is then said to be $\epsound$-secure ($\epsound$ is called
the \emph{soundness error}) if, when the user either implements the
real or ideal protocol with probability $\frac{1}{2}$, the maximum
probability that a distinguisher can guess which is being implemented
is at most $\frac{1+\epsound}{2}$.  If $\epsound$ is small, then the
real and ideal protocols are virtually indistinguishable.  Defining
the ideal as above ensures that the real and ideal protocols can never
be distinguished by whether or not they abort.  We refer
to~\cite{PR2014} for further discussion of composability in a related
context (that of QKD).

There is a second important parameter of any protocol, its
\emph{completeness error}, which is the probability that an ideal
implementation of the protocol leads to an abort. It is important for
a protocol to have a low completeness error in addition to a low soundness error since a protocol that always aborts is vacuously secure.

\begin{definition}\label{def:security}
  Consider a randomness expansion protocol whose output is denoted by
  $Z$.  Let $\Omega$ be the event that the protocol does not
  abort. The protocol is an $(\epsound,\epcomp)$-randomness expansion
  protocol if it satisfies the following two conditions.
\begin{enumerate}
\item \textbf{Soundness}:
\begin{equation}\label{eq:sound_def}
\frac{1}{2}\mathrm{Pr}[\Omega]\cdot\|\rho_{ZE}-\tau_m\ot\rho_E\|_1\leq \epsound,
\end{equation}
where $E$ is an arbitrary quantum register (which could have been
entangled with the devices used at the start of the protocol), $m$
is the length of the output string $Z$ and $\tau_m$ is the maximally mixed state on a system of dimension $2^m$.
\item \textbf{Completeness}: There exists a set of quantum states and
  measurements such that if the protocol is implemented using those
\begin{equation}
\mathrm{Pr}[\Omega]\geq 1-\epcomp.
\end{equation}
\end{enumerate}
\end{definition}

Although we use a composable security definition to ensure that any
randomness output by the protocol can be used in any scenario,
importantly, this may not apply if the devices used in the protocol
are subsequently reused~\cite{bckone}. Thus, after the protocol the
devices should be kept shielded and not reused until such time as the
randomness generated no longer needs to be kept secure.  How best to
resolve this remains an open problem: the Supplemental Material
of~\cite{bckone} presents candidate protocol modifications (and
modifications to the notion of composability) that may circumvent such
problems.

\subsection{Entropy accumulation}
In order to bound the smooth min entropy
$H_{\min}^{\epsilon}(\AA\BB|\XX\YY E)$ accumulated during the protocol
we employ the EAT~\cite{DFR, DF}.  Roughly speaking, the EAT says that
this min-entropy is proportional to the number of rounds, up to square
root correction factors.  The proportionality constant is the
single-round conditional von Neumann entropy optimized over all states
that can give rise to the observed scores. In its full form, the EAT
is an extension of the asymptotic equipartition property~\cite{TCR} to
a particular non-\iid regime. For the purposes of randomness expansion
we only require a special case of the EAT, which we detail later in
this section.

With the goal of maximising our entropic yield, we use the recently
improved statement of the entropy accumulation
theorem~\cite{DF}.\footnote{We discuss this EAT statement and compare
  it to alternatives in \app\ref{app:EAT}.} For completeness we present
the relevant statements including the accumulation procedure (see also~\cite{arnon2018practical}).

\subsubsection{The entropy accumulation procedure}
\label{sec:accumulation-procedure}

The entropy accumulation procedure prescribes how the user interacts
with their untrusted devices and collects data from them. Before
beginning this procedure a nonlocal game $\G=(\mu,V)$ that is compatible with
the alphabets of the devices is selected. 

A \emph{round} within the entropy accumulation procedure consists of
the user giving an input to each of their devices and recording the
outputs. We use subscripts on random variables to indicate the round
that they are associated with, i.e., $X_iY_i$ are the random variables
describing the joint device inputs for the $i^{\text{th}}$ round. In
addition, boldface will be used to indicate that a random variable
represents the concatenation over all $n$ rounds of the protocol,
$\XX = X_1X_2\dots X_n$.

The accumulation procedure consists of $n \in \mathbb{N}$ separate interactions with the untrusted devices. We refer to a single interaction with the devices as a \emph{round}. A round consists of the user selecting and supplying inputs to the devices, receiving outputs and recording this data. During the $i^{\text{th}}$ round, a random variable $T_i \sim \text{Bernoulli}(\gamma)$ is sampled, for some fixed $\gamma \in (0,1)$, indicating whether the round will be a \emph{generation round} or a \emph{test round}. With probability $1-\gamma$ we have $T_i = 0$ and the round is a generation round. During a generation round, the user supplies the respective devices with the fixed generation inputs $(\x,\y) \in \X\Y$, recording $X_iY_i = (\x,\y)$. They record the outputs received from the devices as $A_i$ and $B_i$ respectively and they record the round's score as $C_i = \perp$. With probability $\gamma$, $T_i = 1$ and the round is a test round. During a test round, inputs $X_iY_i$ are sampled according to the distribution specified by the chosen nonlocal game. The sampled inputs are fed to their respective devices and the outputs received are recorded as $A_iB_i$. The score is computed and recorded as $C_i = V(A_i,B_i,X_i,Y_i)$. The \emph{transcript for round $i$} is the tuple $(A_i,B_i,X_i,Y_i,T_i,C_i)$. After $n$ rounds, the complete transcript for the accumulation procedure is $(\AA,\BB,\XX,\YY,\TT,\CC)$. We denote by $\C$ the set of possible values that $C_i$ can take, i.e. $\C = \G \cup \{\perp\}$.

After the $n$-round transcript has been obtained, the user looks to determine the performance of the untrusted devices and, in turn, certify a lower bound on the total entropy produced, $\hmin^{\epsilon}(\AA\BB|\XX\YY\EE)$. To this end, the user computes the \emph{empirical frequency distribution}
\begin{equation}\label{eq:frequency-distribution-scores}
\freqc(c) = \frac{1}{n}\sum_{i=1}^{n} \delta_{c,C_i}.
\end{equation}
Prior to the accumulation step, the user fixes some frequency
distribution $\freqexp$ corresponding to an expected (or hoped for) behaviour. Should the devices behave in an \iid manner according to $\freqexp$, then concentration bounds tell us that the empirical frequency distribution $\freqcbm$ should be close to $\w$. With this in mind, we define the event that the protocol does not abort by
\begin{equation}\label{eq:success-event}
\noabort = \{\CC \mid \gamma (\bm\freqexp(\G) - \d) < \freqcbm(\G) < \gamma (\bm\freqexp(\G) + \d) \},
\end{equation}
where $\d \in (0,1)^{|\G|}$ is a vector of confidence interval widths satisfying $\bm 0 < \d < \bm\freqexp(\G)$ with all vector inequalities being interpreted as element-wise constraints.

\subsubsection{The entropy accumulation theorem}\label{sec:EAT-defns}

To complete the protocol, uniform randomness needs to be extracted
from the partially random outputs.  Doing so requires the user to
assume a lower bound on the smooth min-entropy (conditioned on any side
information held by an adversary) contained in the devices' outputs
when the protocol does not abort.  If $\epsound$ is very small, then
the assumption must be correct with near certainty.  The EAT
provides a method by which one can compute such a lower
bound. Loosely, the EAT states that if the interaction between the
honest parties occurs in a sequential manner (as described in
\sec\ref{sec:accumulation-procedure}), then with high probability the
uncertainty an adversary has about the outputs is \emph{close} to
their total average uncertainty. As a mathematical statement it is a
particular example of the more general phenomenon of
\emph{concentration of measure} (see~\cite{ledoux2005concentration}
for a general overview). In order to state the EAT precisely, we first
require a few definitions.

\begin{definition}[EAT channels]\label{def:eat-channels}
A set of \emph{EAT channels} $\{\N_i\}_{i=1}^n$ is a collection of trace-preserving and completely-positive maps $\N_i:\S(R_{i-1})\rightarrow\S(A_iB_iX_iY_i\Ci R_i)$ such that for every $i \in [n]$:
\begin{enumerate}
\item $A_i,B_i,X_i,Y_i$ and $\Ci$ are finite dimensional classical
  systems, $R_i$ is an arbitrary quantum system and $\Ci$ is the
  output of a deterministic function of the classical registers
  $A_i,B_i,X_i$ and $Y_i$.
\item For any initial state $\rho_{R_0 E}$, the final state $\rho_{\AA\BB\XX\YY\CC E} = \ptr{R_n}{((\N_n \circ \dots \circ \N_1)\otimes \id_{E}) \rho_{R_0 E}}$ fulfils the Markov chain condition $I(A^{i-1}B^{i-1}\!\! :\! X_iY_i | X^{i-1}Y^{i-1}E ) = 0$ for every $i \in [n]$.
\end{enumerate}
\end{definition}

The EAT channels formalise the notion of
interaction within the protocol. The first condition in
\defn\ref{def:eat-channels} specifies the nature of the information
present within the protocol and, in particular, it restricts the honest
parties' inputs to their devices to be classical in nature. The
arbitrary quantum register $R_i$ represents the quantum
state stored by the separate devices after the $i^{\text{th}}$ round. The second condition
specifies the sequential nature of the protocol.  The channels $\N_i$
describe the joint action of both devices and include the generation
of the randomness needed to choose the settings. The Markov chain
condition implies that the inputs to the devices presented by the
honest parties are conditionally independent of the previous outputs
they have received. Note that by using a trusted private seed to
choose the inputs and supplying the inputs sequentially (as is
done in \sec\ref{sec:accumulation-procedure}), this condition will be
satisfied.\footnote{A public seed can also be used if the Markov chain conditions can be shown to hold. However, one may need to be more careful when dealing with such a scenario. For example, if the entangled states distributed to the devices come from some third-party source, then it should be clear that the state used within the $i^{\text{th}}$ round was prepared independently of the seed used to generate the inputs $X_{i+1}^nY_{i+1}^n$. This could be achieved by choosing inputs $X_{i+1}Y_{i+1}$ using a public seed that was generated after the $i^{\text{th}}$ entangled state has been distributed.} Finally, the adversary is permitted to hold a purification,
$E$ of the initial state shared by the devices, and the state
evolves with the sequential interaction through the application of the
sequence of EAT channels.

As explained above, the EAT allows the elevation of \iid analyses to
the non-\iid setting. To do so requires a so-called \emph{min-tradeoff
  function} which, roughly speaking, gives a lower bound on the
single-round von Neumann entropy produced by any devices that, on
expectation, produce some statistics. In the case of the EAT these
distributions are $\{\freqc\}_{\CC\in\Omega}$, i.e., all frequency
distributions induced from score transcripts $\CC$ that do not lead to
an aborted protocol. The EAT asserts that, under sequential
interaction, an adversary's uncertainty about the outputs of the
non-\iid device will (with high probability) be concentrated within
some interval about the uncertainty produced by these \iid devices. In
particular, a lower bound on this uncertainty can be found by
considering the worst-case \iid device.

\begin{definition}[Min-tradeoff functions]\label{def:fmin}
  Let $\{\N_i\}_{i=1}^n$ be a collection of EAT channels and let
  $\mathcal{C}$ denote the common alphabet of the systems
  $C_1,\dots,C_n$. An affine function $f: \P_\C \rightarrow \R$ is a
  \emph{min-tradeoff function} for the EAT channels
  $\{\N_i\}_{i=1}^n$ if for each $i \in [n]$ it satisfies
\begin{equation}\label{eq:fmin-definition}
f(\p) \leq \inf_{\sigma_{R_{i-1}R'}: \N_i(\sigma)_{\Ci}=\tau_{\p}} H(A_iB_i|X_iY_iR')_{\N_i(\sigma)},
\end{equation} 
where $\tau_{\p}:=\sum_{c\in\mathcal{C}}p(c)\ketbra{c}$, $R'$ is a
register isomorphic to $R_{i-1}$ and the infimum over the empty set is
taken to be $+\infty$.
\end{definition}

\begin{remark}\label{rem:protocol-respecting-distribution}
  As the probability of testing during the protocol is fixed, the expected frequency distributions will always take the form
\begin{equation}\label{eq:protocol-respecting-distribution-extended}
\p = \begin{pmatrix}
\gamma \w \\
(1-\gamma)
\end{pmatrix}
\end{equation}
for some $\w \in \QG$, where $p(\perp)$ is the final element of $\p$. Furthermore, the fixed testing probability means that any distribution that results in a finite infimum in~\eqref{eq:fmin-definition} necessarily takes this form. We shall refer to a distribution of the form~\eqref{eq:protocol-respecting-distribution-extended} as a \emph{protocol-respecting} distribution, denoting the set of all such distributions by $\respecting$.
\end{remark}

Particular properties of the min-tradeoff function that appear within the error terms of the EAT are:
\begin{itemize}
\item The maximum value attainable on $\P_{\C}$,
\begin{equation}
\Max{f} := \max_{\p \in \P_\C} f(\p).
\end{equation}
\item The minimum value over protocol-respecting distributions,
\begin{equation}
\Min{\frestricted} := \min_{\p \in \respecting} f(\p).
\end{equation}
\item The maximum variance over all protocol-respecting distributions,
\begin{equation}
\Var{\frestricted} := \max_{\p \in \respecting} \sum_{c \in \C} p(c) \left(f(\e_c)- f(\p)\right)^2.
\end{equation}
\end{itemize}

\begin{theorem}[EAT~\cite{DF}]
\label{thm:EAT}~\newline
Let $\{\N_i\}_{i=1}^n$ be a collection of EAT channels and let
$\rho_{\AA\BB\II\CC E} = \ptr{R_n}{(( \N_n \circ \dots \circ
  \N_1)\otimes \id_{E})\rho_{R_0 E}}$ be the output state after the
sequential application of the channels $\{\N_i\otimes\id_{E}\}_i$ to
some input state $\rho_{R_0 E}$. Let $\Omega \subseteq \mathcal{C}^n$
be some event that occurs with probability $p_{\Omega}$ and let
$\rho_{|_\Omega}$ be the state conditioned on $\Omega$ occurring. Finally let $\epsmooth \in (0,1)$ and $f$ be a valid min-tradeoff function for $\{\N_i\}_i$. 
If for all $\CC\in\Omega$, with $\pr{\CC}>0$, there is some $t \in \R$
for which $f(\freqcbm)\geq t$, then for any $\beta \in (0,1)$
\begin{equation}\label{eq:EAT-bound}
H_{\min}^{\epsmooth}(\AA\BB|\XX\YY\EE)_{\rho_{|_{\Omega}}} > n t - n (\errV + \errK) - \errW,
\end{equation}
where
\begin{equation}\label{eq:errV}
\errV := \frac{\beta \ln 2}{2} \left(\log\left(2 |\A\B|^2 +1\right) + \sqrt{  \Var{\frestricted} + 2} \right)^2,
\end{equation} 
\begin{equation}\label{eq:errK}
\errK := \frac{\beta^2}{6(1-\beta)^3\ln 2}\,
2^{\beta(\log |\A\B| + \Max{f} - \Min{\frestricted})}
\ln^3\left(2^{\log |\A\B| + \Max{f} - \Min{\frestricted}} + \expo^2\right)
\end{equation}
and
\begin{equation}\label{eq:errW}
\errW := \frac{1}{\beta}\left(1 - 2 \log(p_{\Omega}\,\epsmooth)\right).
\end{equation}
\end{theorem}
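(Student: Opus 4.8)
The plan is to adapt the proof of the entropy accumulation theorem from~\cite{DF}; the statement here differs from the one there only cosmetically, with $A_iB_i$ playing the role of the accumulated output systems, $X_iY_i$ the classical inputs, and $R_i$ the memory register. I would organise the argument in three stages. \textbf{Reduction to R\'enyi entropies.} First pass from the smooth min-entropy to a sandwiched R\'enyi entropy $H_\alpha^\uparrow$ with $\alpha = 1+\beta \in (1,2)$, using the standard inequality $H_{\min}^{\epsmooth}(\AA\BB|\XX\YY\EE)_{\rho_{|_{\Omega}}} \geq H_\alpha^\uparrow(\AA\BB|\XX\YY\EE)_{\rho_{|_{\Omega}}} - \tfrac{g(\epsmooth)}{\beta}$ for a suitable $g$, and then relate the $\Omega$-conditioned state to the unconditioned one, which costs an additional $\tfrac{1}{\beta}\log p_\Omega$-type term. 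Collecting these produces the $\errW = \tfrac{1}{\beta}(1 - 2\log(p_\Omega\,\epsmooth))$ contribution.

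\textbf{Chain rule and per-round bound.} The core is a chain rule for $H_\alpha^\uparrow$ valid under the Markov conditions of \defn\ref{def:eat-channels}: it lets one lower bound $H_\alpha^\uparrow(\AA\BB|\XX\YY\EE)$ by a sum over rounds of single-round conditional R\'enyi entropies $H_\alpha^\uparrow(A_iB_i|X_iY_iR')_{\N_i(\sigma_i)}$, optimised over inputs $\sigma_i$ consistent with the propagated state. Each such term is then bounded below via the min-tradeoff property: a second-order expansion of the sandwiched R\'enyi entropy about $\alpha=1$ gives, for every state with $\N_i(\sigma)_{C_i}=\tau_\nu$, an estimate of the form $H_\alpha^\uparrow(A_iB_i|X_iY_iR') \geq f(\nu) - \beta\, c_1(f) - \beta^2\, c_2(f)$, where $c_1(f)$ is controlled by $\Var{\frestricted}$ (plus the $\log(2|\A\B|^2+1)$ slack) and $c_2(f)$ by the range $\Max{f} - \Min{\frestricted}$ together with $\log|\A\B|$; after summing over the $n$ rounds these give the $n\errV$ and $n\errK$ terms.

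\textbf{Using the non-abort event.} Finally incorporate the hypothesis $f(\freqcbm)\geq t$ for all $\CC \in \Omega$: since $C_i$ is a deterministic function of the round's classical data and $f$ is affine, averaging the per-round distributions $\nu_i$ against the $\Omega$-conditioned state reproduces $f$ evaluated on the empirical frequency distribution, hence is at least $t$; substituting and rearranging promotes $nt$ to the leading term while the corrections assemble into exactly $n(\errV+\errK)+\errW$, and the bound holds for every $\beta \in (0,1)$ because $\beta$ was never fixed.

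The main obstacle is the per-round estimate: obtaining the stated coefficients in $\errV$ and $\errK$ requires the delicate operator-norm control of the remainder in the expansion of the sandwiched R\'enyi divergence around $\alpha=1$ carried out in~\cite{DF}, and matching the variance-type quantity that arises there with $\Var{\frestricted}$ taken over the protocol-respecting set $\respecting$. The chain rule for $H_\alpha^\uparrow$ under the Markov condition---which is what distinguishes the improved~\cite{DF} bound from the original~\cite{DFR} one---is the other technically demanding ingredient, and also the place where the second condition of \defn\ref{def:eat-channels} is used essentially.
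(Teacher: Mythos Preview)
Your outline is a faithful sketch of the argument in~\cite{DF}, but the paper you are working from does not itself prove \thm\ref{thm:EAT}: the theorem is quoted verbatim from~\cite{DF} (hence the attribution in its title) and is used as a black box throughout. The surrounding discussion in \app\ref{app:EAT} only compares this statement to the original EAT of~\cite{DFR} and to the blocked variant from~\cite{ARV}; it does not reproduce the R\'enyi chain-rule machinery or the second-order remainder estimates you describe. So there is nothing to compare your proposal against within the paper---your three-stage plan is essentially the proof strategy of~\cite{DF}, and if the intent is to supply a proof where the paper gives none, that is the right route to follow.
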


\begin{remark}\label{rem:alpha-choice}
As the EAT holds for all $\beta \in (0,1)$ we can numerically optimize
our choice of $\beta$ once we know the values of the other protocol
parameters. However, for large $n$ and small $\gamma$, a short
calculation shows that choosing $\beta \in O(\sqrt{\gamma/n})$ keeps
all the error terms of approximately the same magnitude. In
particular, this choice results in the error scalings: $n\errV\in O(\sqrt{n/\gamma})$, $n\errK\in O(1)$ and $\errW\in O(\sqrt{n/\gamma})$.
\end{remark}

\subsection{Randomness extractors}\label{sec:extractors}

Subject to the protocol not aborting, the entropy accumulation
sub-procedure detailed in \sec\ref{sec:accumulation-procedure} will
result in the production of some bit string $\AA\BB \in \{0,1\}^{2n}$ with $\hmin^{\epsmooth}(\AA\BB|\XX\YY\EE) > k$ for
some $k \in \R$. In order to `compress' this randomness into a shorter
but almost uniform random string a \emph{seeded, quantum-proof
  randomness extractor} can be used. This is a function
$\ext:\zraw\times\zseed\rightarrow\zout$, such that if $\zseed$ is a
uniformly distributed bit-string, the resultant bit-string
$\zout$ is $\epsilon$-close to uniformly distributed, even from the
perspective an adversary with quantum side-information $E$ about
$\zraw$. More formally, combining~\cite[Lemma~$3.5$]{DPVR} with the
standard definition for a quantum-proof randomness
extractor~\cite{KR2011} gives the following definition.
\begin{definition}[Quantum-proof strong extractor]\label{def:extractor}
A function
$\ext:\{0,1\}^{|\zraw|}\times\{0,1\}^{|\zseed|}\rightarrow\{0,1\}^{|\zout|}$
is a
\emph{quantum-proof $(k,\epext+2\epsmooth)$-strong extractor}, if for all
cq-states $\rho_{\AA\BB \EE}$ with $\hmin^{\epsmooth}(\AA\BB|\EE)_{\rho} \geq
k$ and for some $\epsmooth>0$ it maps $\rho_{\AA\BB \EE}\ot\tau_{\zseed}$
to $\rho'_{\ext(\AA\BB,\zseed)\zseed \EE}$ where
\begin{equation}
\frac12 \|\rho'_{\ext(\AA\BB,\zseed)\zseed \EE}-\tau_m\otimes \tau_{|\zseed|}\otimes\rho_\EE\|_1 \leq \epext + 2\epsmooth\,.
\end{equation}
(Recall that $\tau_m$ is the maximally mixed state on a system of dimension $m$.)
\end{definition}

Although in general the amount of randomness extracted will depend on
the extractor, $\hmin^{\epsmooth}(\AA\BB|\EE)$ provides an upper bound on
the total number of $\epsmooth$-close to uniform bits that can be
extracted from $\AA\BB$ and a well-chosen extractor will result in a
final output bit-string with
$|\zout|\approx\hmin^{\epsmooth}(\AA\BB|\EE)$. We denote any loss of
entropy incurred by the extractor by $\eploss=k-|\zout|$. Entropy
loss will differ between extractors but in general it will be some
function of the extractor error, the seed length and
the initial quantity of entropy. The extractor literature is rich with
explicit constructions, with many following Trevisan's
framework~\cite{trevisan}. For an in-depth overview of randomness
extraction, we refer the reader to~\cite{nisan1999extracting} and references therein.

\begin{remark}
  By using a \emph{strong} quantum-proof extractor, the output of the
  extractor will remain uncorrelated with the string used to seed
  it. Since the seed acts like a catalyst, we need not be overly
  concerned with the amount required. Furthermore, if available, it
  could just be acquired from a trusted public source immediately
  prior to extraction without compromising security. However, if a
  public source is used, it is important that it is not available to
  Eve too early in the protocol as this could allow Eve to create
  correlations between the outputs of the devices and the extractor
  seed.
\end{remark}

\begin{remark}
  Related to the previous remark is the question of whether the
  quantity we are interested in is
  $\hmin^{\epsmooth}(\AA\BB|\XX\YY\EE)$, rather than
  $\hmin^{\epsmooth}(\AA\BB|\EE)$ or
  $\hmin^{\epsmooth}(\AA\BB\XX\YY|\EE)$.  In common QKD protocols
  (such as BB84), the first of these is the only reasonable choice
  because the information $\XX\YY$ is communicated between the two
  parties over an insecure channel and hence could become known by
  Eve.  For randomness expansion, this is no longer the case: this
  communication can all be kept secure within one lab.  Whether the
  alternative quantities can be used then depends on where the seed
  randomness comes from.  If a trusted beacon is used then the first
  case is needed. If the seed randomness can be kept secure until such
  time that the random numbers need no longer be kept random then the
  second quantity could be used\footnote{This is a reasonable
    requirement, because there are other strings that have to be kept
    secure in the same way, e.g., the raw string $\AA$.}. If it is
  also desirable to extract as much randomness as possible, then the
  third quantity could be used instead. However, in many protocols the
  amount of seed required to choose $X$ and $Y$ in the entropy
  accumulation procedure is small enough that extracting randomness
  from this will not significantly increase the rate (see, e.g., our discussion in
  \app\ref{sec:input-rand}).
\end{remark}

\section{A template protocol for randomness expansion}\label{sec:adaptive-framework}

The primary purpose of this work is to provide a method whereby one
can construct tailored randomness expansion protocols, with a
guarantee of security and easily calculable generation rates. We
achieve this by providing a template protocol (Protocol~\protoName),
for which we have explicit security statements in terms of the
protocol parameters as well as the outputs of some SDPs. Our framework
is divided into three sub-procedures: preparation, accumulation
and extraction.

The preparation procedure consists of assigning values to the various
parameters of the protocol, this includes choosing a nonlocal game to
act as the nonlocality test. At the end of the preparation one would
have turned the protocol template into a precise protocol, constructed
a min-tradeoff function and be able to calculate the relevant security
quantities.  Note that once a specific protocol has been decided it is
not necessary to perform this step.  Furthermore, the manufacturer may
already specify the entire protocol to use with their devices, in which
case this step can be skipped.  Nevertheless, the fact that the
protocol can be tuned to the devices at hand enables the
user to optimize the randomness output from the devices at hand.

The final two parts of the framework
form the process described in Protocol~\protoName. The accumulation
step follows the entropy accumulation procedure detailed in
\sec\ref{sec:accumulation-procedure} wherein the user interacts with
their devices using the chosen protocol parameters. After the device
interaction phase has finished, the user implicitly evaluates the
quality of their devices by testing whether the observed inputs and
outputs satisfy the condition \eqref{eq:success-event}. Subject to the
protocol not aborting, a reliable lower bound on the min-entropy of
the total output string is calculated through the EAT
\eqref{eq:EAT-bound}. With this bound, the protocol can be completed
by applying an appropriate quantum-proof randomness extractor to the
devices' raw output strings.

\begin{figure}[t!]
\begin{center}
\fbox{
\begin{minipage}[c]{16cm}
\procedure[mode=text]{Protocol~\protoName}{%
\tbf{Parameters and notation}: \\
\ind 1 $\DD_{AB}$ -- a pair of untrusted devices taking
inputs from $\X$, $\Y$ and giving outputs from $\A$, $\B$ \\
\ind 1 $\G = (\mu,\V)$ -- a nonlocal game compatible with $\DD_{AB}$ \\
\ind 1 $\w \in \Q_{\G}$ -- an expected frequency distribution for $\G$ \\
\ind 1 $\d$ -- vector of confidence interval widths (satisfies $0\leq\delta_k\leq\omega_k$ for all $k \in [|\G|]$) \\
\ind 1 $n \in \mathbb N$ -- number of rounds \\
\ind 1 $\gamma \in (0,1)$ -- probability of a test round \\
\ind 1 $(\x,\y)$ -- distinguished inputs for generation rounds \\
\ind 1 $f_{\min}$ -- min-tradeoff function \\
\ind 1 $\epext > 0$ -- extractor error \\
\ind 1 $\epsmooth \in (0,1)$ -- smoothing parameter \\
\ind 1 $\epeat \in (0,1)$ -- entropy accumulation error \\
\ind 1 $\ext$ -- quantum-proof $(k,\epext+2\epsmooth)$-strong extractor \\
\ind 1 $\eploss$ -- entropy loss induced by $\ext$ \\
 [][\hline\hline]
\tbf{Procedure}: \\
\ind 1 \nln{1} Set $i=1$. \\
\ind 1 \nln{2} \tbf{While} $i \leq n$: \\
\ind 4 Choose $T_i=0$ with probability $1-\gamma$ and otherwise $T_1=1$. \\
\ind 4 \,\tbf{If} $T_i=0$: \\
\ind 6 \nln{Gen} Input $(\x,\y)$ into the respective devices, recording the inputs $X_iY_i$ and outputs $A_iB_i$.
\\
			   \ind 8 \,\,Set $\Ci= \perp$ and $i = i+1$. \\
\ind 4 \,\tbf{Else}: \\
\ind 6 \nln{Test} Play a single round of $\G$ on $\DD_{AB}$ using inputs sampled
			   from $\mu$, recording the inputs $X_iY_i$ and\\
			   \ind 8 \,\,outputs $A_i					   B_{i}$. Set $\Ci = \V(A_{i},B_{i},X_{i},Y_{i})$ and $i=i+1$.\\
\ind 1 \nln{3} Compute the empirical frequency distribution $\freqcbm$. \\
\ind 2 \,\tbf{If} $ \gamma (\w-\d) < \freqcbm(\G) < \gamma (\w+\d)$: \\
\ind 4 \nln{Ext} Apply a strong quantum-proof randomness extractor $\ext$ to					   the raw output string $\AA\BB$\\
				\ind 6 producing $n f_{\min}(\w -\dpm) - \eploss$ bits $(\epext+2\epsmooth)$-close to uniformly distributed. \\
\ind 2 \,\tbf{Else}: \\
\ind 3 \,\nln{Abort} Abort the protocol.
}
\end{minipage}
}
\end{center}
\caption{The template quantum-secure device-independent randomness
  expansion protocol.}
\label{fig:full-protocol}
\end{figure} 

The next three subsections are dedicated to explaining these three sub-procedures in detail. In particular, \sec\ref{sec:subproc-prep} outlines the min-tradeoff function construction. A bound on the total entropy accumulated in terms of the various protocol parameters is then provided in \sec{\ref{sec:subproc-execution} and finally, in \sec\ref{sec:template-protocol}, the security statements for the template protocol are presented.

\subsection{Preparation}\label{sec:subproc-prep}

Before interacting with their devices the user must select appropriate
protocol parameters (see \fig\ref{fig:full-protocol} for a full list
of parameters). In particular, they must choose a nonlocal game to use
during the test rounds and construct a corresponding min-tradeoff
function.  This step enables this to be done if it is not already
specified.

The parameter values chosen will largely be dictated by situational
constraints; e.g., runtime, seed length and the expected
performance of the untrusted devices.\footnote{At first this may
  seem to conflict with the ethos of device-independence.  The point is that
  although the user of the protocol relies on an expected behaviour to
  set-up their devices, they do not rely on this expected behaviour
  being an accurate reflection of the devices for security. This also
  means that the expected behaviour could be that claimed by the
  device manufacturer. Using inaccurate estimation of the devices
  behaviour will not compromise security, but may lead to a
  different abort probability.} The user's choice of parameters, in
particular the choice of nonlocal game, will affect the form of their
min-tradeoff function derived and in turn their projected total
accumulated entropy. Before moving to the accumulation step of the
protocol the user can try to optimise their chosen parameters by computing the entropy rates for many different
choices. This allows them to adapt their protocol
to the projected performance of their devices.

We now present a constructible family of min-tradeoff functions for a
general instance of Protocol~\protoName. This construction is based on
the following idea. As noted in \sec\ref{sec:entropies-and-sdps} one
can numerically calculate a lower bound on the min-entropy of a system
based on its observed statistics. Pairing this with the relation,
$H_{\min}(X|E)\leq H(X|E)$, we have access to numerical bounds on the von Neumann
entropy. In particular, we can use the affine function
$g(\bm q)=\l\cdot \bm q$, where $\l$ is a feasible point
of the dual program~\eqref{prog:relaxed-dual}, in order to build a min-tradeoff function for the protocol.\footnote{In fact, by relaxing the dual program
  to the NPA hierarchy, the single round bound is valid against
  super-quantum adversaries. However, the full protocol is not
  necessarily secure more widely: to show that we would need to
  generalise the EAT and the extractor.} In order for $g$ to meet the requirements of a min-tradeoff function, its domain must be extended to include the symbol $\perp$. To perform this extension we use the method presented in~\cite[Section~5.1]{DF}. As the rounds are split into testing and generation rounds, we may decompose the EAT-channel for the $i^{\text{th}}$ round as $\N_i = \gamma\N^{\test}_i + (1-\gamma)\N_i^\gen$, where $\N_i^\test$ is the channel that would be applied if the round were a test round and $\N_i^{\gen}$ if the round were a generation round. Importantly, this splitting separates $\perp$ from the nonlocal game scores. That is, if $\N_i^{\test}$ is the channel applied then $\pr{\Ci = \perp} = 0$ whereas if $\N_i^{\gen}$ is applied then $\pr{\Ci=\perp} = 1$.

\begin{lemma}[Min-tradeoff extension~{\cite[Lemma~5.5]{DF}} ]\label{lem:extensionlemma}
Let $g: \P_{\G} \rightarrow \R$ be an affine function satisfying 
\begin{equation}\label{eq:entropy-bounding-OVG}
g(\p) \leq \inf_{\sigma_{R_{i-1}R'}: \N^{\test}_i(\sigma)_{\Ci}=\tau_{\p}} H(A_iB_i|X_iY_iR')_{\N_i(\sigma)}
\end{equation}
for all $\p\in\Q_{\G}$. Then, the function $f: \P_{\G\cup\{\perp\}}\rightarrow \R$, defined by its action on trivial distributions
\begin{align*}
f(\e_c) &= \Max{g} + \frac{g(\e_c) - \Max{g}}{\gamma}, \qquad \forall c \in \G, \\
f(\e_\perp) &= \Max{g},
\end{align*} 
is a min-tradeoff function for the EAT-channels $\{\N_i\}_i$. Furthermore, $f$ satisfies the following properties:
\begin{align*}
\Max{f} &= \Max{g}, \\
\Min{\frestricted} &\geq \Min{g}, \\
\Var{\frestricted} &\leq \frac{(\Max{g} - \Min{g})^2}{\gamma}.
\end{align*}
\end{lemma}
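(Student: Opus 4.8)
The plan is to treat the three assertions in turn: first that $f$ is a well-defined affine function for which $f(\p)=g(\w)$ on every protocol-respecting input $\p=(\gamma\w,1-\gamma)$; then that $f$ satisfies the min-tradeoff inequality of \defn\ref{def:fmin}; and finally the three quantitative bounds. For the first point, note that $\P_{\G\cup\{\perp\}}$ is a simplex with vertices $\{\e_c\}_{c\in\G}\cup\{\e_{\perp}\}$, so the prescribed vertex values extend uniquely to the affine (indeed, linear) function $f(\p)=\sum_{c\in\G}p(c)f(\e_c)+p(\perp)f(\e_{\perp})$, which is finite everywhere. Taking $\p=(\gamma\w,1-\gamma)$ with $\w\in\QG$ as in \rem\ref{rem:protocol-respecting-distribution}, substituting the definitions of $f(\e_c)$ and $f(\e_{\perp})$, and using $\sum_{c\in\G}\w(c)=1$ together with the affinity of $g$ on $\P_\G$ (so that $\sum_{c\in\G}\w(c)g(\e_c)=g(\w)$), the $\Max{g}$-terms cancel and one is left with the identity $f(\p)=g(\w)$.

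For the min-tradeoff inequality, fix a round index $i$ and an arbitrary $\p\in\P_{\G\cup\{\perp\}}$. Since $\N_i=\gamma\N_i^{\test}+(1-\gamma)\N_i^{\gen}$ and, as observed just before the lemma, $\N_i^{\gen}$ outputs $\perp$ with certainty while $\N_i^{\test}$ never outputs $\perp$, the $\Ci$-marginal is $\N_i(\sigma)_{\Ci}=\gamma\,\N_i^{\test}(\sigma)_{\Ci}+(1-\gamma)\ketbra{\perp}$. Hence the constraint $\N_i(\sigma)_{\Ci}=\tau_{\p}$ has no solution unless $p(\perp)=1-\gamma$, and when $p(\perp)=1-\gamma$ it is equivalent to $\N_i^{\test}(\sigma)_{\Ci}=\tau_{\w}$ with $\w(c):=p(c)/\gamma$. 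If $\p$ is not of this protocol-respecting form the feasible set in \eqref{eq:fmin-definition} is empty, the right-hand side is $+\infty$, and there is nothing to prove; otherwise $\w\in\QG$, the feasible set equals that of the hypothesis \eqref{eq:entropy-bounding-OVG}, and applying that hypothesis at $\w$ gives
\begin{align*}
\inf_{\sigma:\,\N_i(\sigma)_{\Ci}=\tau_{\p}}H(A_iB_i|X_iY_iR')_{\N_i(\sigma)} &= \inf_{\sigma:\,\N_i^{\test}(\sigma)_{\Ci}=\tau_{\w}}H(A_iB_i|X_iY_iR')_{\N_i(\sigma)} \\ &\ge g(\w)=f(\p),
\end{align*}
which is exactly what \defn\ref{def:fmin} demands (an empty feasible set even for protocol-respecting $\p$ again makes the bound trivial). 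Thus $f$ is a valid min-tradeoff function.

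It remains to read off the three properties. Since $f$ is affine, $\Max{f}=\max_{c\in\C}f(\e_c)$; because $g(\e_c)\le\Max{g}$ and $\gamma\in(0,1)$ we have $f(\e_c)=\Max{g}+(g(\e_c)-\Max{g})/\gamma\le\Max{g}=f(\e_{\perp})$, with equality at any maximiser of $g$, so $\Max{f}=\Max{g}$. Using the identity $f(\p)=g(\w)$ on $\respecting$ and $\QG\subseteq\P_\G$ gives $\Min{\frestricted}=\min_{\w\in\QG}g(\w)\ge\min_{\w\in\P_\G}g(\w)=\Min{g}$. For the variance, on a protocol-respecting $\p$ expand $\sum_{c\in\C}p(c)(f(\e_c)-f(\p))^2$ using $p(c)=\gamma\w(c)$ for $c\in\G$, $p(\perp)=1-\gamma$, $f(\p)=g(\w)$, and $\sum_{c\in\G}\w(c)(g(\e_c)-\Max{g})=g(\w)-\Max{g}$; the cross terms telescope and one obtains the exact value
\[
\sum_{c\in\C}p(c)\,(f(\e_c)-f(\p))^2\;=\;\frac1\gamma\sum_{c\in\G}\w(c)\,(g(\e_c)-\Max{g})^2\;-\;(\Max{g}-g(\w))^2.
\]
Each term $(g(\e_c)-\Max{g})^2$ is at most $(\Max{g}-\Min{g})^2$ and the subtracted term is nonnegative, so this is at most $(\Max{g}-\Min{g})^2/\gamma$ for every $\w\in\QG$, i.e.\ $\Var{\frestricted}\le(\Max{g}-\Min{g})^2/\gamma$.

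The one genuinely delicate step is the min-tradeoff inequality: the point is to recognise that the generation/test decomposition separates $\perp$ from the game scores so sharply that the set of device states feasible for $\N_i$ (which necessarily places weight $1-\gamma$ on $\perp$) coincides with the set feasible for $\N_i^{\test}$, which is precisely what transports the von Neumann entropy lower bound from $g$ on $\P_\G$ to the extended function $f$ on $\P_{\G\cup\{\perp\}}$. Everything else is routine bookkeeping with affine functions on a simplex.
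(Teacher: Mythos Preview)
Your proof is correct. The paper does not give its own proof of this lemma; it simply imports the result from~\cite[Lemma~5.5]{DF}, so there is nothing to compare against beyond noting that your argument is self-contained while the paper defers to the reference. Your variance computation, arriving at the exact identity
\[
\sum_{c\in\C}p(c)\,(f(\e_c)-f(\p))^2=\frac1\gamma\sum_{c\in\G}\w(c)\,(g(\e_c)-\Max{g})^2-(\Max{g}-g(\w))^2,
\]
is in fact sharper than what is strictly needed and matches what one finds in the cited source.
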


\begin{lemma}[Min-tradeoff construction]\label{lem:fmin}
Let $\G$ be a nonlocal game and $k \in \mathbb N$. For each $\v\in\Qk_\G$,
let $ \lv$ be a feasible point of
\prog\eqref{prog:relaxed-dual} when parameterized by
$\v$. Furthermore, let $\lmax =
\max_{c \in \G}\lambda_{\v}(c)$ and $\lmin =
\min_{c \in \G}\lambda_{\v}(c)$. Then, for any set of EAT channels
$\{\N_i\}_{i=1}^n$ implementing an instance of Protocol~\protoName\ with the nonlocal game $\G$, the set of functionals $\Fmin(\G) = \{f_{\v}(\cdot)\mid \v\in \Qk_\G\}$ forms a family of min-tradeoff functions, where $f_{\v}: \P_\C \rightarrow \R$ are defined by their actions on trivial distributions
\begin{align}\label{eq:fmin-c}
\hspace{4cm}f_{\v}(\e_{c}) &:= (1-\gamma)\left(\Av - \Bv \frac{\lv\cdot\e_c - (1-\gamma)\lmin}{\gamma} \right) \hspace{0.5cm}\text{for } c \in \G,\intertext{and}
f_{\v}(\e_\perp) &:= (1-\gamma)\left(\Av - \Bv\, \lmin \right), &~
\end{align}
where $\Av = \tfrac{1}{\ln 2} - \log (\lv\cdot\v)$ and $\Bv = \frac{1}{\lv\cdot\v \ln 2}$.

Moreover, these min-tradeoff functions satisfy the following relations.
\begin{itemize}
\item Maximum:
\begin{equation}\label{eq:fv-max}
\Max{f_{\v}} = (1-\gamma) (\Av - \Bv \,\lmin)
\end{equation}
\item $\respecting$-Minimum:
\begin{equation}\label{eq:fv-min}
\Min{\left.f_{\v}\right|_{\respecting}} \geq (1-\gamma)(\Av - \Bv \,\lmax)
\end{equation}
\item $\respecting$-Variance:
\begin{equation}\label{eq:fv-var}
\Var{\left.f_{\v}\right|_{\respecting}} \leq \frac{(1-\gamma)^2 \Bv^2 (\lmax - \lmin)^2}{\gamma}
\end{equation}
\end{itemize}
\begin{proof}
Consider the entropy bounding property \eqref{eq:entropy-bounding-OVG} but with $\C$ restricted to the scoring alphabet of $\G$, i.e., we have an affine function $g_{\v}: \P_\G \rightarrow \mathbb{R}$ such that
$$
g_{\v}(\bm q) \leq \inf_{\sigma_{R_{i-1}R'}: \N^{\test}_i(\sigma)_{\Ci(\G)}=\tau_{\bm q}} H(A_iB_i|X_iY_iR')_{\N_i(\sigma)},
$$
for all $\bm q \in \Q_\G$. 

As conditioning on additional side information will not increase the von Neumann entropy, we may condition on whether or not the round was a test round,

\begin{align*}
H(A_iB_i|X_iY_iR')_{\N_i(\sigma)} &\geq H(A_iB_i|X_iY_i T_i R')_{\N_i(\sigma)} \\
&= \gamma H(A_iB_i|X_iY_i,T_i=1,R')_{\N_i(\sigma)} + (1-\gamma) H(A_iB_i|X_iY_i, T_i = 0, R')_{\N_i(\sigma)} \\
& > (1-\gamma) H(A_iB_i|X_i=\x,Y_i=\y, T_i = 0, R')_{\N_i(\sigma)}
\end{align*}
where in the final line we have used the fact that the inputs are fixed for generation rounds. As the min-entropy lower bounds the von Neumann entropy, we arrive at the bound
$$
H(A_iB_i|X_iY_iR')_{\N_i(\sigma)} > (1-\gamma) \hmin(A_iB_i|X_i=\x,Y_i=\y, T_i = 0, R')_{\N_i(\sigma)}.
$$
Using programs~\eqref{prog:relaxed-primal} and~\eqref{prog:relaxed-dual}, we can lower bound the right-hand side in terms of the relaxed guessing probability. Specifically, for a single generation round
\begin{align*}
H_{\min}(AB|X=\x, Y=\y, T=0, R') &= -\log(\pg(\bm q)) \\
& \geq -\log(\lv^{(k)} \cdot \bm q), 
\end{align*}
holds for all $k\in\mathbb N$, any $\v \in \Qk_\G$ and any quantum
system realising the expected statistics $\bm{q} \in \Q_\G$. In the
final line we used the monotonicity of the logarithm together with the
fact that a solution to the relaxed dual program, for any
parameterization $\v \in \Qk_\G$, provides a linear function $ \lv
\cdot (\,\cdot\,)$ that is greater than $\pg$ everywhere on
$\Qk_{\G}$. Note that this bound is also device-independent and is
therefore automatically a bound on the infimum. Dropping the
superscript $(k)$ for notational ease, we may recover the desired affine property by taking a first order expansion about the point $\v$. This results in the function
$$
g_{\v}(\bm q):= (1-\gamma) (\Av - \Bv \, \lv \cdot \bm q),
$$
which satisfies
$$
g_{\v}(\bm q) \leq \inf_{\sigma_{R_{i-1}R'}: \N^{\test}_i(\sigma)_{\Ci}=\tau_{\bm q}} H(A_iB_i|X_iY_iR')_{\N_i(\sigma)},
$$
for all $\bm q \in \Q_\G$, with $\Av$ and $\Bv$ as defined in Lemma~\ref{lem:fmin}. The statement then follows by applying Lemma~\ref{lem:extensionlemma} to $g_{\v}$, noting $\Max{g_{\v}} = (1-\gamma) (\Av - \Bv \,\lmin)$ and $\Min{g_{\v}} = (1-\gamma) (\Av - \Bv \,\lmax)$.
\end{proof}
\end{lemma}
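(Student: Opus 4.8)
The plan is to obtain $f_{\v}$ in two stages. In the first stage I would build an affine function $g_{\v}:\P_{\G}\rightarrow\R$ that lower bounds the single-round generation entropy in the sense of~\eqref{eq:entropy-bounding-OVG}; in the second stage I would feed $g_{\v}$ into the extension lemma (Lemma~\ref{lem:extensionlemma}) to lift it to a genuine min-tradeoff function on $\P_{\C}$ and to read off the maximum, restricted minimum and restricted variance estimates.

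For the first stage I would start from $H(A_iB_i|X_iY_iR')_{\N_i(\sigma)}$ and adjoin the round-type register $T_i$. Since additional conditioning cannot increase conditional entropy and the classical entropies involved are nonnegative, this leaves the $(1-\gamma)$-weighted generation term, on which the inputs are pinned to $(\x,\y)$; combined with $H\geq\hmin$ this yields $H(A_iB_i|X_iY_iR')\geq(1-\gamma)\hmin(A_iB_i|X_i=\x,Y_i=\y,T_i=0,R')$. By~\eqref{eq:hmin} the latter equals $-\log\pg(\bm q)$ for the frequency distribution $\bm q\in\Q_{\G}$ realised by $\N^{\test}_i$. Feasibility of $\lv$ for the relaxed dual~\eqref{prog:relaxed-dual} at parameter $\v$ guarantees that $\bm q\mapsto\lv\cdot\bm q$ dominates $\pk$, hence $\pg$, everywhere on $\Qk_{\G}$, in particular on $\Q_{\G}$; since this estimate depends on $\bm q$ alone it is automatically a bound on the infimum over all $\sigma$ consistent with $\bm q$. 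Hence $H(A_iB_i|X_iY_iR')_{\N_i(\sigma)}\geq(1-\gamma)\bigl(-\log(\lv\cdot\bm q)\bigr)$.

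The right-hand side is concave in $\bm q$ but not affine, so the decisive move is to replace $-\log$ by its tangent line at $t_{0}=\lv\cdot\v$. Convexity of $-\log$ makes this tangent a global lower bound, which produces exactly $g_{\v}(\bm q)=(1-\gamma)(\Av-\Bv\,\lv\cdot\bm q)$ with $\Av=\tfrac{1}{\ln 2}-\log(\lv\cdot\v)$ and $\Bv=\tfrac{1}{(\lv\cdot\v)\ln 2}$, an affine function meeting the hypothesis of Lemma~\ref{lem:extensionlemma} on all of $\Q_{\G}$. This linearisation, together with the bookkeeping needed to check that each inequality above holds uniformly over states rather than merely pointwise in $\bm q$, is the part I expect to require the most care; the entropy manipulations beforehand are a routine chain of inequalities and what follows is algebra.

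For the second stage I would apply Lemma~\ref{lem:extensionlemma} to $g_{\v}$. Because $g_{\v}$ is affine and $\P_{\G}$ is a simplex, its extrema are attained at the vertices $\e_{c}$, and since $\Bv>0$ this gives $\Max{g_{\v}}=(1-\gamma)(\Av-\Bv\,\lmin)$ and $\Min{g_{\v}}=(1-\gamma)(\Av-\Bv\,\lmax)$. Plugging $f(\e_{c})=\Max{g_{\v}}+\tfrac{1}{\gamma}\bigl(g_{\v}(\e_{c})-\Max{g_{\v}}\bigr)$ and $f(\e_{\perp})=\Max{g_{\v}}$ from Lemma~\ref{lem:extensionlemma} and simplifying recovers the stated formulas~\eqref{eq:fmin-c} for $f_{\v}(\e_{c})$ and $f_{\v}(\e_{\perp})$, while the same lemma immediately gives $\Max{f_{\v}}=\Max{g_{\v}}$, $\Min{\left.f_{\v}\right|_{\respecting}}\geq\Min{g_{\v}}$ and $\Var{\left.f_{\v}\right|_{\respecting}}\leq(\Max{g_{\v}}-\Min{g_{\v}})^{2}/\gamma=(1-\gamma)^{2}\Bv^{2}(\lmax-\lmin)^{2}/\gamma$, which are~\eqref{eq:fv-max}--\eqref{eq:fv-var}.
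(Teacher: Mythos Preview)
Your proposal is correct and follows essentially the same route as the paper: condition on $T_i$ and drop the test-round contribution, pass to $\hmin$, use dual feasibility of $\lv$ to dominate $\pg$ by $\lv\cdot\bm q$, linearise $-\log$ by its tangent at $\lv\cdot\v$, and then invoke Lemma~\ref{lem:extensionlemma} together with $\Max{g_{\v}}$ and $\Min{g_{\v}}$ computed at the simplex vertices. The only cosmetic difference is that you make explicit why $\Bv>0$ forces the extrema of $g_{\v}$ to sit at $\lmin$ and $\lmax$ respectively, which the paper leaves implicit.
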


\begin{example}\label{ex:prep1}
	Taking the nonlocal game $\G_{\chsh}$ introduced in Example~\ref{ex:chsh}, we can use the above lemma to construct a min-tradeoff function. Fixing the probability of testing, $\gamma = 5 \times 10^{-3}$, we consider a device that behaves (during a test round) according to the expected frequency distribution~$\w = (\wcr, \wch, 1-\wcr-\wch)$. In \fig\ref{fig:prep-example1}, we plot the certifiable min-entropy of a single generation round for a range of $\w$. We see that as the scores approach $\w = \tfrac12\,\left(1,\tfrac{2 + \sqrt{2}}{4}, \tfrac{2 - \sqrt{2}}{4} \right)$, we are able to certify almost\footnote{Due to the infrequent testing we are actually only able to certify a maximum of $2\cdot (1-\gamma)$ bits per interaction.} two bits of randomness per entangled qubit pair using $\G_{\chsh}$. 
	\begin{figure}[t!]
		\begin{center}
			\includegraphics[scale=1.0]{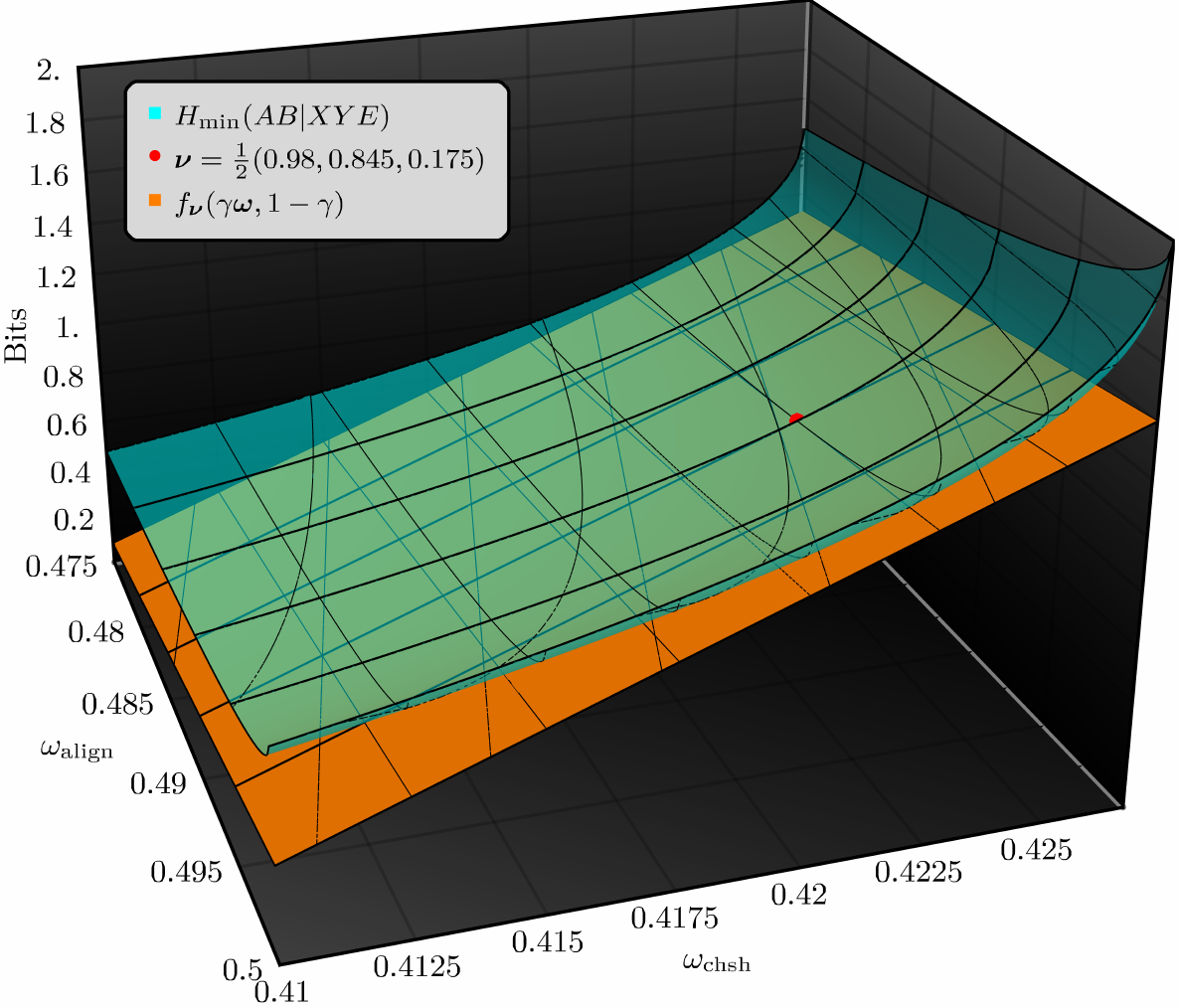}
		\end{center}
		\caption[Lower bounds to min-entropy surface]{A plot of a lower bound on the certifiable min-entropy produced during a single round of the protocol. This lower bound was calculated using \prog\ref{prog:relaxed-primal} relaxed to the second level of the NPA hierarchy. In addition, we plot a min-tradeoff function $f_{\v}$ evaluated for distributions of the form $\p = (\gamma \w, 1-\gamma)$ for $\w \in \Q_{\G}$, i.e. expected frequency distributions over $\G\cup\{\perp\}$ that are compatible with the spot-checking structure of the rounds. Since $f_{\v}$ is the tangent plane to the surface at the point $\v$ it forms an affine lower bound on the min-entropy of any quantum distribution compatible with the protocol.}
		\label{fig:prep-example1}
	\end{figure}
\end{example}

\subsection{Accumulation and extraction}\label{sec:subproc-execution}
After fixing the parameters of the protocol and constructing a
min-tradeoff function $f_{\min}$, the user proceeds with the remaining
steps of Protocol~\protoName: accumulation and extraction. The
accumulation step consists of the device interaction and evaluation
sub-procedures that were detailed in
\sec\ref{sec:accumulation-procedure}. If the protocol does not abort,
then with high probability the generated string $\AA\BB$ contains at
least some given quantity of smooth min-entropy. The following lemma
applies the EAT to deduce a lower bound on the amount of entropy
accumulated.

\begin{lemma}[Accumulated entropy]\label{lem:accumulated-entropy}
Let the randomness expansion procedure and all of its parameters be as
defined in \fig\ref{fig:full-protocol}. Furthermore, let $\Omega$ be
the event the protocol does not abort (cf.~\eqref{eq:success-event})
and let $\rho_{|\Omega}$ be the final state of the system conditioned
on this. Then, for any $\beta,\,\epsmooth,\,\epeat\in(0,1)$ and any
choice of min-tradeoff function $f_{\v}\in \Fmin$, either
Protocol~\protoName\ aborts with probability greater than $1-\epeat$ or 
\begin{equation}\label{eq:EAT-total}
H_{\min}^{\epsmooth}(\AA\BB|\XX\YY\EE)_{\rho_{|_{\Omega}}} >  (1-\gamma) n \left(\Av - \Bv\lv \cdot (\w -\dpm)\right)  - n (\errV + \errK) - \errW,
\end{equation}
where
\begin{equation}\label{eq:errV-explicit}
\errV := \frac{\beta \ln 2}{2} \left(\log\left(2 |\A\B|^2 +1\right) + \sqrt{  \frac{(1-\gamma)^2 \Bv^2 (\lmax - \lmin)^2}{\gamma} + 2} \right)^2,
\end{equation} 
\begin{equation}\label{eq:errK-explicit}
\errK := \frac{\beta^2}{6(1-\beta)^3\ln 2}\,
2^{\beta(\log |\A\B| + (1-\gamma)\Bv (\lmax-\lmin))}
\ln^3\left(2^{\log |\A\B| + (1-\gamma)\Bv (\lmax-\lmin)} + e^2\right),
\end{equation}
\begin{equation}\label{eq:errW-explicit}
\errW := \frac{1}{\beta}\left(1 - 2 \log(\epeat\,\epsmooth)\right)
\end{equation}
and $\dpm = (\delta(c)\,\sgn(-{\lambda_{\v}(c)}))_{c \in
  \G}$.
\begin{proof}
  Let $\{\N_i\}_{i\in [n]}$ be the set of channels implementing the
  entropy accumulation sub-procedure of Protocol~\protoName. Comparing
  this procedure with the definition of the EAT channels
  \defn\ref{def:eat-channels}, we have
  $\N_i:\mathcal{S}(R_{i-1}) \rightarrow
  \mathcal{S}(A_iB_iX_iY_iT_iC_iR_i)$ with
  $A_i, B_i, X_i, Y_i, T_i, C_i$ finite dimensional classical systems,
  $R_i$ an arbitrary quantum system and the score $C_i$ is a
  deterministic function of the values of the other classical
  systems. Furthermore, the inputs to the protocol for the
  $i^{\text{th}}$ round, $(X_i,Y_i,T_i)$, are chosen independently of
  all other systems in the protocol and so the conditional
  independence constraints
  $I(A_1^{i-1}B_1^{i-1}\!\! :\! X_i Y_i | X_1^{i-1}Y_1^{i-1} E) = 0$
  hold trivially. The conditions necessary for $\{\N_i\}_{i\in [n]}$
  to be EAT-channels are satisfied and by \lem\ref{lem:fmin} $f_{\v}$
  is a min-tradeoff function for these channels. We can now apply the
  EAT to bound the total entropy accumulated.

Consider now the pass probability of the protocol,
$p_{\Omega}$. Either $p_\Omega < \epeat$, in which case the protocol
will abort with probability at least $1 - \epeat$, or $p_\Omega \geq
\epeat$. In the latter case we can replace the unknown $p_{\Omega}$
in~\eqref{eq:errW} with $\epeat$ as this results in an increase in the error term $\errW$. The EAT then asserts that
$$
\hmin^{\epsmooth}(\AA\BB|\XX\YY\EE)_{\rho_{|\Omega}}
>
n \inf_{\CC\in\Omega} f_{\v}(\freqcbm) - n (\errV + \errK) - \errW,
$$
for any choice of min-tradeoff function $f_{\v} \in \Fmin$.

As the min-tradeoff functions are affine, we can lower bound the infimum for the region of possible scores specified by the success event,
$$
\Omega = \left\{ \CC \mid \gamma (\w-\d) < \freqcbm(\G) < \gamma (\w+\d) \right\}.
$$
Taking $\p  = (\gamma (\w - \dpm), (1-\gamma)$), we
have $f(\p) \leq\inf_{\CC\in\Omega} f_{\v}(\freqcbm)$. Note that $\p$ may not correspond to a frequency distribution that could have resulted from a successful run of the protocol -- it may not even be a probability distribution. However, it is sufficient for our purposes as an explicit lower bound on the infimum. Further, noting that $f_{\v}(\p) = g_{\v}(\w - \dpm)$, we can straightforwardly compute this lower bound as 
$$
f_{\v}(\p) = (1-\gamma) \left(\Av - \Bv\lv \cdot (\w -\dpm)\right). 
$$
Inserting the min-tradeoff function
properties~\eqref{eq:fv-max}--\eqref{eq:fv-var} into the the EAT's
error terms [\eqref{eq:errV}--\eqref{eq:errW}] we get the explicit
form of the quantities $\errV$, $\errK$ and $\errW$ stated in the lemma.
\end{proof}
\end{lemma}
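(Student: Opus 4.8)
The plan is to recognise the accumulation sub-procedure of Protocol~\protoName\ as an instance of entropy accumulation and then to run Theorem~\ref{thm:EAT} with the min-tradeoff functions built in \lem\ref{lem:fmin}. First I would verify that the single-round maps $\{\N_i\}_{i=1}^n$ describing the rounds are EAT channels in the sense of \defn\ref{def:eat-channels}: the registers $A_i,B_i,X_i,Y_i,T_i,C_i$ are classical and $C_i$ is a deterministic function of $A_i,B_i,X_i,Y_i$ by construction, $R_i$ is the (arbitrary) quantum memory carried by the two devices, and the Markov condition $I(A^{i-1}B^{i-1}:X_iY_i|X^{i-1}Y^{i-1}E)=0$ holds because $T_i$, and on test rounds the pair $X_iY_i$, are drawn from the trusted private seed independently of everything produced in earlier rounds. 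With this in place, \lem\ref{lem:fmin} guarantees that for every $\v\in\Qk_\G$ the function $f_{\v}$ is a valid min-tradeoff function for $\{\N_i\}_i$, with $\Max{f_{\v}}$, $\Min{\frestricted}$ and $\Var{\frestricted}$ available in the closed forms stated there.

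Next I would dispose of the easy case. Put $p_\Omega=\pr{\Omega}$. Either $p_\Omega<\epeat$, in which case the first alternative of the lemma already holds, or $p_\Omega\geq\epeat$. In the latter case, since the right-hand side of~\eqref{eq:errW} is decreasing in its probability argument, replacing $p_\Omega$ by the (no larger) quantity $\epeat$ only weakens the bound, so it is safe to use $\epeat$; this gives~\eqref{eq:errW-explicit}. Theorem~\ref{thm:EAT} then yields, for any $t\in\R$ with $f_{\v}(\freqcbm)\geq t$ for all $\CC\in\Omega$ with $\pr{\CC}>0$, the bound $H_{\min}^{\epsmooth}(\AA\BB|\XX\YY\EE)_{\rho_{|_{\Omega}}}>n\,t-n(\errV+\errK)-\errW$, and substituting the \lem\ref{lem:fmin} bounds on $\Var{\frestricted}$ and on $\Max{f_{\v}}-\Min{\frestricted}$ into~\eqref{eq:errV}--\eqref{eq:errK} produces exactly~\eqref{eq:errV-explicit} and~\eqref{eq:errK-explicit}. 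It therefore only remains to choose $t$.

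For $t$ I would use affineness of $f_{\v}$ together with the shape of the extension of \lem\ref{lem:extensionlemma}. From $f_{\v}(\e_\perp)=\Max{g_{\v}}$ and $f_{\v}(\e_c)=\Max{g_{\v}}+(g_{\v}(\e_c)-\Max{g_{\v}})/\gamma$, and using $\sum_{c\in\C}\freqc(c)=1$, one gets $f_{\v}(\freqcbm)=\Max{g_{\v}}-\tfrac1\gamma\sum_{c\in\G}\freqc(c)\,(\Max{g_{\v}}-g_{\v}(\e_c))$, where every coefficient $\Max{g_{\v}}-g_{\v}(\e_c)$ is non-negative. Hence $f_{\v}(\freqcbm)$ decreases as the empirical weights $\freqc(c)$ are pushed up, so over the success region $\gamma(\w-\d)<\freqcbm(\G)<\gamma(\w+\d)$ the infimum is driven to a corner of this box; since $g_{\v}(\bm q)=(1-\gamma)(\Av-\Bv\,\lv\cdot\bm q)$ is monotone decreasing in $\lv\cdot\bm q$, the relevant corner is the one obtained by shifting coordinate $c$ by $-\dpm(c)=-\delta(c)\sgn(-\lambda_{\v}(c))$. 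Reading off the value at the surrogate point $\p=(\gamma(\w-\dpm),1-\gamma)$ then gives the admissible choice $t=(1-\gamma)(\Av-\Bv\,\lv\cdot(\w-\dpm))$, which is the leading term of~\eqref{eq:EAT-total}; combined with the error terms already computed this completes the proof.

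The step I expect to be the main obstacle is precisely this choice of $t$, and the care involved is essentially bookkeeping rather than entropy estimation. The surrogate $\p$ is in general not a probability distribution --- its coordinates need not sum to one --- so one cannot simply invoke affinity of $f_{\v}$ at $\p$; instead one must check directly, from the explicit values $f_{\v}(\e_c)$, $f_{\v}(\e_\perp)$ and the two-sided constraint defining $\Omega$, that $f_{\v}(\p)\leq f_{\v}(\freqcbm)$ for every frequency distribution consistent with not aborting. That inequality rests on the sign of the coefficients $\Max{g_{\v}}-g_{\v}(\e_c)$ and, via $\dpm$, on the sign pattern of the dual vector $\lv$, and keeping these straight is where the real work lies; the remaining manipulations are the routine substitutions of \lem\ref{lem:fmin} into the EAT error terms.
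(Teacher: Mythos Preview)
Your proposal follows the paper's proof almost line for line: verify the EAT-channel conditions, dispose of the case $p_\Omega<\epeat$, apply Theorem~\ref{thm:EAT} with the min-tradeoff function $f_{\v}$ from \lem\ref{lem:fmin}, substitute~\eqref{eq:fv-max}--\eqref{eq:fv-var} into the error terms, and lower-bound $\inf_{\CC\in\Omega}f_{\v}(\freqcbm)$ by evaluating at a corner of the success box. You are also right to single out this last step as the only nontrivial one; the paper itself simply asserts the inequality $f_{\v}(\p)\leq\inf_{\CC\in\Omega}f_{\v}(\freqcbm)$ for $\p=(\gamma(\w-\dpm),1-\gamma)$ without a detailed argument.

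However, the justification you offer for that step is internally inconsistent. Your first observation is correct: writing $f_{\v}(\freqcbm)=\Max{g_{\v}}-\tfrac1\gamma\sum_{c\in\G}\freqc(c)(\Max{g_{\v}}-g_{\v}(\e_c))$ with all coefficients $\Max{g_{\v}}-g_{\v}(\e_c)=(1-\gamma)\Bv(\lambda_{\v}(c)-\lmin)\geq 0$, one sees that $f_{\v}(\freqcbm)$ decreases in \emph{every} coordinate $\freqc(c)$. But this drives the infimum to the \emph{upper} corner $\freqcbm(\G)\to\gamma(\w+\d)$, independently of the signs of the $\lambda_{\v}(c)$. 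Your subsequent appeal to ``$g_{\v}(\bm q)$ is decreasing in $\lv\cdot\bm q$, so the relevant corner is $\w-\dpm$'' is a different minimisation: it would be the right heuristic if $f_{\v}(\freqcbm)$ equalled $g_{\v}(\freqcbm(\G)/\gamma)$, but that identity only holds when $\freqcbm(\G)/\gamma$ sums to one, which is not enforced by $\Omega$. In other words, the two pieces of your argument point to different corners, and only the first one is actually about $f_{\v}$ on the admissible region. The paper sidesteps this by working directly with the affine extension and declaring the value at $\p$; your more detailed attempt exposes the bookkeeping, but the logical thread from ``non-negative coefficients'' to ``corner at $\w-\dpm$'' does not go through as written.
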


If the protocol does not abort during the accumulation procedure, the
user may proceed by applying a quantum-proof strong extractor to the
concatenated output string $\AA\BB$ resulting in a close to uniform bit-string of length approximately $(1-\gamma) n \left(\Av - \Bv\lv \cdot (\w -\dpm)\right)  - n (\errV + \errK ) - \errW$.

	\begin{figure}[t!]
		\begin{center}
			\includegraphics{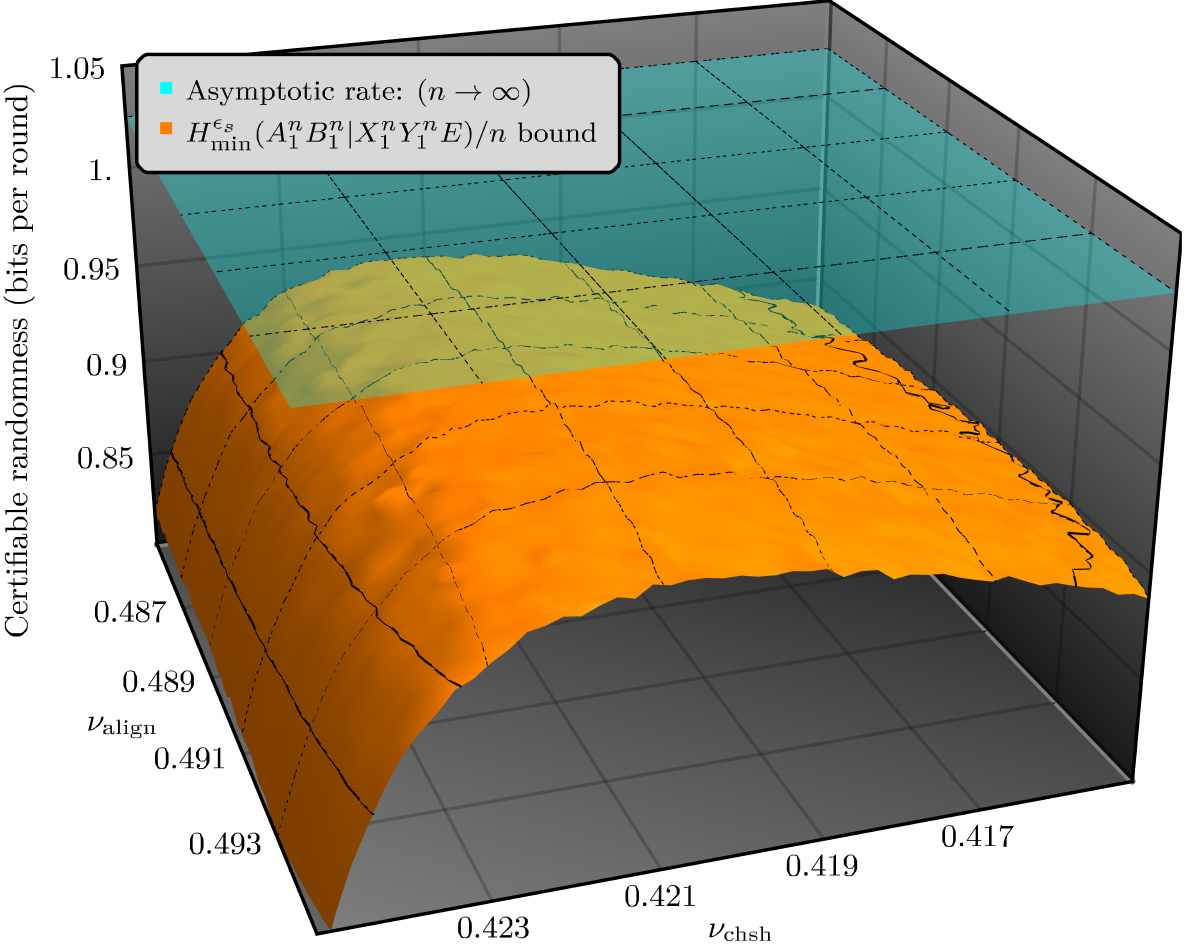}
		\end{center}
		\caption{A plot of the randomness certified as we vary
                  our choice of min-tradeoff function. At each point
                  $\nu$ we evaluate the certifiable randomness
                  \eqref{eq:EAT-total} for the corresponding choice of
                  min-tradeoff function $f_{\nu}$, numerically
                  optimizing the parameter $\beta$ each time. The
                  rough appearance of the surface is a result of finding local optima in the
                  $\beta$ optimization. For reference, we include a
                  plot of the asymptotic rate, i.e.,
                  \eqref{eq:EAT-total} as $n \rightarrow \infty$ and
                  $\d \rightarrow \bm 0$. The protocol parameters used
                  during the calculations are as follows: $n =
                  10^{10}$, $\gamma=  5 \times 10^{-3}$, $\w =(0.49,0.4225,0.0875)$,
                  $\delta_{\chsh}=\delta_{\cross} = 10^{-3}$ and
                  $\epsmooth=\epeat=10^{-8}$.}
		\label{fig:prep-example2}
	\end{figure}
\begin{example}\label{ex:prep2}
	
	 Continuing from \ex\ref{ex:prep1}, we look at the bound on
         the accumulated entropy specified by \eqref{eq:EAT-total} for
         a range of choices of $f_{\v} \in \Fmin$. Again, we are
         considering a quantum implementation with an expected frequency distribution $\w =(0.49,0.4225,0.0875)$. In \fig\ref{fig:prep-example2} we
         see that our choice of min-tradeoff function can have a large
         impact on the quantity of entropy we are able to certify. The plot gives some reassuring numerical evidence that, for the nonlocal game $\G_{\chsh}$, the certifiable randomness is continuous and concave in the family parameter $\v$. 

The min-tradeoff function indexed by our expected frequency distribution,
$f_{\w}$, is able to certify just under $0.939$-bits per
interaction. By applying a gradient-ascent algorithm we
were able to improve this to $0.946$-bits per interaction. In an attempt to avoid getting stuck within local optima we
applied the algorithm several times, starting subsequent iterations at randomly chosen points close to the current optimum.
The optimization led to an improved min-tradeoff function choice $f_{\v^*}$, where $\v^* = (0.491,0.421,0.088)$.
\end{example}

\subsection{Protocol~\protoName}\label{sec:template-protocol}
Protocol~\protoName\ is the concatenation of the accumulation and
extraction sub-procedures. It remains to provide the formal security
statements for a general instance of Protocol~\protoName. We refer to
an untrusted device network $\DD_{AB}$ as \emph{honest} if during each
interaction, the underlying quantum state shared amongst the devices
and the measurements performed in response to inputs remain the same
(i.e., the devices behave as the user expects). Furthermore, each
interaction is performed independently of all others. The following
lemma provides a bound on the probability that an honest
implementation of Protocol \protoName\ aborts.

\begin{lemma}[Completeness of Protocol~\protoName]
\label{lem:complet}
Let Protocol~\protoName\ and all of its parameters be as defined in
\fig\ref{fig:full-protocol}. Then, the probability that an honest
implementation of Protocol~\protoName\ aborts is no greater than $\epcomp$ where 
\begin{equation}\label{eq:completeness}
\epcomp=2\sum_{k=1}^{|\G|}\expo^{-\frac{\gamma\delta_k^2}{3\omega_k}n}. 
\end{equation}

\begin{proof}
During the parameter estimation step of Protocol~\protoName, the protocol aborts if the observed frequency distribution $\freqcbm$ fails to satisfy
\begin{equation*}
\gamma (\w-\d) < \freqcbm(\G) < \gamma (\w+\d).
\end{equation*}
Writing $\freqcbm(\G) = (r_k)_{k=1}^{|\G|}$, $\w=(\omega_k)_{k=1}^{|\G|}$ and $\d=(\delta_k)_{k=1}^{|\G|}$, the probability that an honest implementation of the protocol aborts can be written as  
\begin{align*}
\pabort = \pr{ \bigcup_{k=1}^{|\G|} \bigg\{\big|r_k - \gamma \omega_k \big| \geq \gamma \delta_k\bigg\} } \leq \sum_{k=1}^{|\G|} \pr{\big|r_k - \gamma \omega_k \big| \geq \gamma \delta_k }.
\end{align*}

Restricting to a single element $r_k$ of $\freqcbm(\G)$, we
can model its final value as the binomially distributed random
variable $r_k \sim \tfrac{1}{n}\bin{n}{\gamma \omega_k}$. As a consequence of the
Chernoff bound (cf.\ Corollary~\ref{cor:Chernoff}), and that
$\delta_k<\omega_k$, we have 
$$\pr{\big|r_k - \gamma \omega_k \big| \geq \gamma \delta_k }\leq2\expo^{-\frac{\gamma \delta_k^2 n}{3 \omega_k}}.$$
Applying this bound to each element of the sum individually, we arrive at the desired result.
\end{proof}
\end{lemma}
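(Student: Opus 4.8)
The plan is to rewrite the abort event as a union of large-deviation events for the empirical score frequencies and then bound each one with a two-sided multiplicative Chernoff bound. First I would unpack the abort condition from Step~3 of Protocol~\protoName: writing $\freqcbm(\G) = (r_k)_{k=1}^{|\G|}$, $\w = (\omega_k)_k$ and $\d = (\delta_k)_k$, the protocol aborts exactly when $\freqcbm(\G)$ leaves the box $\gamma(\w-\d) < \freqcbm(\G) < \gamma(\w+\d)$, i.e.\ when $|r_k - \gamma\omega_k| \geq \gamma\delta_k$ for at least one coordinate $k$. A union bound immediately gives $\pabort \leq \sum_{k=1}^{|\G|}\pr{|r_k - \gamma\omega_k| \geq \gamma\delta_k}$.

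Next I would identify the marginal law of each $r_k$ under an honest run. By the honesty assumption the devices use the same state and measurements in every round and act independently across rounds; combined with the independent $T_i \sim \mathrm{Bernoulli}(\gamma)$ coin and the fact that an honest test round produces score $c_k$ with probability $\omega_k$, each round independently records score $c_k$ with probability $\gamma\omega_k$. Hence $n r_k = \sum_{i=1}^n \delta_{c_k, C_i}$ is a sum of $n$ i.i.d.\ $\mathrm{Bernoulli}(\gamma\omega_k)$ variables, so $n r_k \sim \bin{n}{\gamma\omega_k}$. It is worth remarking that the vector $(n r_k)_k$ is jointly multinomial and its coordinates are \emph{not} independent, but the union bound above only uses each one-dimensional marginal, so this is harmless.

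Then I would apply the two-sided multiplicative Chernoff bound (Corollary~\ref{cor:Chernoff}) coordinatewise. Put $\mu_k := n\gamma\omega_k$ for the mean of $n r_k$ and $\eta_k := \delta_k/\omega_k$; the hypothesis $0 \leq \delta_k \leq \omega_k$ ensures $\eta_k \in [0,1]$, which is the range in which the bound applies. Since $|r_k - \gamma\omega_k| \geq \gamma\delta_k$ is the same event as $|n r_k - \mu_k| \geq \eta_k \mu_k$, the Chernoff bound gives $\pr{|r_k - \gamma\omega_k| \geq \gamma\delta_k} \leq 2\,\expo^{-\mu_k \eta_k^2/3} = 2\,\expo^{-n\gamma\delta_k^2/(3\omega_k)}$. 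Summing over $k \in [|\G|]$ yields $\pabort \leq 2\sum_{k=1}^{|\G|}\expo^{-\gamma\delta_k^2 n/(3\omega_k)} = \epcomp$, as claimed.

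This lemma is essentially routine, so there is no serious conceptual obstacle; the only point that needs care is tracking the exponent constant. Concretely I would want the form of the Chernoff bound whose upper tail is $\expo^{-\mu\eta^2/3}$ and whose lower tail is $\expo^{-\mu\eta^2/2}$ (both valid for $\eta \in [0,1]$, using $2+\eta \leq 3$ in the standard upper-tail estimate), so that their sum is bounded by $2\expo^{-\mu\eta^2/3}$ and reproduces the stated $1/(3\omega_k)$ in the exponent; everything else is the union bound and the identification of the binomial marginals.
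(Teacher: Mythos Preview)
Your proposal is correct and follows essentially the same argument as the paper's own proof: express the abort event as a union over coordinates, apply the union bound, identify each $n r_k$ as $\bin{n}{\gamma\omega_k}$ under the honest i.i.d.\ assumption, and invoke the two-sided Chernoff bound (Corollary~\ref{cor:Chernoff}) using $\delta_k \leq \omega_k$. Your write-up is in fact slightly more careful than the paper's in noting the joint multinomial dependence and in tracking the exponent constant, but the method is the same.
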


\begin{remark}
\label{rem:complet} 
The completeness error in the above lemma only considers the
possibility of the protocol aborting during the parameter estimation
stage. However, if the initial random seed is a particularly limited
resource then it is possible that the protocol aborts due to seed
exhaustion. In \lem\ref{lem:input-randomness} we analyse a sampling
algorithm required to select the inputs during device interaction. If
required, the probability of failure for that algorithm could be
incorporated into the completeness error.
\end{remark}

With a secure bound on the quantity of accumulated entropy established by \lem\ref{lem:accumulated-entropy} we can apply a $(k,\epext + 2 \epsmooth)$-strong extractor to $\AA\BB$ to complete the security analysis. Combined with the input randomness discussed in \app\ref{sec:input-rand} we arrive at the following theorem. 

\begin{lemma}[Soundness of Protocol~\protoName]
\label{lem:sound}
Let Protocol~\protoName\ be implemented with some initial random seed
$D$ of length $d$. Furthermore let all other protocol parameters be
chosen within their permitted ranges, as detailed in
\fig\ref{fig:full-protocol}. Then the soundness error of
Protocol~\protoName\ is
$$\epsound=\max(\epext+2\epsmooth,\epeat)\,.$$
\end{lemma}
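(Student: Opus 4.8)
The plan is to reduce \eqref{eq:sound_def} to two facts already in hand: the accumulated-entropy bound \lem\ref{lem:accumulated-entropy} and the defining property of the quantum-proof strong extractor $\ext$ (\defn\ref{def:extractor}). First I would note that, by the way the ideal protocol is defined, the real and ideal protocols produce an identical transcript whenever the protocol aborts (the ideal replaces the output only in the no-abort case), so the entire distinguishing advantage comes from the no-abort event $\Omega$; hence \eqref{eq:sound_def} is exactly the requirement that $\tfrac12\pr{\Omega}\,\|\rho_{ZE|\Omega}-\tau_m\ot\rho_{E|\Omega}\|_1\le\epsound$, where $\rho_{|\Omega}$ denotes the conditional no-abort state and $Z$ the extractor output. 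It therefore suffices to bound this single quantity by $\max(\epext+2\epsmooth,\epeat)$.

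Next I would case-split on the pass probability $p_\Omega:=\pr{\Omega}$. If $p_\Omega<\epeat$, the crude bound $\tfrac12\|\cdot\|_1\le 1$ for normalized states already gives $\tfrac12 p_\Omega\,\|\rho_{ZE|\Omega}-\tau_m\ot\rho_{E|\Omega}\|_1\le p_\Omega<\epeat$, with no entropy estimate needed. If instead $p_\Omega\ge\epeat$, then the hypothesis of \lem\ref{lem:accumulated-entropy} is met, and it certifies $\hmin^{\epsmooth}(\AA\BB|\XX\YY\EE)_{\rho_{|\Omega}}>k$, where $k$ is precisely the parameter for which $\ext$ was chosen to be a quantum-proof $(k,\epext+2\epsmooth)$-strong extractor (one checks that the right-hand side of \eqref{eq:EAT-total} is at least $k$ for the protocol's parameter choices, any remaining slack being the extractor entropy loss $\eploss$). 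Applying \defn\ref{def:extractor} with the adversarial side information taken to be $\XX\YY\EE$ then yields $\tfrac12\|\rho'_{Z\zseed\XX\YY\EE|\Omega}-\tau_m\ot\tau_{|\zseed|}\ot\rho_{\XX\YY\EE|\Omega}\|_1\le\epext+2\epsmooth$; tracing out the seed $\zseed$ and the inputs $\XX\YY$ can only decrease the trace distance, so $\tfrac12\|\rho_{ZE|\Omega}-\tau_m\ot\rho_{E|\Omega}\|_1\le\epext+2\epsmooth$, and multiplying through by $p_\Omega\le 1$ preserves this.

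Combining the two branches gives $\tfrac12 p_\Omega\,\|\rho_{ZE|\Omega}-\tau_m\ot\rho_{E|\Omega}\|_1\le\max(\epext+2\epsmooth,\epeat)$, which is the stated soundness error; note that the seed length $d$ enters only through the requirement that there be enough randomness to run the protocol and seed the extractor, and a shortfall simply causes an abort, which is charged to the completeness error (\rem\ref{rem:complet}) rather than to $\epsound$. I expect no genuine obstacle here --- the lemma is in effect an assembly of earlier results --- but the step needing most care is the bookkeeping of the conditioning on $\Omega$: the entropy bound of \lem\ref{lem:accumulated-entropy}, the hypothesis of \defn\ref{def:extractor}, and the quantity in \eqref{eq:sound_def} must all be read against the \emph{same} conditional state $\rho_{|\Omega}$, and it is the $p_\Omega$ prefactor in \eqref{eq:sound_def} that lets the $p_\Omega<\epeat$ branch close with no entropy statement at all. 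The one genuinely protocol-specific point to verify is that the $k$ used to instantiate $\ext$ really matches the entropy delivered by \eqref{eq:EAT-total} up to $\eploss$.
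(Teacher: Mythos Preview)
Your proposal is correct and follows essentially the same approach as the paper: case-split on whether $p_\Omega$ falls below $\epeat$ (trivial bound) or not (invoke \lem\ref{lem:accumulated-entropy} plus \defn\ref{def:extractor} and monotonicity of the trace norm under partial trace). Your treatment is in fact slightly more explicit than the paper's about which registers are traced out when passing from the extractor guarantee on $\XX\YY\EE$ to the soundness statement on $E$ alone, but the logic is identical.
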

\begin{proof}
  Recall from~\eqref{eq:sound_def} that the soundness error is an
  upper bound on
  $\frac{1}{2}\mathrm{Pr}[\Omega]\cdot\|\rho_{ZE}-\tau_m\ot\rho_E\|_1$. In
  the case $\pr{\Omega}\leq\epeat$, we have $\frac{1}{2}\mathrm{Pr}[\Omega]\cdot\|\rho_{ZE}-\tau_m\ot\rho_E\|_1\leq\epeat$.

  In the case $\pr{\Omega}>\epeat$,
  \lem\ref{lem:accumulated-entropy} gives a bound on the accumulated
  entropy. Combining with the definition of a quantum-proof strong
  extractor \defn\ref{def:extractor} and noting that the norm
  is non-increasing under partial trace we obtain
  $\frac{1}{2}\mathrm{Pr}[\Omega]\cdot\|\rho_{ZE}-\tau_m\ot\rho_E\|_1\leq\epext+2\epsmooth$,
  from which the claim follows.
\end{proof}
\begin{remark}
  By choosing parameters such that $\epeat\leq\epext+2\epsmooth$ we
  can take the soundness error to be $\epext+2\epsmooth$.
\end{remark}

Combining all of the previous results we arrive at the full security statement concerning Protocol~\protoName. 

\begin{theorem}[Security of Protocol~\protoName]\label{thm:security}
Protocol~\protoName\ is an $(\epcomp,\epsound)$-secure randomness expansion protocol producing 
\begin{equation}\label{eq:security-of-qre}
((1-\gamma) \left(\Av - \Bv\lv \cdot (\w -\dpm)\right)  - \errV - \errK)\, n - \errW- \eploss
\end{equation}
random bits at least $\epsound$-close to uniformly distributed, where $\epcomp$, $\epsound$ are given by \lem\ref{lem:complet} (cf. \rem\ref{rem:complet}) and \lem\ref{lem:sound}.
\end{theorem}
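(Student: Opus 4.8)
The plan is to obtain the theorem as a straightforward assembly of three facts already established: the accumulated-entropy bound of \lem\ref{lem:accumulated-entropy} (an application of the EAT), the completeness bound of \lem\ref{lem:complet}, and the soundness bound of \lem\ref{lem:sound}. Since all of the analytic work sits inside those lemmas, what remains is to check that the two conditions of \defn\ref{def:security} hold with the claimed error parameters and that the number of output bits is exactly the quantity in \eqref{eq:security-of-qre}.

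\emph{Completeness.} I would verify the completeness condition of \defn\ref{def:security} by exhibiting the honest strategy it asks for. The expected frequency distribution $\w$ supplied to Protocol~\protoName\ lies in $\Q_{\G}$ by hypothesis, so there is a quantum state together with measurements that reproduce $\w$; running the protocol with a device network that behaves in exactly this way, independently in each round, \lem\ref{lem:complet} bounds the abort probability by $\epcomp = 2\sum_{k=1}^{|\G|}\expo^{-\gamma\delta_k^2 n/(3\omega_k)}$. If one also wants to charge the protocol for the (small) chance that the initial seed is exhausted while sampling inputs, \rem\ref{rem:complet} indicates how the failure probability of the sampling routine of \lem\ref{lem:input-randomness} is absorbed into $\epcomp$.

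\emph{Soundness and output length.} For the soundness condition I would simply cite \lem\ref{lem:sound}, which gives $\epsound = \max(\epext+2\epsmooth,\epeat)$. To read off the output length I would recall the non-trivial case in that lemma's proof, namely $\pr{\Omega} > \epeat$: there \lem\ref{lem:accumulated-entropy} certifies $\hmin^{\epsmooth}(\AA\BB|\XX\YY\EE)_{\rho_{|_{\Omega}}} > k$ with $k = (1-\gamma)n(\Av - \Bv\lv\cdot(\w-\dpm)) - n(\errV+\errK) - \errW$. Treating $\XX\YY\EE$ as a single side-information register so that \defn\ref{def:extractor} applies verbatim, the prescribed quantum-proof $(k,\epext+2\epsmooth)$-strong extractor $\ext$, fed $\AA\BB$ and a fresh uniform seed, outputs a string of length $k - \eploss$ that is $(\epext+2\epsmooth)$-close, hence $\epsound$-close, to uniform and decoupled from the adversary; that length is precisely $((1-\gamma)(\Av - \Bv\lv\cdot(\w-\dpm)) - \errV - \errK)n - \errW - \eploss$, as in \eqref{eq:security-of-qre}. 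In the complementary case $\pr{\Omega} \leq \epeat$ the protocol aborts except with probability at most $\epeat \leq \epsound$, so no output is demanded and soundness still holds.

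\emph{Where the care is needed.} The one genuinely delicate ingredient is the interaction between conditioning on the non-abort event $\Omega$ and the composable definition: one must bound the quantity $\tfrac12\pr{\Omega}\,\|\rho_{ZE} - \tau_m\ot\rho_E\|_1$ from \eqref{eq:sound_def} --- weighted by $\pr{\Omega}$ rather than a bare conditional trace distance --- and the dichotomy on the size of $\pr{\Omega}$ is exactly what prevents the real and ideal protocols from being distinguished by whether or not they abort. That argument is already carried out in the proof of \lem\ref{lem:sound} (and, upstream, in the design of the ideal protocol in \sec\ref{sec:re-protocols}), so at the level of the theorem nothing new is needed; if convenient one chooses parameters with $\epeat \leq \epext + 2\epsmooth$ and then writes $\epsound = \epext + 2\epsmooth$.
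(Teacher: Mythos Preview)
Your proposal is correct and matches the paper's approach: the theorem is presented there without a separate proof, simply as the combination of \lem\ref{lem:complet}, \lem\ref{lem:sound}, and (via the latter) \lem\ref{lem:accumulated-entropy} together with the extractor definition. Your write-up merely makes explicit the assembly the paper leaves implicit.
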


\begin{remark}
	The expected seed length required to execute
        Protocol~\protoName\ is $d \approx \left(\gamma H(\mu) + h(\gamma)\right) n$ (cf.\ \lem\ref{lem:input-randomness}).
\end{remark}
\begin{example}
In \ex\ref{ex:prep1} and \ex\ref{ex:prep2} we used the following choice of protocol parameters: $n = 10^{10}$, $\gamma = 5\times 10^{-3}$, $\delta_1 = \dots = \delta_{|\G|} = 10^{-3}$ and $\epsmooth = \epeat = 10^{-8}$. The resulting implementation of Protocol~\protoName, using the nonlocal game $\G_{\chsh}$ with an expected frequency distribution $\w =(0.49,0.4225,0.0875)$, exhibits the following statistics.

\begin{center}
\begin{tabular}{|l|l|}
\hline
\tbf{Quantity}                                             & \tbf{Value}         \\ \hline                 
Total accumulated entropy before extraction (no abort) & $9.46 \times 10^{9}$           \\ 
Expected length of required seed before extraction     & $5.54 \times 10^{8}$            \\ 
Expected net-gain in entropy (no abort)               & $8.91 \times 10^{9} - \eploss$ \\ 
Completeness error ($\epcomp$)                        & $8.77 \times 10^{-8}$      
\\ \hline
\end{tabular}
\end{center}
\end{example}

\section{Examples}\label{sec:examples}
In this section we demonstrate the use of our framework through the
construction and analysis of several protocols based on different tests of nonlocality. 
To this end, we begin by introducing two families of nonlocal games which we consider alongside $\G_{\chsh}$. 

\nl\noindent\textbf{Empirical behaviour game $(\G_{\FB})$.} 
The \emph{empirical behaviour game} $(\G_{\FB})$ is a nonlocal game that estimates the underlying behaviour of $\DD_{AB}$, i.e., it attempts to characterise each
individual probability $\pabgxy$. We may construct this by associating with each input-output tuple $(a,b,x,y)\in\ABXY$ a corresponding score $c_{abxy} \in \G$ and defining the scoring rule
$$
V_{\FB}(a,b,x,y) := c_{abxy},
$$ 
for each $(a,b,x,y) \in \ABXY$. Then, for any input distribution $\mu_{\FB}$ with full support
on the alphabets $\X\Y$, the collection $\G_{\FB} = (\mu_{\FB},V_{\FB})$
forms a nonlocal game. Moreover, for agents playing according to some strategy $\p \in \Q$,
the expected frequency distribution over the scores is precisely the joint distribution,
\begin{align*}
\omega_{\FB}(a,b,x,y) &= \mu_{\FB}(x,y) p(a,b|x,y) \\
&= p(a,b,x,y).
\end{align*}
As $\G_{\FB}$ can be defined for any collection of input-output
alphabets, we can indicate the size of these alphabets as
superscripts, i.e., $\G_{\FB}^{|\X||\Y||\A||\B|}$. However, since we
only consider binary output alphabets in this work, we will not
include their sizes in the superscript, i.e., we will write
$\G^{23}_{\FB}$ instead of $\G^{2322}_{\FB}$.

\begin{remark}\label{rem:redundancies}
	The scoring rule for $\G_{\FB}$, as defined above, has several redundant components, arising from normalisation and the no-signalling conditions. In fact, there are only $[(|\A|-1)|\X| + 1][(|\B|-1)|\Y| +1] -1$ free parameters~\cite{Cirelson93}. Knowing this we can reduce the number of scores in our nonlocal game and, in turn, the number of constraints we impose in our SDPs.\footnote{It is important to remove redundant constraints in practice as they can lead to numerical instabilities.}
\end{remark}

~\\
\noindent
\textbf{Joint correlators game $(\G_{\al})$.}
Specifically, for each $(x,y) \in \X\Y$ we define a score $c_{xy}$ and a scoring rule  
$$
V_{\al}(a,b,x,y) := \begin{cases}
c_{xy} \qquad &\text{if } a=b \\
\cnorm \qquad &\text{otherwise}.
\end{cases}
$$ 
That is, for a pair of inputs $(x,y)$ the score is recorded as $c_{xy}$ whenever
the agents' outcomes agree. Otherwise, they record some normalization
score $\cnorm$. The input distribution can then be specified in some way: we use the uniform distribution over $\X\Y$. We refer to this game by the symbol $\G_{\al}$ and, as before, we will indicate the sizes of the input alphabets with superscripts.

\begin{figure}[t]
\begin{center}
	\begin{subfigure}{0.48\textwidth}
		\includegraphics[trim= 5pt 5pt 5pt 5pt,clip=true,width=\textwidth]{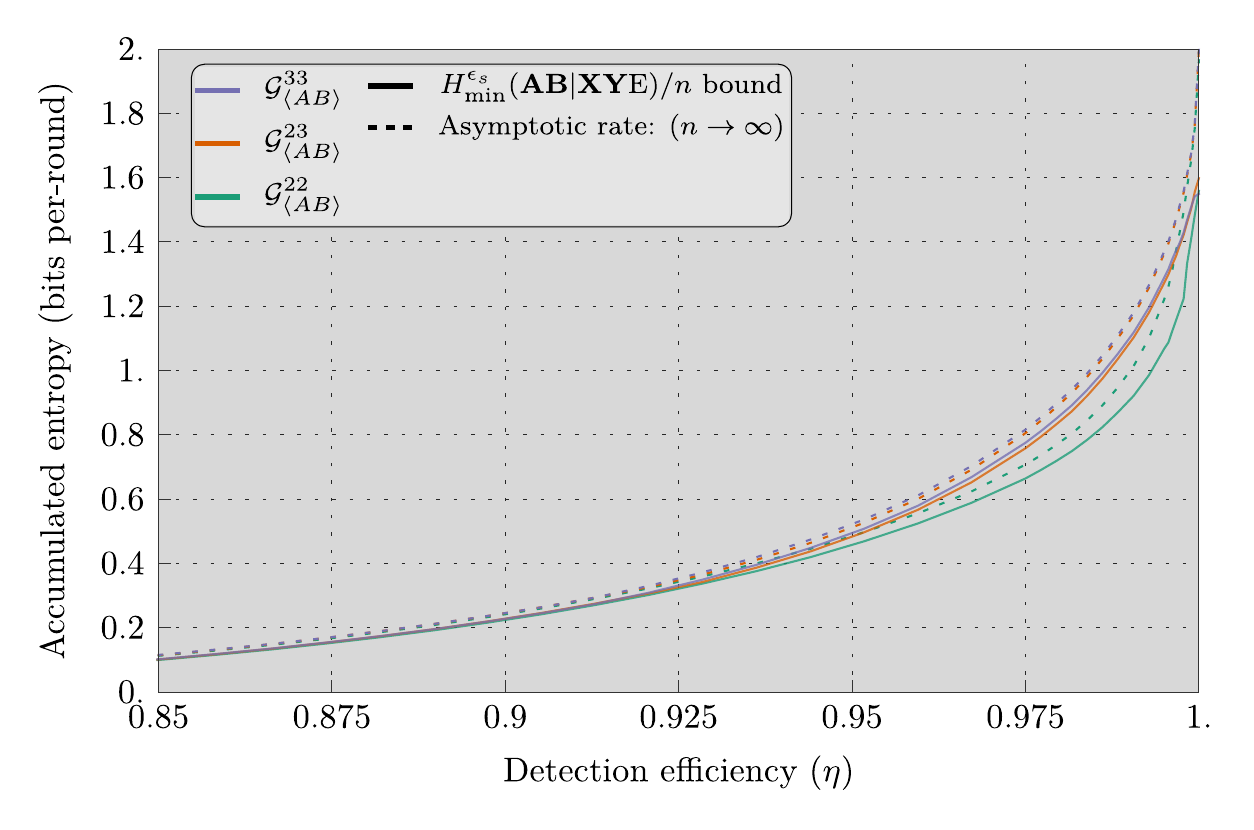}
			\caption{$\G_{\al}$ protocols}\vspace{-0.2cm}\label{subfig:corrplot}
	\end{subfigure}
	\begin{subfigure}{0.48\textwidth}
		\includegraphics[trim= 5pt 5pt 5pt 5pt,clip=true,width=\textwidth]{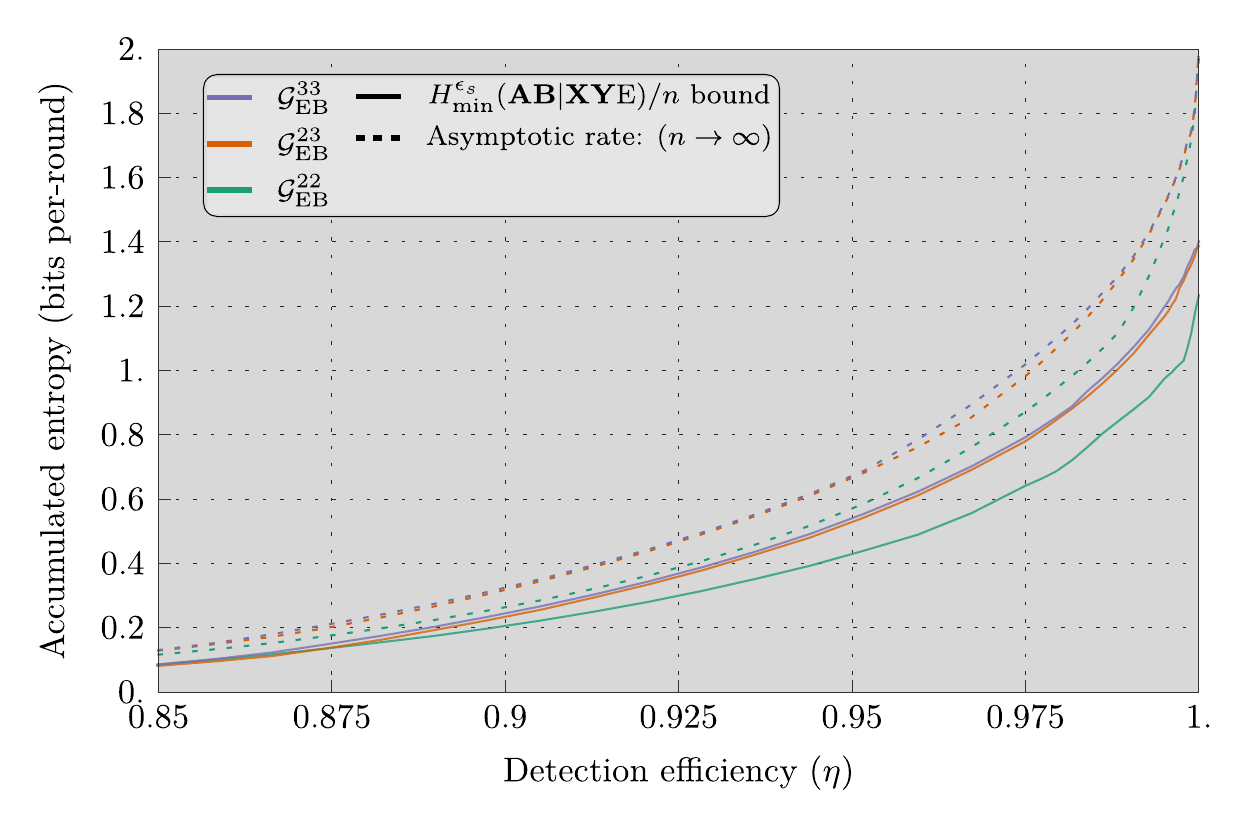}
				\caption{$\G_{\FB}$ protocols}\vspace{-0.2cm}\label{subfig:ebplot}
	\end{subfigure}
	\begin{subfigure}{\textwidth}
	\centering\vspace{0.25cm}
		\includegraphics[trim= 5pt 5pt 5pt 5pt,clip=true]{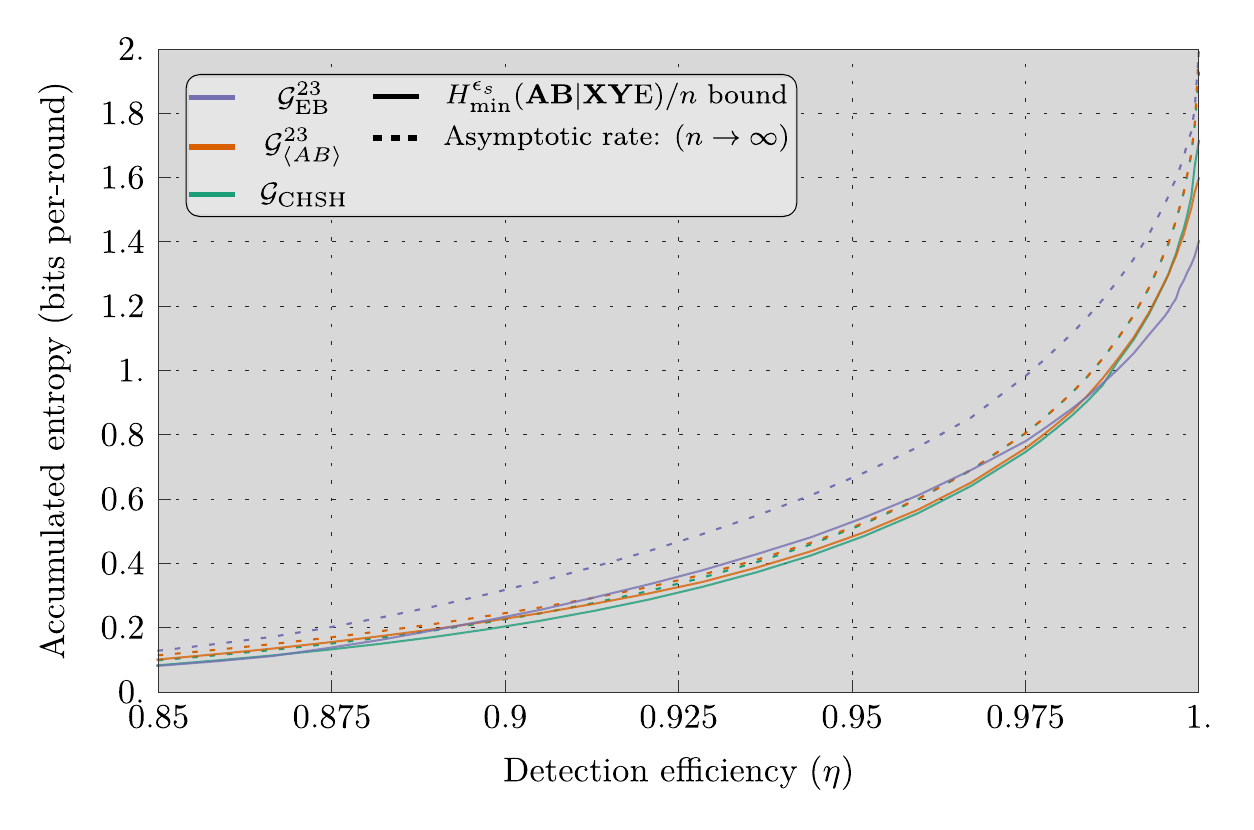}
					\caption{Comparison of protocols in the $(2,3)$-scenario.}\vspace{-0.25cm}			\label{subfig:23comparison}
	\end{subfigure}
\caption{A plot of the asymptotic and EAT-rates for protocols using the nonlocal game families $\G_{\al}$, $\G_{\FB}$ and $\G_{\chsh}$.}\label{fig:detection-plots}
\end{center}
\end{figure}

\subsection{Rates in the presence of inefficient detectors}\label{sec:noise-robustness-plots}
We now compare the accumulation rates of protocols built using the nonlocal games described above.  We retain the protocol parameter choices from the previous examples: $n=10^{10}$, $\gamma = 5 \times 10^{-3}$ and $\epsmooth = \epeat = 10^{-8}$, except we now set the confidence interval width parameter to
\begin{equation}
\delta_k = \sqrt{\frac{ 3\,\omega_k \ln(2/\epcomp)}{\gamma n}},
\end{equation}
in order to have a similar completeness error $\epcomp \approx
10^{-12}$ across the different protocols.\footnote{In practice one
  would fix the soundness error of the protocol. However, because the
  soundness error is also dependent on the extraction phase we instead
  assume independence of rounds and fix the completeness error.}

We suppose that the devices operate
by using a pure, entangled state of the form
\begin{equation}
\ket{\psi(\theta)}_{AB} = \cos(\theta) \ket{00} + \sin(\theta) \ket{11},
\end{equation}
for $\theta \in (0,\pi/4]$. We denote the corresponding density
operator by $\rhotheta = \ketbra{\psi(\theta)}$. For simplicity we
restrict to projective measurements within the $x$-$z$ plane of the
Bloch-sphere, i.e., measurements $\lbrace \Pi(\varphi),\mathbb{1}-\Pi(\varphi)\rbrace$, with the projectors defined by
\begin{equation}
\Pi(\varphi) = \begin{pmatrix}
\cos^2(\varphi/2) & \cos(\varphi / 2) \sin(\varphi / 2) \\
\cos(\varphi / 2) \sin(\varphi/2) & \sin^2(\varphi/2)
\end{pmatrix}
\end{equation}
for $\varphi \in [0,2\pi)$. We denote the projectors associated with
the $j^{\text{th}}$ outcome of the $i^{\text{th}}$ measurement by
$A_{j|i}$ and $B_{j|i}$. The elements of the devices' behaviour can
then be written as
\begin{equation}\label{eq:Born-rule}
\pabgxy = \tr{\rhotheta(A_{a|x}\otimes B_{b|y})}.
\end{equation}

Our analysis is focussed on how the accumulation rates differ when the devices operate 
with inefficient detectors. Heralding can be used to
account for losses incurred during state transmission and has been
used to develop novel device-independent
protocols~\cite{MKSCBA18}. However, losses that occur within a user's
laboratory cannot be ignored without opening a
detection loophole~\cite{pearle}. Inefficient detectors are a major
contributor to the total experimental noise, so robustness to
inefficient detectors is a necessary property for any practical
randomness expansion protocol. We characterize detection efficiency by
a single parameter $\eta \in [0,1]$, representing the (independent)
probability with which a measurement device successfully measures a
received state and outputs the result.\footnote{For simplicity, we
  make the additional assumption that the detection efficiencies are constant
  amongst all measurement devices used within the protocol.} To deal
with failed measurements we assign outcome $0$ when this occurs. Combining this with~\eqref{eq:Born-rule}, we may write the behaviour as
\begin{equation}
\begin{aligned}
\pabgxy &= \eta^2\tr{\rhotheta(A_{a|x}\otimes B_{b|y})} + (1-\eta)^2 \delta_{0a}\delta_{0b} \\
&+ \eta(1-\eta)\left(\delta_{0a}\tr{\rhotheta(\identity\otimes B_{b|y})} +  \delta_{0b}\tr{\rhotheta(A_{a|x}\otimes \identity)}\right).
\end{aligned}
\end{equation}

For each protocol we consider lower bounds on two quantities: the
pre-EAT gain in min-entropy from a single interaction,
$\hmin(AB|XYE)$, and the \emph{EAT-rate},
$\hmin^{\epsmooth}(\AA\BB|\XX\YY\EE)/n$. The former quantity, which we
refer to as the \emph{asymptotic rate}, represents the maximum accumulation
rate achievable with our numerical technique. It is a lower bound on $\hmin^{\epsmooth}(\AA\BB|\XX\YY\EE)/n$, specified by \eqref{eq:EAT-total}, as $n \rightarrow \infty$ and $\gamma$, $\delta \rightarrow 0$.\footnote{We would really like to plot $H(AB|XYE)$ and the corresponding EAT-rate derived from it. However, in general we do not have suitable techniques to access these quantities in a device-independent manner.} Comparing these two quantities gives a clear picture of the amount of entropy that we lose due to the effect of finite statistics. 

With inefficient detectors, partially entangled states can exhibit
larger Bell-inequality violations than maximally entangled
states~\cite{Eberhard}. To account for this we optimize both the state
and measurement angles at each data point using the iterative
optimization procedure detailed in~\cite{ATL16}. All programs were relaxed to the second level of the NPA hierarchy using \cite{ncpol2sdpa} and the resulting SDPs were computed using the SDPA solver \cite{sdpa}. The results of these
numerics are displayed in \fig\ref{fig:detection-plots}.

\begin{figure}[t]
	\begin{center}
	\includegraphics[]{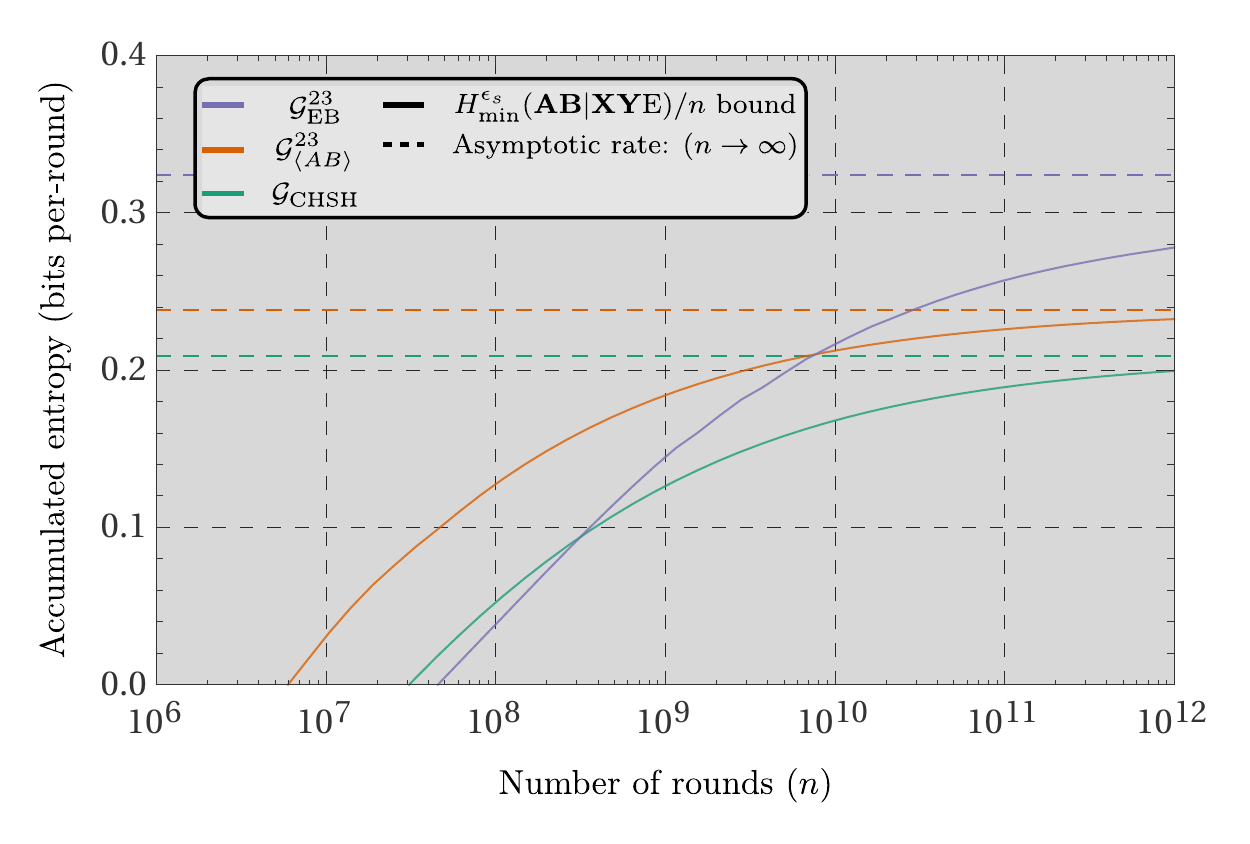}
	\end{center}\vspace{-0.5cm}
\caption{Comparison illustrating the EAT-rates (cf. \eqref{eq:EAT-total}) converging to the asymptotic rates for protocols based on different nonlocal games. The rates were derived by assuming a qubit implementation of the protocols with a detection efficiency $\eta = 0.9$, optimizing the state and measurement angles in order to maximise the asymptotic rate. Then, for each value of $n$ we optimized the min-tradeoff function choice and $\beta$ parameter and noted the resulting bound on $\hmin^{\epsmooth}$. To ensure that we approach the asymptotic rate as $n$ increased we set $\gamma = \delta_1 = \dots = \delta_{|\G|} = n^{-1/3}$, resulting in a constant completeness error across all values of $n$.}
\label{fig:RD-comparison}
\end{figure}

In \fig\ref{subfig:corrplot} and \fig\ref{subfig:ebplot} we see that in both families of protocols considered, an increase in the number of inputs leads to higher rates. This increase is significant when one moves from the $(2,2)$-scenario to the $(2,3)$-scenario. However, continuing this analysis for higher numbers of inputs we find that any further increases appear to have negligible impact on the overall robustness of the protocol.\footnote{This could also be an artefact of the assumed restriction to qubit systems.} Whilst all of the protocols achieve asymptotic rates of $2$ bits per round when $\eta = 1$, their respective EAT-rates at this point differ substantially. 
In \fig\ref{subfig:23comparison} we see a direct comparison between
protocols from the  different families. The plot shows that, as
expected, entropy loss is greater when using the nonlocality test
$G_{\FB}^{23}$ as opposed to the other protocols. In particular, for
high values of $\eta$ we find that we would be able to certify a
larger quantity of entropy by considering fewer scores. However, it is still worth noting that this entropy loss could be reduced by choosing a more generous set of protocol parameters, e.g., increasing $n$ and decreasing $\delta$.

Increasing $n$ can be difficult in practice due to restrictions on the overall runtime of the protocol. Not only does it take longer to collect the statistics within the device-interaction phase, but it may also increase the runtime of the extraction phase \cite{MPS12}. In \fig\ref{fig:RD-comparison}
we observe how quickly the various protocols converge on their
respective asymptotic rates as we increase $n$. Again we find that, due to the finite-size effect, entropy loss when using $\G^{23}_{\FB}$ is greater than that observed in the other protocols. In particular, we see that for protocols with fewer than $10^{10}$ rounds, it is advantageous to use $\G_{\al}^{23}$. From the perspective of practical implementation, \fig\ref{subfig:23comparison} and \fig\ref{fig:RD-comparison} highlight the benefits of a flexible protocol framework wherein a user can design protocols tailored to the scenario under consideration.

\begin{figure}
	\begin{center}
		\includegraphics[scale=1]{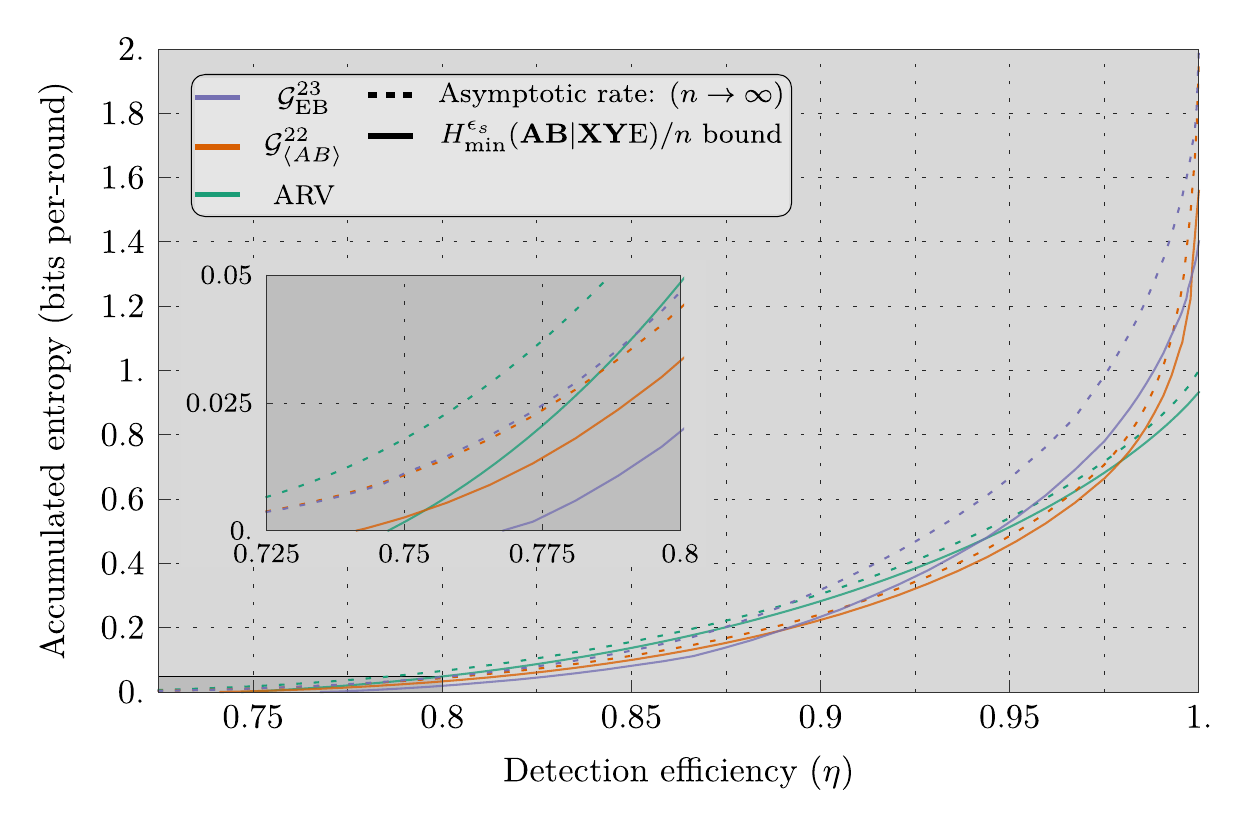}
	\end{center}
	\caption{Comparison between the certifiable accumulation rates
          of QRNE protocols based on $\G_{\chsh}$, $\G_{\FB}^{23}$ and
          Protocol ARV from~\cite{ARV} on qubit systems with
          inefficient detectors
          (cf. \fig\ref{fig:detection-plots}). The rates of Protocol
          ARV are also evaluated using the improved EAT
          statement~\cite{DF}. For Protocol ARV, we use the one-sided
          von Neumann entropy bound, so the maximum rate is one bit
          per round, but because we can directly get the single-round
          von Neumann entropy, the rate initially falls more slowly with
          decreasing detection efficiency than for the other
          protocols.} \label{fig:detection-AFRV-comparison}
	
\end{figure}

It is also important to compare the rates of instances of
Protocol~\protoName~with other protocols from the literature, in
particular the protocol of~\cite{ARV} (ARV). In~\cite{ARV}, the
min-tradeoff functions are constructed from a tight bound on the
single-party von Neumann entropy, $H(A|XE)$, which is given in terms
of a CHSH inequality violation~\cite{ABGMPS}. In
\fig\ref{fig:detection-AFRV-comparison} we compare the rates of ARV
with $\G_{\al}^{22}$ and $\G_{\FB}^{23}$ for entangled qubit systems
with inefficient detectors. To make our comparison fair, we have also
computed the rates for Protocol ARV using the improved EAT
bound\footnote{Note that we always use the direct bound on the von
  Neumann entropy when considering Protocol ARV, rather than forming a
bound via the min-entropy}. As the rates of Protocol ARV are derived
from the entropy accumulated by a single party their rates are capped
at one bit per round.

In contrast, the semidefinite programs grant us access to bounds on
the entropy produced by both parties and we are therefore able to
certify up to two bits per round. In
\fig\ref{fig:detection-AFRV-comparison}, this advantage is observed in
the high detection efficiency
regime. \fig\ref{fig:detection-AFRV-comparison} also highlights a
significant drawback of our technique, which stems from our use of the
inequality $H(AB|XYE) \geq H_{\min}(AB|XYE)$. In particular, we see
that for $\eta < 0.9$, the $H(A|XE)$ bound for the CHSH inequality is
already greater than the $H_{\min}(AB|XYE)$ established for the
empirical behaviour. Therefore, in the asymptotic limit
($n\rightarrow\infty$) the min-entropy bounds for these protocols will
produce strictly worse rates in this regime. For the finite $n$ we
have chosen, $n = 10^{10}$, it appears that for the majority of smaller $\eta$, it is advantageous to use the ARV protocol over the protocols derived from the framework. Nevertheless, looking at the threshold detection efficiencies, i.e. the minimal detection efficiency required to achieve positive rates, we find that some protocols from our framework are able to again beat the rates established for Protocol ARV. Looking at the inset plot in \fig\ref{fig:detection-AFRV-comparison} we see that $\G_{\al}^{22}$ has a smaller threshold efficiency than that of Protocol ARV for the chosen protocol parameters. Interestingly, this shows that $\G_{\al}^{22}$ is capable of producing higher rates than Protocol ARV in both the low and the high detection efficiency regimes, with the improvement for low detection efficiencies being of particular relevance to experimental implementations. Importantly, this shows that protocols from the framework are of practical use for finite $n$ in spite of the losses coming from the use of $H(AB|XYE) \geq H_{\min}(AB|XYE)$. 

\begin{remark}
We have so far considered the only noise to be that caused by
inefficient detectors. However, it is natural to ask how other sources
of noise affect our results. By replacing the states used with Werner
states~\cite{Werner}, we find that the results remain robust---they
remain qualitatively the same, but for small Werner state noise, all
of the graphs shift to slightly lower rates.  For this reason we
choose not to include the graphs here.
\end{remark}

\section{Conclusion}\label{sec:conclusion}
We have shown how to combine device-independent bounds on the guessing
probability with the EAT, to create a versatile method for analysing
quantum-secure randomness expansion protocols. The construction was
presented as a template protocol from which an exact protocol can be
specified by the user. The relevant security statements and quantity
of output randomness of the derived protocol can then be evaluated
numerically. A Python package~\cite{dirng-github} accompanies this
work to help facilitate implementation of the framework. In
\sec\ref{sec:examples} we illustrated the framework, applying it to
several example protocols, with parameters chosen to reflect the capabilities of current nonlocality tests. We then compared the robustness of these
protocols when implemented on qubit systems with inefficient
detectors. Our analyses show that, within a broadly similar
experimental setup, different protocols can have significantly
different rates, and hence that it is worth considering small
modifications to a protocol during their design. We also compared the rates of a selection of 
our protocols to the protocol presented in~\cite{ARV} (ARV). Interestingly, we found that some of the protocols from 
the framework are able to achieve higher rates than Protocol ARV in both the high and low detection efficiency regimes. In particular,
the higher rates for low detection efficiencies is of great importance for actual experimental implementations.

Although the framework produces secure and robust protocols, there
remains scope for further improvements. For example, our work relies
on the relation $H(AB|XYE) \geq \hmin(AB|XYE)$ which is far from
tight. The resulting loss can be seen when one compares the asymptotic
rate of $\G_{\chsh}$ in \fig\ref{subfig:23comparison} with those
presented in~\cite{ARV} (see
\fig\ref{fig:detection-AFRV-comparison}). Several alternative
approaches could be taken in order to reduce this loss. Firstly, the
above relation is part of a more general ordering of the conditional
R\'enyi entropies.\footnote{The R\'enyi entropies are one of many
  different entropic families that include the von Neumann entropy as
  a limiting case. Any such family could be used if they satisfy an
  equivalent relation.} If one were able to develop efficient
computational techniques for computing device-independent lower bounds
on one of these alternative quantities we would expect an immediate
improvement. Furthermore, dimension-dependent bounds may be applicable
in certain situations. For example, it is known that for the special
case of $n$-party, $2$-input, $2$-output scenarios it is sufficient to
restrict to qubit systems~\cite{Cirelson93,ABGMPS}.

Optimizing the choice of min-tradeoff function over $\Fmin$ is a
non-convex and not necessarily continuous
problem~\cite{curchod2017unbounded}. Our analysis in
\sec\ref{sec:examples} used a simple probabilistic gradient ascent
algorithm to approach this problem. We found that for certain
protocols, in particular $\G_{\FB}^{22}$, the optimization had to be
repeated many times before a good choice of min-tradeoff function was
found.

As \fig\ref{fig:detection-AFRV-comparison} shows, the framework is
capable of producing protocols that are of immediate relevance to
current randomness expansion experiments.  It is therefore a
worthwhile endeavour to search for protocols within the framework that
provide high EAT-rates in different parameter regimes. Investigations
into the randomness certification properties of nonlocality tests with
larger output alphabets or additional parties could be of
interest. However, increasing either of these parameters is likely to
increase the influence of finite-size effects. Alternatively, one
could try to design more economical nonlocality tests by combining
scores that are of a lesser importance to the task of certifying
randomness. Intuitively, for a score $c \in \mathcal{C}$, the
magnitude of of $\lambda(c)$ in the min-tradeoff function indicates
how important that score is for certifying entropy. If $|\lambda(c)|$
is large then this score is `important' in the sense that any small
deviations in the expected frequency of that score, $\omega(c)$, will
have a large impact on the amount of certifiable entropy.  Another
approach to designing good nonlocality tests would be to take
inspiration from \cite{NPS14,BSS14} wherein the authors showed how to
derive the optimal Bell-expressions for certifying randomness. A
nonlocal game could then be designed to encode the constraints imposed
by this optimal Bell-expression. An example of such a game would be to
assign a score $+1$ to all $(ABXY)$ that have a positive coefficient
in the optimal Bell-expression and a score of $-1$ to all those with
negative coefficients. The input distribution of the nonlocal game
could then be chosen as such to encode the relative weights of the
coefficients.

Finally, our computational approach to the EAT considered only the
task of randomness expansion. Our work could be extended to produce security proofs for other device-independent tasks. Given that the EAT has already been successfully applied to a wide range of problems~\cite{RMW18,RMW16,AK17-DIRA,AB17,BMP17}, developing good methods for robust min-tradeoff function constructions represents an important step towards practical device-independent security. 

\subsection*{Acknowledgments}
We are grateful for support from the EPSRC's Quantum Communications
Hub (grant number EP/M013472/1), an EPSRC First Grant (grant number
EP/P016588/1) and the WW Smith fund.


\pagebreak
\appendix
\section{Table of parameters and notation}
\begin{center}
\begin{tabular}{|l|l|l|}
\hline
\textbf{Notation}   & \textbf{Description}  & \multicolumn{1}{l|}{\textbf{Initial reference}} \\
\hline
$\DD$ & A collection of untrusted devices. & Section~\ref{sec:devices-and-games} \\
\hline
$\G$  & A nonlocal game. & Definition~\ref{def:nonlocalgame} \\
\hline
$\Q_\G$  & Set of expected frequency distributions on $\G$ using quantum strategies.     & Section~\ref{sec:devices-and-games} \\
\hline
$\Q^{(k)}_\G$  & Set of expected frequency distributions on $\G$ using strategies from $\Qk$. & Section~\ref{sec:devices-and-games} \\
\hline
$\v, \w$  & Expected frequency distributions over scores of a nonlocal game. & Equation~\ref{eq:expected-freq-dist} \\
\hline
$\pk$, $\dk$  & Solutions to the $k$-relaxed primal and dual guessing probability programs.  & Program~\ref{prog:relaxed-primal} and \ref{prog:relaxed-dual} \\
\hline
$\lv$ & Feasible point of the dual guessing probability program with parameter $\v$. & Section~\ref{sec:devices-and-games} \\
\hline
$\d$  & Vector of statistical confidence interval widths. & Equation~\ref{eq:success-event}. \\
\hline
$\dpm$ & $\d$ with elements signed in accordance with a given $\l$. & Lemma~\ref{lem:accumulated-entropy} \\
\hline
$\A, \B$           & Devices' output alphabets.                                     & Section~\ref{sec:notation} \\ 
\hline
$\X, \Y$           & Devices' input alphabets.                                      & Section~\ref{sec:notation} \\ 
\hline
$n \in \mathbb N$     & Number of rounds in the device-interaction phase.                                          & Section~\ref{sec:accumulation-procedure} \\ 
\hline
$\gamma \in (0,1)$    & Probability that any given round is a test round.                    & Section~\ref{sec:accumulation-procedure} \\ 
\hline
$A_i, B_i$     & Devices' outputs for the $i^{\text{th}}$ round.                               & Section~\ref{sec:accumulation-procedure} \\ 
\hline
$X_i, Y_i$     & Devices' inputs for the $i^{\text{th}}$ round.                               & Section~\ref{sec:accumulation-procedure} \\ 
\hline
$\Ci$  & EAT-score for the $i^{\text{th}}$ round.                                            & Section~\ref{sec:accumulation-procedure} \\ 
\hline
$\freqcbm$  & Frequency distribution induced by score transcript $\CC=(C_1,\dots,C_n)$.                                             & Equation \ref{eq:frequency-distribution-scores} \\ 
\hline
$\Omega$ & Event that the protocol does not abort.
& Equation~\ref{eq:success-event} \\
\hline
$\ext$    & Strong quantum-secure randomness extractor. & Definition~\ref{def:extractor} \\
\hline
$\epcomp$  & Completeness error of Protocol~\protoName. & Lemma~\ref{lem:complet}                                   \\
\hline
$\epsound$ & Soundness error of Protocol~\protoName. & Lemma~\ref{lem:sound} \\
\hline
$\eps_s$  & Smoothing parameter for $H_{\min}$. & Equation~\ref{eq:smooth-entropy}                                                     \\
\hline
$\epeat$ & Tolerance of unlikely success events. & Lemma~\ref{lem:accumulated-entropy}.                                                     \\
\hline
$\errV$              & EAT error term (Variance).                                                     & Theorem~\ref{thm:EAT} \\
\hline
$\errK$              & EAT error term (Remainder).                                                    & Theorem~\ref{thm:EAT} \\
\hline
$\errW$              & EAT error term (Pass probability).                                                     & Theorem~\ref{thm:EAT} \\
\hline
$\epext$              & Extractor error.                                                     & Definition~\ref{def:extractor} \\
\hline
$\eploss$             & Entropy lost during extraction.                                      & Section~\ref{sec:extractors} \\
\hline
\end{tabular}%
\end{center}

\section{Input Randomness}
\label{sec:input-rand}
Here we quantify the length of the initial random seed required to execute an instance of Protocol~\protoName. This supply of random bits is necessary for selecting the devices' inputs and seeding the extractor. In the forthcoming analysis we ignore the latter as this quantity depends on the choice of extractor. Instead, we look at the process of converting a uniform private seed into device inputs required for running Protocol~\protoName. We follow a similar procedure to that used in~\cite{ARV}, modifying the algorithm slightly in order to extract explicit bounds.

\subsection{Statistical bounds}
We begin by stating some standard statistical bounds.  The first is
commonly known as the Chernoff bound~\cite{chernoff-bound}, although
we take our formulation from~\cite{HR90}.  This provides a convenient
bound on the deviation of the sum of random variables from the
expected value.
\begin{lemma}[Chernoff bound]\label{lem:Chernoff}
Let $X_i$ be independent binary random variables for $i=1,\ldots,n$, $S=\sum_i
X_i$ and $\mu=\E{S}$. Then for $0\leq t\leq1$
\begin{align*}
\pr{S\geq(1+t)\mu}&\leq\expo^{-t^2\mu/3}\\
\pr{S\leq(1-t)\mu}&\leq\expo^{-t^2\mu/2}\,.
\end{align*}
\end{lemma}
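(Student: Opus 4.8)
The plan is to run the standard Chernoff exponential-moment argument and then reduce the two tail bounds to elementary one-variable inequalities.

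For the upper tail I would fix $\lambda>0$ and apply Markov's inequality to $\expo^{\lambda S}$: since $S\geq(1+t)\mu$ is equivalent to $\expo^{\lambda S}\geq\expo^{\lambda(1+t)\mu}$, we get $\pr{S\geq(1+t)\mu}\leq\expo^{-\lambda(1+t)\mu}\,\E{\expo^{\lambda S}}$. By independence, $\E{\expo^{\lambda S}}=\prod_{i=1}^{n}\E{\expo^{\lambda X_i}}$, and with $p_i:=\pr{X_i=1}$ we have $\E{\expo^{\lambda X_i}}=1+p_i(\expo^{\lambda}-1)\leq\expo^{p_i(\expo^{\lambda}-1)}$ from $1+x\leq\expo^{x}$. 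Since $\sum_i p_i=\mu$, this gives $\E{\expo^{\lambda S}}\leq\expo^{(\expo^{\lambda}-1)\mu}$ and hence
\[
\pr{S\geq(1+t)\mu}\leq\expo^{\mu\left[(\expo^{\lambda}-1)-\lambda(1+t)\right]}.
\]
The bracketed exponent is minimised over $\lambda>0$ at $\lambda=\ln(1+t)$ (positive for $t>0$), where it equals $t-(1+t)\ln(1+t)$. For the lower tail I would repeat the computation with $\expo^{-\lambda S}$, $\lambda>0$, in place of $\expo^{\lambda S}$, obtaining $\pr{S\leq(1-t)\mu}\leq\expo^{\mu[(\expo^{-\lambda}-1)+\lambda(1-t)]}$, whose exponent is minimised at $\lambda=-\ln(1-t)$, giving $-t-(1-t)\ln(1-t)$.

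It then remains to prove the scalar inequalities
\[
t-(1+t)\ln(1+t)\leq-\tfrac{t^{2}}{3}\qquad\text{and}\qquad -t-(1-t)\ln(1-t)\leq-\tfrac{t^{2}}{2},\qquad 0\leq t\leq1.
\]
Granting these, and using $\mu=\E{S}\geq0$ so that $x\mapsto\expo^{\mu x}$ is increasing, the two displayed probability bounds immediately become $\expo^{-t^{2}\mu/3}$ and $\expo^{-t^{2}\mu/2}$ respectively. For the first inequality I would use the sharper known estimate $t-(1+t)\ln(1+t)\leq-t^{2}/(2+t)$ (a short monotonicity argument: the difference of the two sides vanishes at $t=0$ and is nonincreasing on $[0,1]$) together with $2+t\leq3$ on $[0,1]$. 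For the second, expanding $\ln(1-t)=-\sum_{k\geq1}t^{k}/k$ gives $-t-(1-t)\ln(1-t)=-\sum_{k\geq2}\bigl(\tfrac1{k-1}-\tfrac1k\bigr)t^{k}$, and discarding the nonnegative terms with $k\geq3$ leaves $\leq-\tfrac12 t^{2}$.

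Everything above is textbook; the only, very minor, obstacle is carefully verifying the two elementary inequalities, which is a routine single-variable calculus exercise.
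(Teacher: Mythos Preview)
Your argument is correct and is the standard exponential-moment (Cram\'er--Chernoff) proof; the paper itself does not prove this lemma but simply states it as a standard result, citing~\cite{chernoff-bound} and~\cite{HR90}, so your write-up is exactly the kind of derivation those references contain.
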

\begin{corollary}\label{cor:Chernoff}
For $r\leq\mu$ we have $\displaystyle\pr{\left|S-\mu\right|\geq r}\leq2\expo^{-r^2/(3\mu)}$.
\end{corollary}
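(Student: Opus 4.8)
The plan is to reduce the two-sided deviation bound to the two one-sided tail bounds already supplied by Lemma~\ref{lem:Chernoff}. First I would apply a union bound to split the event, writing
\[
\pr{|S - \mu| \geq r} \leq \pr{S \geq \mu + r} + \pr{S \leq \mu - r}.
\]
Then I would set $t = r/\mu$. Since $0 \leq r \leq \mu$ this $t$ lies in $[0,1]$, which is exactly the range required by Lemma~\ref{lem:Chernoff} (we may assume $\mu > 0$, since $\mu = 0$ forces every $X_i$ to vanish, hence $S = 0$ and $r = 0$, making the claim trivial). Applying the two parts of Lemma~\ref{lem:Chernoff} with this choice of $t$, and noting $(1+t)\mu = \mu + r$ and $(1-t)\mu = \mu - r$, gives $\pr{S \geq \mu + r} \leq \expo^{-r^2/(3\mu)}$ and $\pr{S \leq \mu - r} \leq \expo^{-r^2/(2\mu)}$.

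The final step is to observe that $\expo^{-r^2/(2\mu)} \leq \expo^{-r^2/(3\mu)}$, because $r^2/(2\mu) \geq r^2/(3\mu) \geq 0$ and $x \mapsto \expo^{-x}$ is decreasing. Combining this with the union bound above yields
\[
\pr{|S-\mu| \geq r} \leq \expo^{-r^2/(3\mu)} + \expo^{-r^2/(2\mu)} \leq 2\,\expo^{-r^2/(3\mu)},
\]
which is the claimed bound.

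I do not expect any real obstacle here; the statement is a routine corollary. The only mild points of care are the boundary case $t = 1$ (that is, $r = \mu$), where the lower-tail event becomes $\{S \leq 0\} = \{S = 0\}$ but is still covered by Lemma~\ref{lem:Chernoff}, and the degenerate case $\mu = 0$ handled above. Deliberately using the weaker constant $3$ in the exponent—rather than tracking the sharper $2$ available for the lower tail—is precisely what lets both tails be absorbed into a single term with the factor $2$ out front.
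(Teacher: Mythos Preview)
Your argument is correct and is precisely the intended derivation: the paper states the corollary without proof, leaving it as an immediate consequence of Lemma~\ref{lem:Chernoff} via the union bound, the substitution $t=r/\mu\in[0,1]$, and the observation $\expo^{-r^2/(2\mu)}\leq\expo^{-r^2/(3\mu)}$. Nothing further is needed.
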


In addition to this, we also make use of Hoeffding's inequality \cite{hoeffding-bound}.

\begin{lemma}[Hoeffding's inequality]
	Let $X_i$ be independent random variables, such that $a_i \leq X_i \leq b_i$ with $a_i, b_i \in \R$ for $i=1,\dots,n$. In addition, let $S=\sum_i
	X_i$ and $\mu=\E{S}$. Then for $t > 0$ 
	\begin{equation*}
	\pr{|S - \mu| \geq t} \leq 2 \expo^{-\frac{2t^2}{\sum_i (b_i - a_i)^2}}
	\end{equation*}
\end{lemma}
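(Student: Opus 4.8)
The plan is to use the standard exponential-moment (Chernoff) method together with Hoeffding's lemma bounding the moment generating function of a bounded random variable. First I would treat the upper tail $\pr{S - \mu \geq t}$. For any $\lambda > 0$, Markov's inequality applied to the nonnegative random variable $\expo^{\lambda(S-\mu)}$ gives
\begin{equation*}
\pr{S - \mu \geq t} \leq \expo^{-\lambda t}\,\E{\expo^{\lambda(S-\mu)}},
\end{equation*}
and by independence of the $X_i$ the right-hand side factorizes as $\expo^{-\lambda t}\prod_{i=1}^n \E{\expo^{\lambda(X_i - \E{X_i})}}$.

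Next I would establish Hoeffding's lemma: if $Y$ is a random variable with $\E{Y}=0$ and $a' \leq Y \leq b'$ almost surely, then $\E{\expo^{\lambda Y}} \leq \expo^{\lambda^2(b'-a')^2/8}$. I would prove this by writing $Y$ as the convex combination $Y = \tfrac{Y-a'}{b'-a'}b' + \tfrac{b'-Y}{b'-a'}a'$ and using convexity of $y \mapsto \expo^{\lambda y}$ to get, after taking expectations and using $\E{Y}=0$, the bound $\E{\expo^{\lambda Y}} \leq \expo^{\varphi(\lambda)}$ with $\varphi(\lambda) = \lambda a' + \ln\!\big(1 - p + p\,\expo^{\lambda(b'-a')}\big)$ and $p = -a'/(b'-a') \in [0,1]$. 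One then checks $\varphi(0)=\varphi'(0)=0$ and that $\varphi''(\lambda) \leq (b'-a')^2/4$ for all $\lambda$ (its second derivative equals $(b'-a')^2$ times a quantity of the form $u(1-u) \leq 1/4$), so Taylor's theorem with Lagrange remainder gives $\varphi(\lambda) \leq \lambda^2(b'-a')^2/8$.

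Applying the lemma with $Y = X_i - \E{X_i}$, so that the range $b' - a'$ equals $b_i - a_i$, I obtain
\begin{equation*}
\pr{S - \mu \geq t} \leq \expo^{-\lambda t + \frac{\lambda^2}{8}\sum_{i=1}^n (b_i - a_i)^2}.
\end{equation*}
Minimizing the exponent over $\lambda > 0$ — the optimal choice being $\lambda = 4t/\sum_i(b_i-a_i)^2$ — yields $\pr{S-\mu \geq t} \leq \expo^{-2t^2/\sum_i(b_i-a_i)^2}$. The lower tail is handled identically by applying the same argument to the variables $-X_i$ (which satisfy $-b_i \leq -X_i \leq -a_i$ and hence have the same range), giving $\pr{S - \mu \leq -t} \leq \expo^{-2t^2/\sum_i(b_i-a_i)^2}$; a union bound over the two tails produces the factor $2$ and completes the proof. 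The main obstacle is Hoeffding's lemma, and within it the uniform estimate $\varphi''(\lambda) \leq (b'-a')^2/4$; the remaining steps are routine uses of Markov's inequality, independence, and a one-variable optimization.
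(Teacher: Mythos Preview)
Your proof is correct and follows the standard Chernoff--Hoeffding argument. The paper itself does not prove this lemma; it merely states it with a reference to Hoeffding's original article~\cite{hoeffding-bound}, so there is nothing to compare against beyond noting that you have supplied the classical proof the paper omits.
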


\subsection{Rounded interval algorithm}
The interval algorithm provides an efficient method for simulating the sampling of some target random variable $T$ using another random variable $S$. To aid understanding of our modification to this algorithm and any subsequent results we shall briefly explain how this simulation works. For simplicity we restrict ourselves to the scenario where $S$ is a sequence of uniformly distributed bits, we denote the uniform distribution on an alphabet of size $2^k$ by $U_{2^k}$, for $k \in \mathbb N$. 

The distribution of the target random variable $T$ forms a partition
of the unit interval, one subinterval for each outcome $t$ of $T$. In
exactly the same way, the probability distribution for $U_{2^k}$
partitions the unit interval into $2^k$ subintervals. Thus, we can
associate a bit-string with its corresponding subinterval, defined by
this partitioning. The interval algorithm works by using an increasing sequence of random bits and the corresponding subintervals that the sequence defines. Once the subinterval generated by the sequence of bits is contained inside one of the subintervals defined by the target random variable $T$, say $t$, then we say that we have simulated the sampling of $t$ from $T$ and the algorithm terminates. Denoting by $N$ the length of seed required for the interval algorithm to terminate, then by~\cite[Theorem~3]{H97} we have
	\begin{equation}\label{eq:expected-seed-length}
	\E{N} \leq H(T) + 3.
	\end{equation}

As the algorithm stands, the maximum value that $N$ can take is
unbounded (although the probability that the algorithm expends the
seed without terminating decreases exponentially in $N$). In order to produce large deviation bounds on the number of bits required to execute our protocol, we place an upper limit on the maximum seed length. We thus use an adapted sampling procedure, the \textit{rounded interval algorithm} (RIA), which forcefully terminates if the seed length reaches the upper bound of $k_{\max}$ bits.  

Should the RIA fail to terminate after $\kmax$ steps, then the output sequence generated will correspond to some subinterval $I(r) = [\frac{r}{2^{\kmax}}, \frac{r+1}{2^{\kmax}})$, for some $r\in\{0,1,\dots,2^{\kmax}-1\}$, that is not entirely contained within one of the subintervals induced by $T$. If this occurs, we \emph{round down}: selecting the interval $I_t$ for which $\frac{r}{2^{\kmax}} \in I_t$.

\begin{remark}
Note that the above procedure depends on the ordering of the intervals generated by $T$ (which should be fixed before sampling). One could imagine a rather pathological scenario where an ordering places extremely unlikely outcomes over rounding points, greatly increasing their simulated outcome probabilities. However, as will be shown in \lem\ref{lem:ria-distance}, the distance between the simulated random variable and the target random variable decreases exponentially in $\kmax$.
\end{remark}

\begin{remark}\label{rem:joint-sampling-efficiency}
The rounding procedure truncates the maximum seed length, $N \leq \kmax$, and as such, it is clear that the inequality \eqref{eq:expected-seed-length} also holds for the RIA.  
\end{remark}

\begin{definition}[Statistical distance] Given two random variables $X$ and $X'$, taking values in some common alphabet $\mathcal{X}$. The \emph{statistical distance} between $X$ and $X'$, is defined by 
\begin{equation}
\Delta(X,X') := \frac12 \sum_{x\in \mathcal{X}} |p_{X}(x) - p_{X'}(x)|.
\end{equation} 
\end{definition}

\begin{lemma}\label{lem:ria-distance}
Let $T$ be a random variable taking values in some alphabet $\mathcal{T}$. Let $T'$ be the distribution sampled using the RIA with target distribution $T$. Then 
$$
\Delta(T,T')\leq |\mathcal{T}|\, 2^{-(\kmax+1)},
$$
where $k_{\max}$ is the maximum number of input bits that can be used by the RIA.
\begin{proof}
Consider the partitions of the unit interval $\{I(t)\}_{t \in \mathcal{T}}$ and $\{I'(t)\}_{t \in \mathcal{T}}$ corresponding to the distributions $p_T$ and $q_{T'}$ of $T$ and $T'$ respectively. The intervals of $T'$ take the form
$$
I'(t) = \bigcup_r \left[\frac{r}{2^{\kmax}}, \frac{r+1}{2^{\kmax}}\right)
$$
where the (potentially empty) union is taken over all $r \in \mathbb{N}_0$ such that $r 2^{-\kmax} \in I(t)$. The intervals within the union are either contained fully within the corresponding outcome interval of $T$, i.e., $\left[\frac{r}{2^{\kmax}}, \frac{r+1}{2^{\kmax}}\right) \subseteq I(t)$, or they are included as a result of rounding. Thus we may write
$$
|I'(t)| = |\{r \mid r\cdot 2^{-\kmax} \in I(t), r \in \mathbb{N}_0\}|2^{-\kmax}.
$$
By a straightforward counting argument, there are at least $\floor{|I(t)|2^{\kmax}}$ such values of $r$, and at most $\ceil{|I(t)|2^{\kmax}}$. We hence have 
$$
|I(t)|2^{\kmax} - 1 \leq |I'(t)|2^{\kmax} \leq |I(t)|2^{\kmax} + 1,
$$ 
and therefore
$$
|p_T(t) -p_{T'}(t)| \leq 2^{-\kmax},
$$
holds for all $t \in \mathcal{T}$. Applying this bound to each term within the $\Delta(T,T')$ sum completes the proof.
\end{proof}
\end{lemma}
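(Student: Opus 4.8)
The plan is to analyse the RIA geometrically, through the partitions of the unit interval that $T$ and $T'$ induce, exactly as set up in the discussion preceding the lemma. Write $\{I(t)\}_{t\in\mathcal{T}}$ for the partition associated with the target distribution $p_T$, so each $I(t)$ is a half-open subinterval of $[0,1)$ of length $p_T(t)$. The RIA at resolution $\kmax$ operates on the uniform grid of $2^{\kmax}$ dyadic cells $[r2^{-\kmax},(r+1)2^{-\kmax})$, $r=0,\dots,2^{\kmax}-1$, and each such cell — equivalently, each string of $\kmax$ uniform bits — occurs with probability $2^{-\kmax}$. First I would show that the RIA maps the cell indexed by $r$ to the unique outcome $t$ with $r2^{-\kmax}\in I(t)$. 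Indeed, if the unrounded interval algorithm terminates on some dyadic ancestor of the cell, that ancestor is contained in a single $I(t)$, hence so is the cell and in particular its left endpoint $r2^{-\kmax}$; and if instead the algorithm runs the full $\kmax$ steps and lands on a cell straddling a boundary of $\{I(t)\}$, the rounding-down rule by definition outputs the $t$ containing $r2^{-\kmax}$. So in every case the preimage $I'(t)$ of $t$ is precisely the union of those dyadic cells whose left endpoint lies in $I(t)$, and consequently $p_{T'}(t)=|I'(t)|=2^{-\kmax}\cdot\#\{\,r\in\{0,\dots,2^{\kmax}-1\}: r2^{-\kmax}\in I(t)\,\}$.

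Next I would bound this count by a routine grid-counting estimate: the number of grid points $r2^{-\kmax}$ in a half-open interval $[a,a+L)$ is $\ceil{2^{\kmax}(a+L)}-\ceil{2^{\kmax}a}$, which always lies between $\floor{2^{\kmax}L}$ and $\ceil{2^{\kmax}L}$. Applying this to $I(t)$ (of length $L=p_T(t)$) and multiplying by $2^{-\kmax}$ gives $|p_T(t)-p_{T'}(t)|\le 2^{-\kmax}$ for every $t\in\mathcal{T}$. Summing over $\mathcal{T}$ and dividing by two yields $\Delta(T,T')=\tfrac12\sum_{t}|p_T(t)-p_{T'}(t)|\le |\mathcal{T}|\,2^{-(\kmax+1)}$, as claimed. (As a consistency check, the counts over all $t$ sum to $2^{\kmax}$ because every grid point in $[0,1)$ lies in exactly one $I(t)$, so $p_{T'}$ is a bona fide probability distribution.)

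The only delicate point, and the step I would take the most care over, is the characterisation of $I'(t)$ in the first paragraph: one must be precise that half-open cells are assigned to the outcome interval containing their \emph{left} endpoint, and that this is consistent both with early termination of the interval algorithm and with the forced rounding-down rule — including the degenerate case in which a grid point coincides with a boundary between two $I(t)$'s, where the half-open convention resolves the ambiguity, and the case in which a single dyadic cell contains several very short outcome intervals. Everything after that is elementary: one grid-counting inequality and the triangle inequality.
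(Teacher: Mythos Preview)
Your proof is correct and follows essentially the same approach as the paper's: identify $I'(t)$ as the union of dyadic cells whose left endpoint lies in $I(t)$, use the grid-counting bound $\floor{2^{\kmax}|I(t)|}\le\#\{r:r2^{-\kmax}\in I(t)\}\le\ceil{2^{\kmax}|I(t)|}$ to get $|p_T(t)-p_{T'}(t)|\le 2^{-\kmax}$, then sum. If anything, you are slightly more careful than the paper in justifying why both the early-termination and forced-rounding cases assign a cell to the outcome containing its left endpoint.
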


\subsection{Input randomness for Protocol~\protoName}

Following the structure of Protocol~\protoName, we look to use the RIA
to sample the devices' inputs for each round. In adherence with the
Markov-chain condition (\defn\ref{def:eat-channels}), the natural
procedure would be to sample at the beginning of each round. However,
in practice this requires a much larger seed: because
of~\eqref{eq:expected-seed-length} and the property $H(T^n)=nH(T)$, by sampling the joint distribution 
the expected saving is about $3n$ bits compared to repeating a single
sample $n$ times. Fortunately, this joint sampling can be implemented
while maintaining the Markov-chain condition. Within the assumptions
of Protocol~\protoName\ we allow the honest parties access to a
trusted classical computer, which would also contain some trusted data
storage---we assume that the parties can record their outcome strings
without leakage. Therefore, using their trusted classical computer,
the honest parties perform the RIA: sampling the random variables
$(X_1^n,Y_1^n)$ and subsequently storing the outcome on the trusted
classical computer's harddrive. Crucially, the assumption that the
user can prevent unwanted communication between devices implies this can all be done without any information leaking to the untrusted devices. Then, at the beginning of round $i$, the inputs $(X_i, Y_i)$ are sent from the classical computer to the respective devices. By conducting the protocol in this manner we retain the Markov chain conditions---the inputs are sampled independently of the devices and furthermore, when the devices produce their outputs for the $i^{\text{th}}$ round they can only have knowledge of the inputs for this round and all previous.

Due to potential computational constraints and to permit large deviation bounds on the number of bits required we will not assume that all $n$ rounds are sampled at once. Instead, we split the $n$ rounds into at most $\ceil{n/m}$ blocks of size $m$ and apply the RIA to sample the inputs of each block separately. For simplicity, we assume that $n/m \in \mathbb{N}$ and henceforth remove the ceiling function from the analysis. 

Recall that for the $i^{\text{th}}$ round, the user first uses $T_i$
to decide whether the round is a test round, and, if so, they choose
inputs according to the nonlocal game input distribution $\mu$. Otherwise, if $T_i = 0$, they supply their devices with the fixed inputs $\x$ and $\y$. The probability mass function of joint random variables $X_iY_iT_i$, representing the $i^{\text{th}}$ round's inputs, is therefore 
\begin{equation}
\pr{(X_i,Y_i,T_i) = (x_i,y_i,t_i)} =
\begin{aligned}
\begin{cases} 
\gamma\,\mu(x,y) &\text{for } (x_i,y_i,t_i) = (x,y,1), \\
(1-\gamma) &\text{for } (x_i,y_i,t_i) = (\x,\y,0) \\
\qquad 0 &\text{otherwise}
\end{cases}.
\end{aligned}
\end{equation}
Following \eqref{eq:expected-seed-length}, if $M$ is the seed length required to sample one of the $m$ blocks of rounds, then we have 
\begin{equation}\label{eq:expected-seed-size-block}
\E{M} \leq \frac{ \left(\gamma H(\mu) + h(\gamma)\right) n}{m} + 3
\end{equation}
where $H(\mu)$ is the Shannon entropy of the distribution $\mu$ and $h(\cdot)$ is the binary entropy.

The following lemma gives a probabilistic bound on the total length of the random seed required to sample the inputs for the devices.\footnote{We do not include the extractor's seed here as its size will depend on the choice of extractor.}

\begin{lemma}\label{lem:input-randomness}
Let the parameters of Protocol~\protoName\ be as defined in
\fig\ref{fig:full-protocol} and let $\kmax \in \mathbb N$ be the
maximum permitted seed length for an instance of the RIA. Then, with
probability greater than $(1-\epsamp)$, we can use $m$ instances of
the RIA to simulate the sampling of every device input required to
execute Protocol~\protoName\ with a uniform seed of length no greater than $\nseed$, where
\begin{align}\label{eq:samp}
\nseed &= 2 \kappa \\
\epsamp &= e^{-2\kappa^2/m\kmax^2}
\intertext{and $\kappa = \left(\gamma H(\mu) + h(\gamma)\right) n + 3 m$. Moreover, the sampled distribution lies within a statistical distance of}
\epdist &= m\,2^{n \log(\supp(\mu) + 1)/m -(\kmax+1)},
\end{align}
from the target distribution, where $\supp(\mu) := |\{(x,y) \in \X\Y \mid \mu(x,y) > 0\}|$.
\end{lemma}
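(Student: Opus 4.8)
The plan is to realise the input sampling by $m$ \emph{independent} instances of the RIA, the $j$-th of which draws the triples $(X_i,Y_i,T_i)$ for the $n/m$ rounds making up the $j$-th block (recall $n/m\in\mathbb N$ by assumption) from their joint distribution, each instance reading from its own disjoint stretch of the uniform seed, truncated at $\kmax$ bits. The statement then splits into a seed-length bound and a statistical-distance bound, which I would handle separately.

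For the seed length, let $M_j$ be the number of bits consumed by the $j$-th RIA instance. By Remark~\ref{rem:joint-sampling-efficiency} together with~\eqref{eq:expected-seed-size-block}, $\E{M_j}\leq\bigl(\gamma H(\mu)+h(\gamma)\bigr)n/m+3$, while the truncation forces $0\leq M_j\leq\kmax$; the $M_j$ are independent since the instances read disjoint portions of the seed. With $S=\sum_{j=1}^m M_j$, linearity of expectation gives $\E{S}\leq\bigl(\gamma H(\mu)+h(\gamma)\bigr)n+3m=\kappa$, so that $\{S\geq 2\kappa\}\subseteq\{S-\E{S}\geq\kappa\}$. Applying the upper-tail form of Hoeffding's inequality with $a_j=0$, $b_j=\kmax$ and deviation $\kappa$ then yields
\[
\pr{S\geq\nseed}=\pr{S\geq 2\kappa}\leq\pr{S-\E{S}\geq\kappa}\leq\expo^{-2\kappa^2/(m\kmax^2)}=\epsamp,
\]
so with probability at least $1-\epsamp$ the $m$ instances together consume no more than $\nseed$ uniform bits, which is the claimed seed bound.

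For the distance bound I would first count the alphabet of a single block. In one round the triple $(X_i,Y_i,T_i)$ takes at most $\supp(\mu)+1$ values: the $\supp(\mu)$ input pairs in the support of $\mu$ paired with $T_i=1$, together with the single extra value $(\x,\y,0)$, which is distinct from all of these through its $T$-coordinate. Hence the joint alphabet of one block has size at most $(\supp(\mu)+1)^{n/m}$, and Lemma~\ref{lem:ria-distance} shows that the $j$-th block's sampled distribution $T_j'$ lies within $(\supp(\mu)+1)^{n/m}\,2^{-(\kmax+1)}=2^{(n/m)\log(\supp(\mu)+1)-(\kmax+1)}$ of its target $T_j$. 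Since the rounds are independent and the blocks partition them, the true joint input distribution factorises as $\bigotimes_{j=1}^m T_j$, and, the RIA instances using independent seeds, the sampled distribution factorises as $\bigotimes_{j=1}^m T_j'$; using the triangle inequality together with $\Delta(P\otimes R,Q\otimes R)=\Delta(P,Q)$, statistical distance is subadditive over the tensor factors, whence
\[
\Delta\!\left(\bigotimes_{j=1}^m T_j,\ \bigotimes_{j=1}^m T_j'\right)\leq\sum_{j=1}^m\Delta(T_j,T_j')\leq m\,2^{(n/m)\log(\supp(\mu)+1)-(\kmax+1)}=\epdist .
\]
Finally, since the sampling is carried out on the trusted classical computer with its outputs stored without leakage and $(X_i,Y_i)$ forwarded to the devices only at the start of round $i$, the Markov-chain condition of Definition~\ref{def:eat-channels} is preserved, as discussed above.

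I do not expect a real obstacle here; the only delicate points are the combinatorial block-alphabet count --- in particular the ``$+1$'' in $\supp(\mu)+1$, which relies on the recorded test flag $T_i$ turning the distinguished generation input $(\x,\y)$ into a genuinely new symbol --- and invoking the one-sided rather than the two-sided form of Hoeffding's inequality, so that the constant in $\epsamp$ comes out without a spurious factor of $2$.
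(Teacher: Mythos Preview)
Your proposal is correct and follows essentially the same approach as the paper: independent RIA instances per block, the expected-seed bound~\eqref{eq:expected-seed-size-block} combined with Hoeffding for the seed length, and Lemma~\ref{lem:ria-distance} plus subadditivity of statistical distance across blocks for $\epdist$. If anything you are slightly more careful than the paper in justifying the one-sided Hoeffding bound (the paper only states the two-sided version) and in explaining the ``$+1$'' in the block-alphabet count.
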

\begin{proof} 
Consider the sequence $(M_i)_{i=1}^m$ of \iid random variables representing the number of random bits required to choose the inputs for the $i^{\text{th}}$ block and the corresponding random sum $N = \sum_{i=1}^m M_i$. By \eqref{eq:expected-seed-size-block}, the expected number of bits required to select all of the inputs for the protocol can be bounded above by $\kappa = \left(\gamma H(\mu) + h(\gamma)\right) n + 3 m$. Using Hoeffding's inequality, we can bound the probability that $N$ greatly exceeds this value,
\begin{equation*}
\pr{N \geq \kappa + t} \leq e^{-2t^2/m\kmax^2},
\end{equation*}
for some $t>0$. Setting $t=\kappa$ this becomes
\begin{equation*}
\pr{N \geq 2\kappa} \leq e^{-2\kappa^2 / m\kmax^2}.
\end{equation*}
Although $\kappa$ is not exactly the expected value of $N$, which is the quantity appearing in Hoeffding's bound, the bound holds because $\kappa\geq\E{N}$.

It remains to bound the statistical distance between the sampled random variable $\II' = (\XX',\YY',\TT')$ and the target random variable $\II = (\XX,\YY,\TT)$. For each block of rounds, the corresponding random variable $\II_i$ can take one of a possible $(\supp (\mu) + 1)^{n/m}$ different values. Therefore, by Lemma~\ref{lem:ria-distance}, we have for the $i^{\text{th}}$ block of rounds
\begin{align*}
\Delta(\II_i,\II_i') &\leq (\supp (\mu) + 1)^{n/m} 2^{-(\kmax+1)} \\
&= 2^{n \log(\supp(\mu) + 1)/m -(\kmax+1)}
\end{align*}
Since $\Delta(W,V)$ is a metric and hence satisfies the triangle
inequality~\cite{Cover&Thomas}, the statistical distance between
independently repeated samples can grow no faster than linearly, i.e.,
$\Delta(I^{m},I'^m) \leq m \Delta(I,I')$.  This completes the proof.
\end{proof}

\section{Incorporating the blocking procedure of \cite{ARV}}\label{app:EAT}
The original statement of the entropy accumulation theorem~\cite{DFR} was released alongside an accompanying paper,~\cite{ARV}, which detailed its application to security proofs of device-independent protocols. Within the appendix of~\cite{ARV} it was shown that one could increase the quantity of entropy certified by the original EAT by modifying the structure of the protocol. In particular, this demonstrated the original EAT statement's suboptimal dependence on the testing probability. In light of this, the authors of~\cite{DF} improved the second order term of the EAT in order to account for this suboptimal dependence. In the sections that follow, we will look at how the modified protocol structure interacts with the improved EAT statement. We begin by showing how the family of min-tradeoff functions $\Fmin$ can be adapted to this structural change and then we show that the modified protocol structure provides no clear benefits when used with the improved EAT statement. In addition, we provide some comparison plots showing the accumulation rates achievable with the different structures and EAT statements. To clearly distinguish the different statements of the EAT, we shall indicate with the subscript $~_{\DFR}$, quantities associated with the original EAT~\cite{DFR} and similarly we shall indicate with the subscript $~_{\DF}$, quantities associated with the EAT with improved second-order~\cite{DF}.

\subsection{Construction}
Let us briefly review the structural modification that was introduced in~\cite{ARV}. Instead of distinguishing the statistics from each interaction separately, rounds are grouped together to form \emph{blocks}. The number of rounds within a block can vary: a new block begins when either a test-round occurs or when the maximum number of rounds permitted within a block, $\smax$, is reached. On expectation there are $\sbar = \frac{1-(1-\gamma)^{\smax}}{\gamma}$ rounds within a block. The device-interaction phase of the protocol concludes after some specified number of blocks $m \in \mathbb{N}$ have terminated. We shall use the superscripts $~^{R}$ and $~^{B}$ to indicate whether a quantity is concerned with the round-by-round or block structured protocols respectively. 

The collected information is now defined at the level of blocks and not rounds. In particular, at the end of the $i^{\text{th}}$ block the user records some tuple $(\AA_i,\BB_i,\XX_i,\YY_i,\Ci)$, where $(\AA_i,\BB_i,\XX_i,\YY_i) \in \A^{\smax}\B^{\smax}\X^{\smax}\Y^{\smax}$ and the score's alphabet remains the same, $\Ci \in \G \cup \{\perp\}$. The EAT-channels are also defined for each block and the entropy bounding property (cf.~\eqref{eq:fmin-definition}) that the min-tradeoff functions must satisfy becomes
\begin{equation}\label{eq:blocked-entropy-bounding-prop}
f^B_{\min}(\p) \leq \inf_{\sigma_{R_{i-1}R'}: \N_i(\sigma)_{\CC_i}=\tau_{\p}} H(\AA_i\BB_i|\XX_i\YY_iR')_{\N_i(\sigma)},
\end{equation} 
for each $i \in [m]$. The set of distributions compatible with the protocol structure (cf. \eqref{eq:protocol-respecting-distribution-extended}) now take the form
\begin{equation}
\p^{B} = \begin{pmatrix}
\gamma \sbar \bm q \\
(1-\gamma)^{\smax}
\end{pmatrix}
\end{equation}
for $\bm q \in \Q_\G$.

\begin{lemma}[Blocked variant of Lemma~\ref{lem:extensionlemma}]\label{lem:blockedextensionlemma}
Let $g: \P_{\G} \rightarrow \R$ be an affine function satisfying 
\begin{equation}\label{eq:blocked-entropy-bounding-OVG}
g(\bm q) \leq \inf_{\sigma_{R_{i-1}R'}: \N^{\test}_i(\sigma)_{\Ci}=\tau_{\p}} H(\AA_i\BB_i|\XX_i\YY_iR')_{\N_i(\sigma)}
\end{equation}
for all $\bm q \in \Q_{\G}$. Then the function $f: \P_{\G\cup\{\perp\}}\rightarrow \R$, defined by its action on trivial distributions
\begin{align*}
f(\e_c) &= \Max{g} + \frac{g(\e_c) - \Max{g}}{\gamma\sbar}, \qquad \forall c \in \G, \\
f(\e_\perp) &= \Max{g},
\end{align*} 
is a min-tradeoff function for any EAT-channels implementing Protocol~\protoName$^B$. Furthermore, $f$ satisfies the following properties:
\begin{align*}
\Max{f} &= \Max{g}, \\
\Min{\frestricted} &\geq \Min{g}, \\
\Var{\frestricted} &\leq \frac{(\Max{g} - \Min{g})^2}{\gamma\sbar}.
\end{align*}
\begin{proof}
This follows from replicating the original proof~\cite{DF} with the block channels decomposed into the testing and generation channels, $\N_i = \gamma\sbar \N_i^{\test} + (1-\gamma\sbar)\N_i^{\gen}$.
\end{proof}
\end{lemma}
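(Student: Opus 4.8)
The plan is to replicate the proof of Lemma~\ref{lem:extensionlemma} (i.e.\ Lemma~5.5 of~\cite{DF}) essentially verbatim, with the single structural change that the per-round decomposition $\N_i = \gamma\N_i^{\test} + (1-\gamma)\N_i^{\gen}$ is replaced by the per-block decomposition $\N_i = \gamma\sbar\N_i^{\test} + (1-\gamma\sbar)\N_i^{\gen}$, noting that $\gamma\sbar = 1-(1-\gamma)^{\smax}$ is exactly the probability that a block contains a test round and hence carries a score in $\G$ rather than $\perp$. The first thing I would check is that this decomposition is legitimate and that $\{\N_i\}_{i=1}^m$ are genuine EAT channels at the block level: each block produces finite-dimensional classical registers $\AA_i\BB_i\XX_i\YY_i$ together with a score $\Ci\in\G\cup\{\perp\}$ that is a deterministic function of them; $\N_i^{\gen}$ deterministically outputs $\Ci=\perp$ while $\N_i^{\test}$ deterministically outputs a game score; and the block-wise Markov chain condition holds because the inputs inside each block are still drawn from the trusted seed independently of all past outputs. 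I expect this block-level bookkeeping to be the only genuinely new content --- it is what ``replicating the original proof with the block channels decomposed'' refers to --- while everything else is a substitution $\gamma\mapsto\gamma\sbar$.

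With the decomposition fixed, the verification that $f$ is a min-tradeoff function (property~\eqref{eq:blocked-entropy-bounding-prop}) splits into two cases, as for Lemma~\ref{lem:extensionlemma}. Since $f$ is affine (it is defined by its values on $\{\e_c\}$ and extended linearly), if $\p$ is not protocol-respecting --- i.e.\ not of the form $(\gamma\sbar\bm q,\,1-\gamma\sbar)$ with $\bm q\in\P_\G$, which is forced by the fixed block-test probability --- then no $\sigma$ satisfies $\N_i(\sigma)_{\CC_i}=\tau_\p$, the infimum is over the empty set, and the inequality is trivial. If $\p=(\gamma\sbar\bm q,\,1-\gamma\sbar)$, a short affine computation from the definition of $f(\e_c)$ and $f(\e_\perp)$ gives $f(\p)=g(\bm q)$; moreover any $\sigma$ feasible for this infimum automatically satisfies $\N_i^{\test}(\sigma)_{\Ci}=\tau_{\bm q}$ (conditioning on $\Ci\neq\perp$), so in particular $\bm q\in\Q_\G$ and the hypothesis~\eqref{eq:blocked-entropy-bounding-OVG} applies, and since the feasible set here is contained in that of~\eqref{eq:blocked-entropy-bounding-OVG} one obtains $f(\p)=g(\bm q)\leq\inf_{\sigma:\,\N_i(\sigma)_{\CC_i}=\tau_\p}H(\AA_i\BB_i|\XX_i\YY_iR')_{\N_i(\sigma)}$.

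Finally I would derive the three stated bounds. As $f$ is affine, $\Max{f}$ is attained at a vertex of $\P_{\G\cup\{\perp\}}$; since $g(\e_c)\leq\Max{g}$ and $\gamma\sbar\in(0,1)$ we get $f(\e_c)\leq\Max{g}=f(\e_\perp)$, so $\Max{f}=\Max{g}$. For the minimum over $\respecting$, $f(\p)=g(\bm q)\geq\Min{g}$ on every protocol-respecting $\p$, giving $\Min{\frestricted}\geq\Min{g}$. For the variance, writing $u:=\Max{g}-g(\bm q)$ and $v_c:=\Max{g}-g(\e_c)$ (both in $[0,\Max{g}-\Min{g}]$, with $\sum_c q(c)v_c=u$ by affinity of $g$), a direct expansion of $\sum_c p(c)\bigl(f(\e_c)-f(\p)\bigr)^2$ at $\p=(\gamma\sbar\bm q,\,1-\gamma\sbar)$ collapses to $-u^2+\tfrac{1}{\gamma\sbar}\sum_c q(c)v_c^2\leq\tfrac{1}{\gamma\sbar}(\Max{g}-\Min{g})^2$, which is the claimed bound. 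I do not expect any obstacle in this last paragraph; the only step that truly demands attention is the block-level argument described above.
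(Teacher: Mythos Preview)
Your proposal is correct and is exactly the approach the paper takes: the paper's own proof is the single sentence ``this follows from replicating the original proof~\cite{DF} with the block channels decomposed as $\N_i = \gamma\sbar\N_i^{\test} + (1-\gamma\sbar)\N_i^{\gen}$,'' and you have simply spelled out that replication in full, including the block-level EAT-channel bookkeeping and the verification of the three $\Max$/$\Min$/$\Var$ bounds. The only minor remark is that your containment of feasible sets is in fact an equality (the decomposition makes $\N_i(\sigma)_{\CC_i}=\tau_\p$ equivalent to $\N_i^{\test}(\sigma)_{\Ci}=\tau_{\bm q}$), but the weaker containment you state already suffices.
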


\begin{lemma}[Blocked min-tradeoff construction]\label{lem:blocked-fmin}
Let $\G$ be a nonlocal game and $k \in \mathbb N$. For each $\v\in\Qk_\G$, let $\lv$ be some feasible point of \prog\eqref{prog:relaxed-dual}. Furthermore, let $\lmax =
\max_{c \in \G}\lambda_{\v}(c)$ and $\lmin =
\min_{c \in \G}\lambda_{\v}(c)$. Then, for any set of EAT channels $\{\N_i\}_{i=1}^m$ implementing an instance of Protocol~\protoName$^B$ with the nonlocal game $\G$, the set of functionals $F^B_{\min}(\G) = \{f_{\v}(\cdot)\mid \v\in \Qk_\G\}$ forms a family of min-tradeoff functions, where $f_{\v}: \P_\C \rightarrow \R$ are defined by their actions on trivial distributions
\begin{align}\label{eq:blocked-fmin-c}
\hspace{4cm}f_{\v}(\e_{c}) &:= (1-\gamma) \,\sbar \left(\Av - \Bv \frac{\lv\cdot\e_c - (1-\gamma\sbar)\lmin}{\gamma\sbar} \right) \hspace{0.5cm}\text{for } c \in \G, \intertext{and}
f_{\v}(\e_\perp) &:= (1-\gamma)\,\sbar \left(\Av - \Bv\, \lmin \right), &~
\end{align}
where $\Av = \frac{1}{\ln 2} - \log (\lv\cdot\v)$ and $\Bv = \frac{1}{\lv\cdot\v \ln 2}$.

Moreover, these min-tradeoff functions satisfy the following identities.
\begin{itemize}
\item Maximum:
\begin{equation}\label{eq:blocked-fv-max}
\Max{f_{\v}} = (1-\gamma)\sbar (\Av - \Bv \,\lmin)
\end{equation}
\item $\respecting$-Minimum:
\begin{equation}\label{eq:blocked-fv-min}
\Min{\left.f_{\v}\right|_{\respecting}} \geq (1-\gamma)\sbar(\Av - \Bv \,\lmax)
\end{equation}
\item $\respecting$-Variance:
\begin{equation}\label{eq:blocked-fv-var}
\Var{\left.f_{\v}\right|_{\respecting}} \leq \frac{(1-\gamma)^2 \sbar \Bv^2 (\lmax - \lmin)^2}{\gamma}
\end{equation}
\end{itemize}
\begin{proof}
The proof follows the same structure as the proof of \lem\ref{lem:fmin}. The only significant difference is the construction of the function $g:\P_{\G}\rightarrow \R$ satisfying \eqref{eq:blocked-entropy-bounding-OVG} so we shall explain this part here. Following Appendix B
of~\cite{ARV}, by repeated application of the chain rule we may decompose a block's entropy as
\begin{align*}
H(\AA_i\BB_i|\XX_i\YY_i\TT_iR')_{\mathcal N_i(\sigma)} 
=\sum_{j=1}^{\smax} (1-\gamma)^{j-1} H(A_{i,j}B_{i,j}|\XX_i\YY_i,\TT_{i,1}^{~\,j-1}\!\!= \bm{0},\TT_{i,j}^{\,\,\,\smax} \AA_{i,1}^{~\,j-1}\BB_{i,1}^{~\,j-1}R'),
\end{align*}
where $T_{i,j}$ is the random variable indicating whether a test occurred on the $j^{\text{th}}$ round of the $i^{\text{th}}$ block.
Considering the individual terms within the sum, we can absorb the majority of the side information into some arbitrary quantum register $\EE$ leaving us with terms of the form
$$
(1-\gamma)^{j-1} H(A_{i,j}B_{i,j}|X_{i,j}Y_{i,j}T_{i,j}\EE).
$$
As before, we can use the inequality $H(A|B) \geq H_{\min}(A|B)$ and conditioning on $T_{i,j}$ to lower bound each term in the sum by the outputs of the semidefinite program, 
\begin{align*}
(1-\gamma)^{j-1} H(A_{i,j}B_{i,j}|X_{i,j}Y_{i,j}T_{i,j}\EE)
&=
(1-\gamma)^{j-1} \pr{T_{i,j}=0} H(A_{i,j}B_{i,j}|X_{i,j}=\x,Y_{i,j}=\y,T_{i,j} = 0,\EE) \\
&+
(1-\gamma)^{j-1} \pr{T_{i,j}=1} H(A_{i,j}B_{i,j}|X_{i,j}Y_{i,j}T_{i,j} = 1\,\,\EE) \\
&\geq 
(1-\gamma)^j H(A_{i,j}B_{i,j}|\x\,\y\,\EE) \\
&\geq
(1-\gamma)^j H_{\min}(A_{i,j}B_{i,j}|\x\,\y\,\EE) \\
&\geq 
-(1-\gamma)^j \log(\lv \cdot \w_{i,j}),
\end{align*}
where $\w_{i,j}\in\Q_{\G}$ is the expected frequency distribution over the games scores for round $j$ of block $i$. Noting that $-\log(\cdot)$ of a linear function is convex, we can establish a bound on the entire block $i$ through an application of Jensen's inequality
\begin{align*}
(\gamma-1)\sum_{j=1}^{\smax} (1-\gamma)^{j-1} \log(\lv\cdot \w_{i,j}) 
&\geq
\sbar (\gamma-1) \log\left(\lv\cdot \frac{\sum_{j=1}^{\smax} (1-\gamma)^{j-1} \w_{i,j}}{\sbar}\right) \\
&= 
\sbar (\gamma-1) \log\left(\lv\cdot \w_i\right) ,
\end{align*}
we have used the fact that $\sum_{j \in [\smax]} (1-\gamma)^{j-1} =
\sbar$ and that $ \w_i = \frac{\sum_{j \in [\smax]} \gamma
(1-\gamma)^{j-1} \w_{i,j}}{\gamma \sbar}$ is the normalised expected frequency distribution over the nonlocal game scores for the
$i^{\text{th}}$ block. Taking a first-order expansion of the last
line, we get the function $g_{\v}(\,\cdot\,) = (1-\gamma)\sbar\left(\Av -
  \Bv \lv \cdot (\,\cdot\,)\right)$. The proof is then completed by applying the extension
\lem\ref{lem:blockedextensionlemma}, analogous to the technique of \lem\ref{lem:fmin}.
\end{proof}
\end{lemma}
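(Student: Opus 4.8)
The plan is to follow the proof of \lem\ref{lem:fmin} almost verbatim: the only substantive new ingredient is the construction of a single affine function $g_{\v}:\P_{\G}\rightarrow\R$ satisfying the block-level entropy bound~\eqref{eq:blocked-entropy-bounding-OVG}, after which \lem\ref{lem:blockedextensionlemma} supplies the $\perp$-extension together with the $\Max$, $\Min$ and $\Var$ bookkeeping. So once $g_{\v}$ is in hand, the stated formulas for $f_{\v}$ and the three inequalities~\eqref{eq:blocked-fv-max}--\eqref{eq:blocked-fv-var} follow just by substituting $\Max{g_{\v}}$ and $\Min{g_{\v}}$ into the properties listed in \lem\ref{lem:blockedextensionlemma}.

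To build $g_{\v}$, I would first expand a single block's von Neumann entropy with the chain rule. Because a block of Protocol~\protoName$^B$ terminates as soon as a test round occurs (or after $\smax$ rounds), round $j$ of block $i$ is reached only if rounds $1,\dots,j-1$ were all generation rounds, so iterating the chain rule as in Appendix~B of~\cite{ARV} gives
$$H(\AA_i\BB_i|\XX_i\YY_i\TT_iR')_{\N_i(\sigma)}=\sum_{j=1}^{\smax}(1-\gamma)^{j-1}H(A_{i,j}B_{i,j}|\XX_i\YY_i,\TT_{i,1}^{j-1}=\bm{0},\TT_{i,j}^{\smax}\AA_{i,1}^{j-1}\BB_{i,1}^{j-1}R').$$
For each summand I would discard all side information except $X_{i,j}Y_{i,j}T_{i,j}$ together with an arbitrary quantum register $\EE$ (conditioning never increases the von Neumann entropy), then condition on $T_{i,j}\sim\bern{\gamma}$, drop the nonnegative test-round term, and fix the generation inputs $(\x,\y)$; this converts the weight $(1-\gamma)^{j-1}$ into $(1-\gamma)^{j}$ and leaves $(1-\gamma)^{j}H(A_{i,j}B_{i,j}|\x\y\EE)$. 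Since $H\ge\hmin$ and $\lv$ is a feasible point of \prog\eqref{prog:relaxed-dual}, the linear functional $\lv\cdot(\cdot)$ dominates $\pg$ on $\Qk_{\G}$, so $\hmin(A_{i,j}B_{i,j}|\x\y\EE)\ge-\log(\lv\cdot\w_{i,j})$ and each summand is bounded below by $-(1-\gamma)^{j}\log(\lv\cdot\w_{i,j})$, where $\w_{i,j}\in\Q_{\G}$ is the expected score distribution of round $j$ of block $i$.

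Next I would collapse the sum. Since $\sum_{j=1}^{\smax}(1-\gamma)^{j-1}=\sbar$, the weights $(1-\gamma)^{j-1}/\sbar$ form a probability vector, so Jensen's inequality applied to the convex map $\bm q\mapsto-\log(\lv\cdot\bm q)$ yields the lower bound $-(1-\gamma)\sbar\log(\lv\cdot\w_i)$, where $\w_i\in\Q_{\G}$ is the block's reweighted, normalised expected score distribution. A first-order expansion of $-\log(\lv\cdot(\cdot))$ about the point $\v$ then produces the affine lower bound $g_{\v}(\bm q):=(1-\gamma)\sbar(\Av-\Bv\,\lv\cdot\bm q)$ with $\Av,\Bv$ as in the statement, and this $g_{\v}$ satisfies~\eqref{eq:blocked-entropy-bounding-OVG} for every $\bm q\in\Q_{\G}$. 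Feeding $g_{\v}$ into \lem\ref{lem:blockedextensionlemma} gives $f_{\v}$ by the displayed formulas, and inserting $\Max{g_{\v}}=(1-\gamma)\sbar(\Av-\Bv\lmin)$ and $\Min{g_{\v}}=(1-\gamma)\sbar(\Av-\Bv\lmax)$ (both immediate, since $\Bv>0$ and $\bm q\mapsto\lv\cdot\bm q$ is linear over the simplex) into the three properties of \lem\ref{lem:blockedextensionlemma} reproduces~\eqref{eq:blocked-fv-max}--\eqref{eq:blocked-fv-var}.

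The main obstacle I anticipate is the chain-rule step and the attendant conditioning: one must check that the single-round guessing-probability bound is stated against an adversary holding arbitrary side information, so that it still applies after all the other registers are absorbed into $\EE$, and one must track the geometric weights $(1-\gamma)^{j-1}$ correctly against the extra factor $\pr{T_{i,j}=0}=1-\gamma$ picked up when splitting on the round type. After that, everything---the Jensen step, the tangent-line expansion, and the invocation of \lem\ref{lem:blockedextensionlemma}---is the same computation as in \lem\ref{lem:fmin} with $\gamma$ replaced by $\gamma\sbar$ in the extension weights.
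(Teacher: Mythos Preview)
Your proposal is correct and follows essentially the same approach as the paper's proof: chain-rule decomposition of the block entropy with geometric weights $(1-\gamma)^{j-1}$, absorption of side information into $\EE$, conditioning on $T_{i,j}$ and dropping the test term, the $H\ge\hmin$ and dual-feasible bound, Jensen's inequality to collapse the sum into $-(1-\gamma)\sbar\log(\lv\cdot\w_i)$, then the tangent-line expansion and invocation of \lem\ref{lem:blockedextensionlemma}. The obstacles you flag are exactly the points the paper handles, and your bookkeeping for $\Max{g_{\v}}$ and $\Min{g_{\v}}$ matches.
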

\subsection{Blocking with the improved second order}

The error term in the original EAT bound is
\begin{equation}
\errDFR^R := 2 \left( \log(1 + 2 |\A||\B|) + \ceil{\|\nabla f_{\min}\|_{\infty}}\right)\sqrt{1-2\log(\epsmooth\epeat)}.
\end{equation} 
The disadvantage of using this bound as-is is that the gradient of $f_{\min}$ scales like $1/\gamma$ and so the total error scales as $O(\sqrt{n}/\gamma)$. Collating the statistics into $m\in \mathbb{N}$ blocks, allows some of the $\gamma$ dependence from the gradient term to be transferred to the $\log(1+2|\A||\B|)$ term. Moving to the blocked structure and setting $\smax=\ceil{1/\gamma}$ (as was done in~\cite{ARV}), the output alphabets grow exponentially with the size of the block and the logarithmic term acquires a $1/\gamma$ scaling. In contrast, the scaling of the derivative of the min-tradeoff function is found to be independent of the block size. Fortunately, as our error is defined for an entire block, we reduce the multiplicative factor on the total error from $\sqrt{n}$ to $\sqrt{m}\approx\sqrt{n/\sbar}$. As $\sbar \in O(1/\gamma)$, we find that the total error term now scales as $\sqrt{n/\gamma}$. By increasing the size of the blocks we have effectively redistributed the testing probability dependence evenly amongst the components of $\errDFR$.

In~\cite{DF}, the authors looked to amend this deficiency by strengthening the second order term in the EAT. The following short calculation looks at how the errors scale when we applying the blocking procedure to the improved EAT statement. Recall the error terms
\begin{equation}
\errV^R := \frac{\beta \ln 2}{2} \left(\log\left(2 |\A\B|^2 +1\right) + \sqrt{  \Var{\frestricted} + 2} \right)^2,
\end{equation}
\begin{equation}
\errK^R := \frac{\beta^2}{6(1-\beta)^3\ln 2}\,
2^{\beta(\log |\A\B| + \Max{f} - \Min{\frestricted})}
\ln^3\left(2^{\log |\A\B| + \Max{f} - \Min{\frestricted}} + e^2\right)
\end{equation}
and 
\begin{equation}
\errW^R := \frac{1}{\beta}\left(1 - 2 \log(p_{\Omega}\,\epsmooth)\right).
\end{equation}
Using the explicit form of the blocked min-tradeoff functions \lem\ref{lem:blocked-fmin}, we can calculate the asymptotic growth of the error terms as $\smax \rightarrow \infty$, $\gamma \rightarrow 0$ and $m \approx n^R / \sbar$. In particular, we find 
\begin{equation}
\begin{aligned}
m \cdot \errV^B &\leq \frac{\beta m \ln 2}{2} \left(\log\left(2 |\A\B|^{2\smax} +1\right) + \sqrt{  \frac{(1-\gamma)^2 \sbar \Bv^2 (\lmax - \lmin)^2}{\gamma} + 2} \right)^2 \\
&= O(\beta n \smax) + O(\beta n/\gamma), 
\end{aligned}
\end{equation}
\begin{equation}
\begin{aligned}
m \cdot \errK^B &\leq \frac{m \beta^2}{6(1-\beta)^3\ln 2}\,
2^{\beta( \log |\A\B|^{\smax} + (1-\gamma) \sbar \Bv (\lmax - \lmin))}
\ln^3\left(2^{ \log |\A\B|^{\smax} + (1-\gamma)\sbar\Bv(\lmax - \lmin)} + e^2\right) \\
&= \beta^2 2^{O(\beta\smax)} O(n \smax^2), 
\end{aligned}
\end{equation}
\begin{equation}
\errW^B = O(1/\beta),
\end{equation}
and therefore the total error scales as
\begin{equation}
\epsilon^B_{\DF} = O\left(\beta n \smax + \frac{\beta n}{\gamma} +  \beta^2 n \smax^2 2^{O(\beta\smax)} + \frac{1}{\beta}  \right).
\end{equation}

In order for $\errK^B$ to have any sensible scaling, we need the
exponent to grow no faster than $O(1)$. Combining this with the
inverse dependence of $\beta$ in $\errW^B$, we would like $\beta\approx\frac{\sqrt{\gamma}}{\sqrt{n} \smax}$. Such a choice results in $\epsilon^B_{\DF} \in O\left(\smax \sqrt{n/\gamma}\right)$ which suggests that indeed, the blocking procedure is no longer advantageous when used in conjunction with the improved second order statement. 

A comparison between the expansion rates obtained when using the improved second order statement~\cite{DF} and the blocked variant of the original EAT are presented in \fig\ref{fig:EAT-RD-comparison}. The faster convergence to the asymptotic rate is indicative of the new EAT statement's strength.

\begin{figure}
\begin{center}
\begin{minipage}{0.45\textwidth}
\includegraphics[width=\textwidth]{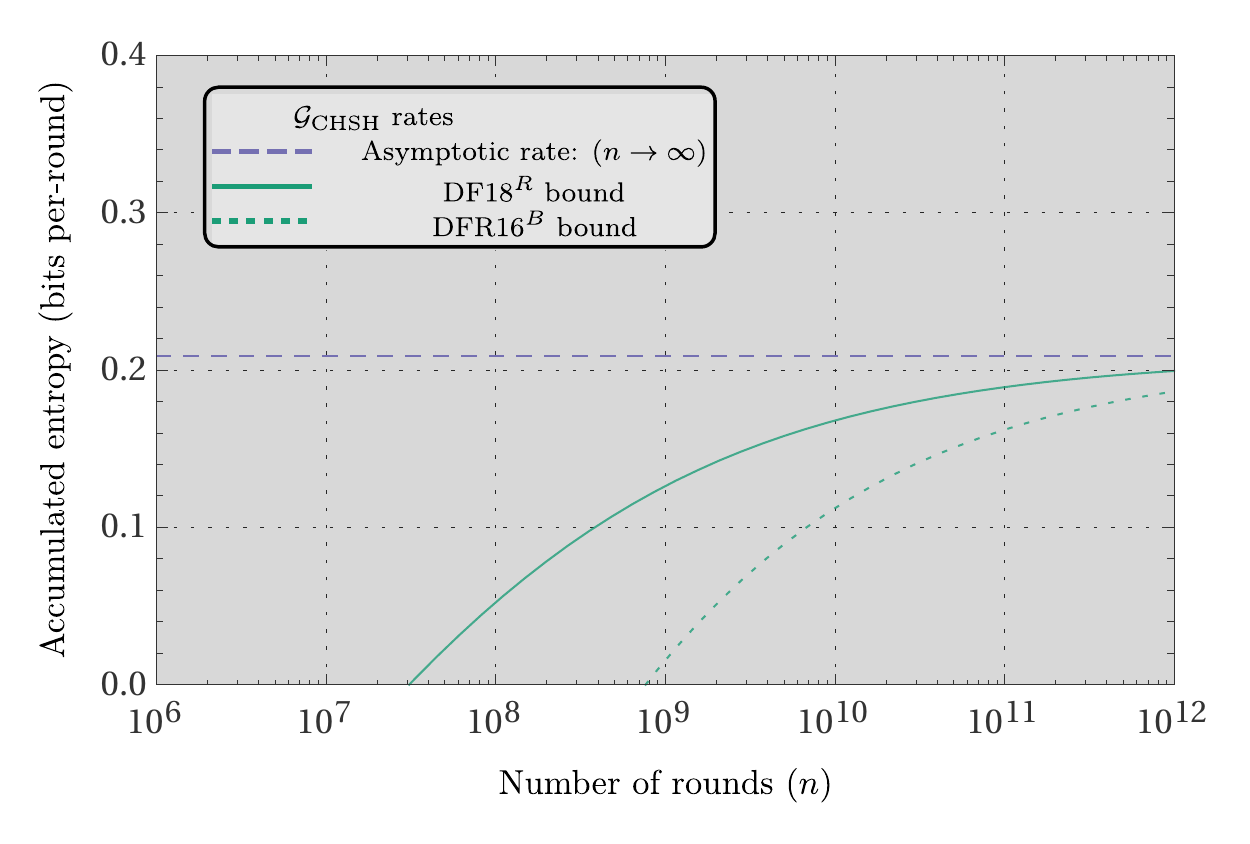}
\end{minipage}
\begin{minipage}{0.45\textwidth}
\includegraphics[width=\textwidth]{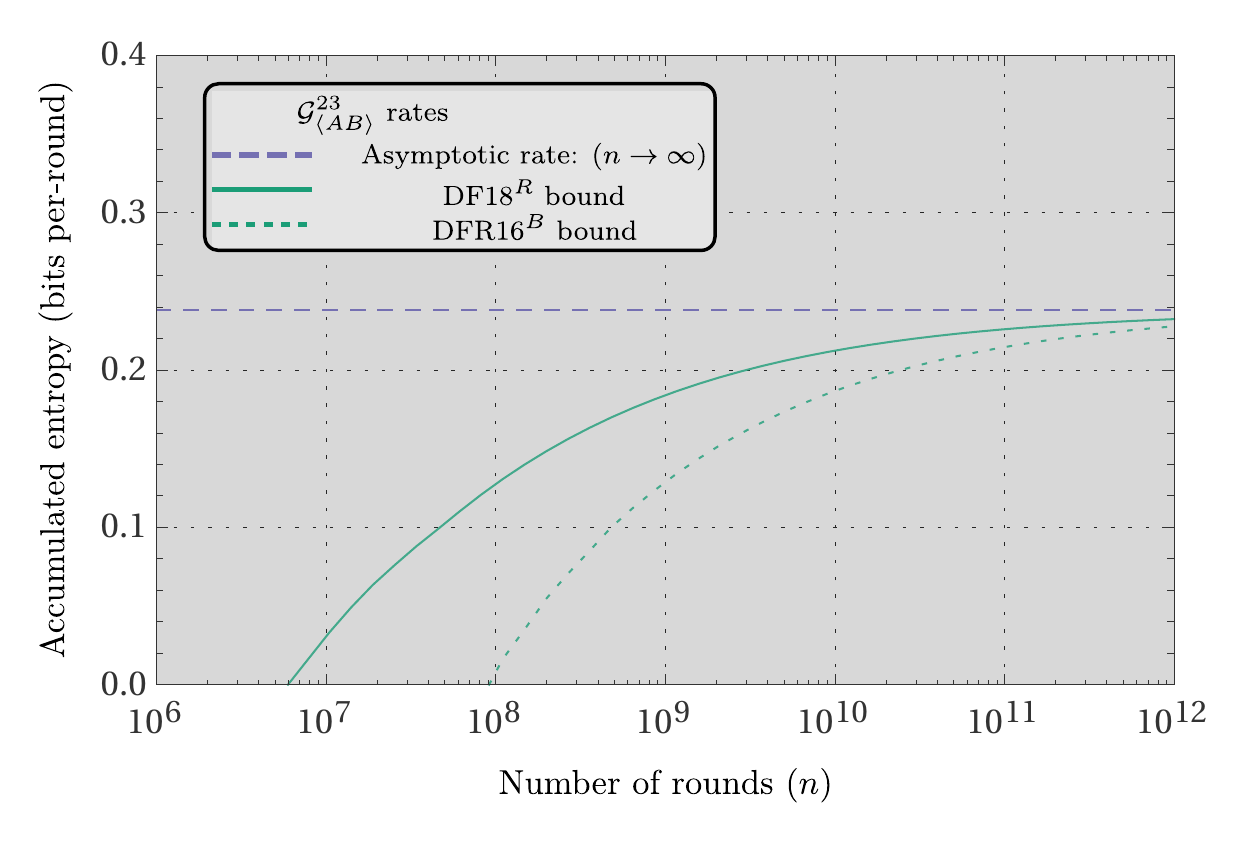}
\end{minipage} \\
	\begin{minipage}{0.45\textwidth}
		\includegraphics[width=\textwidth]{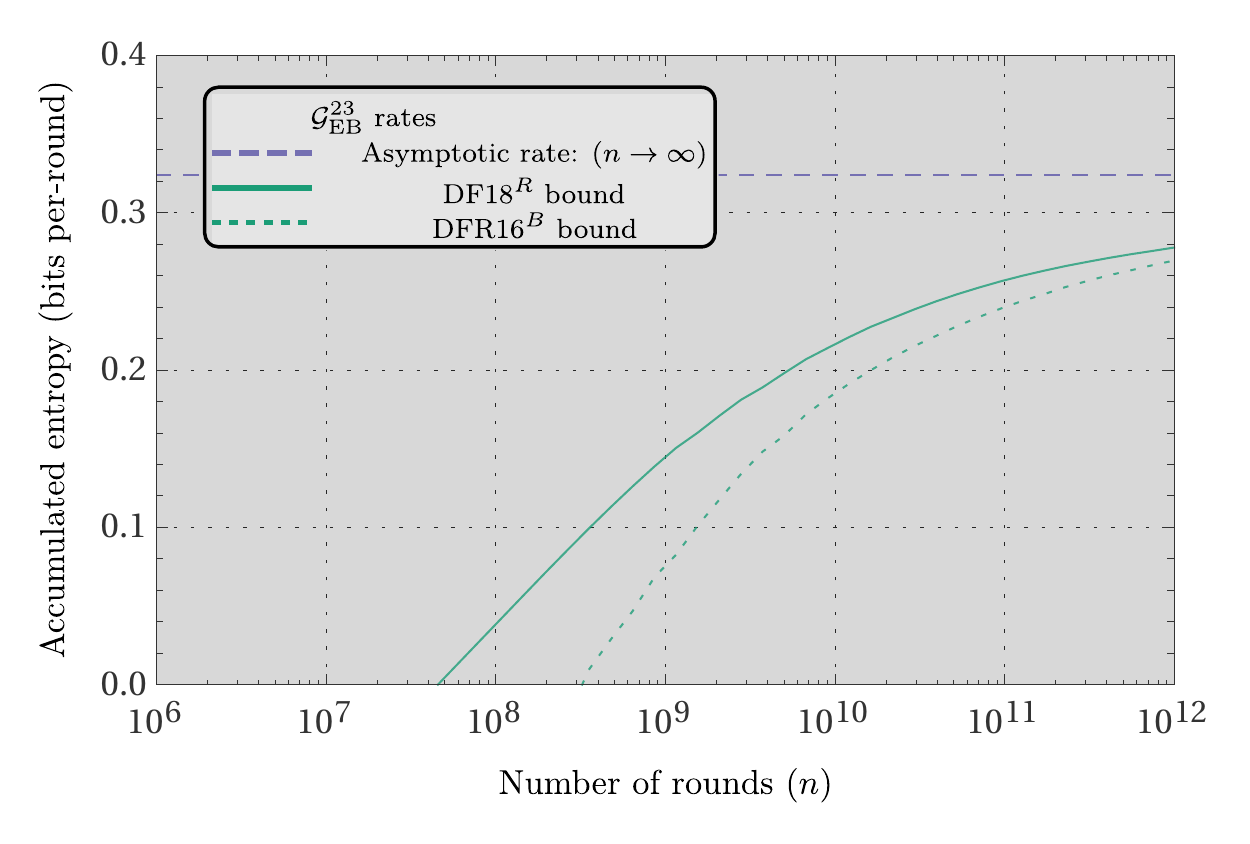}
	\end{minipage}
\end{center}
\caption{Comparison of the certifiable accumulation rates using the two different statements of the EAT: ${\DFR}^B$~\cite{ARV} and ${\DF}^R$ \eqref{eq:EAT-total}. The rates were derived using the following procedure. We assumed a qubit implementation of the protocols with a detection efficiency $\eta = 0.9$, optimizing the state and measurement angles in order to maximise the asymptotic rate. Then, for each value of $n$ an optimization of the min-tradeoff function choice was performed -- for the rates calculated using \eqref{eq:EAT-total} we also optimized the $\beta$ parameter at each value of $n$. To ensure that we approached the asymptotic rate as $n$ increased we set $\gamma = \delta_1 = \dots = \delta_{|\G|} = n^{-1/3}$ as such a choice provides a constant completeness error across all values of $n$.
} \label{fig:EAT-RD-comparison}

\end{figure}

\section{Conic program duality}\label{app:cones}
In this section we outline the duality statements for conic programs,
introduce the alternative form of dual that we use in this paper and
show that it has the required properties to be considered a dual.
\begin{definition}
	A \emph{cone} is a set $\cK\subseteq\mathbb{R}^n$ with the property
	that if $x\in\cK$ then $\lambda x\in\cK$ for all $\lambda\geq0$.  A
	cone is \emph{pointed} if $\cK\cap(-\cK)=\emptyset$.
\end{definition}

\begin{definition}
	Given a cone $\cK$, its \emph{dual cone} is the set
	$\cK^*\subseteq\mathbb{R}^n$ defined by the property
	that $y\in\cK^*$ if and only if $\langle y,x\rangle\geq0$ for all
	$x\in\cK$, i.e., $\cK^*=\{y:\langle y,x\rangle\geq0\
	\forall x\in\cK\}$.
\end{definition}

\begin{definition}
	A \emph{proper cone} is a cone that is closed, convex, pointed and non-empty.
\end{definition}

\begin{definition}[Dual for conic programs]
	Let $\cK\subseteq\mathbb{R}^n$ be a proper cone with dual
	$\cK^*$, $M\in\mathbb{R}^{m\times n}$, $b\in\mathbb{R}^m$ and
	consider the conic program
	$$\min_{x\in\mathbb{R}^n}\langle c,x\rangle\quad\text{subj.\ to }\quad Mx=b,\ x\in\cK\,.$$
	
	The optimization
	$$\max_{y\in\mathbb{R}^m,z\in\mathbb{R}^n}\langle b,y\rangle\quad\text{subj.\ to }\quad c=z+M^T y,\ z\in\cK^*$$
	is the dual program.
\end{definition}
Note that this is a conic program over $\cK^*$, the dual cone to
$\cK$.

The following two Lemmas are standard results (see, for example~\cite{BoydVan})

\begin{lemma}[Weak duality]
	Let $P$ be a conic program with optimum value $p^*$. If the program
	$D$, dual to $P$ has optimum $d^*$ then $p^*\geq d^*$.
\end{lemma}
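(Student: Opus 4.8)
The plan is to prove weak duality directly by taking an arbitrary feasible point of each program and showing the primal objective dominates the dual objective pointwise. First I would fix any $x\in\mathbb{R}^n$ feasible for $P$ (so $Mx=b$ and $x\in\cK$) and any pair $(y,z)\in\mathbb{R}^m\times\mathbb{R}^n$ feasible for $D$ (so $c=z+M^Ty$ and $z\in\cK^*$). Substituting the dual feasibility constraint into the primal objective gives
\begin{equation*}
\langle c,x\rangle=\langle z+M^Ty,\,x\rangle=\langle z,x\rangle+\langle M^Ty,x\rangle=\langle z,x\rangle+\langle y,Mx\rangle=\langle z,x\rangle+\langle b,y\rangle,
\end{equation*}
where in the third step I use the adjoint identity $\langle M^Ty,x\rangle=\langle y,Mx\rangle$ and in the last step the primal constraint $Mx=b$.

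The key step is then the cone inequality: since $x\in\cK$ and $z\in\cK^*$, the defining property of the dual cone gives $\langle z,x\rangle\geq0$. Combining this with the displayed identity yields $\langle c,x\rangle\geq\langle b,y\rangle$ for every primal-feasible $x$ and every dual-feasible $(y,z)$. Taking the infimum over all primal-feasible $x$ on the left and the supremum over all dual-feasible $(y,z)$ on the right gives $p^*\geq d^*$, as required.

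Finally I would address the degenerate cases dictated by the standard conventions: if $P$ is infeasible then $p^*=+\infty$ and the inequality is vacuous, and if $D$ is infeasible then $d^*=-\infty$ and again there is nothing to prove; the argument above covers the remaining case in which both feasible sets are non-empty. There is no real obstacle here—the only point requiring any care is making sure the adjoint identity and the dual-cone inequality are invoked correctly, and that the infimum/supremum passage does not require attainment of either optimum (it does not, since the pointwise inequality already holds for all feasible points).
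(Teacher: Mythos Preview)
Your proof is correct and is exactly the standard argument. The paper does not actually supply its own proof of this lemma; it simply states that weak duality (together with strong duality) is a standard result and cites Boyd and Vandenberghe, so there is nothing further to compare.
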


\begin{lemma}[Strong duality]
	Let $P$ be a conic program with optimum value $p^*$ and dual $D$.
	If $P$ is strictly feasible then $p^*=d^*$.
\end{lemma}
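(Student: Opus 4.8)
The plan is to reduce the statement to the separating hyperplane theorem, exactly as one proves Slater's condition in ordinary convex optimization, with the conic structure entering only through the pairing that characterises the dual cone $\cK^{*}$. By the preceding weak duality lemma we already have $p^{*}\geq d^{*}$, so it suffices to exhibit a feasible point $(y,z)$ of the dual with $\langle b,y\rangle\geq p^{*}$; this both forces $d^{*}\geq p^{*}$ and shows the dual optimum is attained. If $p^{*}=-\infty$ then weak duality gives $d^{*}=-\infty$ and there is nothing to prove in the extended reals, while strict feasibility precludes $p^{*}=+\infty$, so we may assume $p^{*}\in\R$.

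First I would form the perturbation set
\[
\mathcal{A}:=\{(v,t)\in\R^{m}\times\R : \exists\, x\in\cK \text{ with } Mx-b=v \text{ and } \langle c,x\rangle\leq t\}.
\]
Since $\cK$ is convex and the defining relations are affine in $x$ together with a one-sided relaxation in $t$, $\mathcal{A}$ is convex, and by definition of $p^{*}$ we have $(0,t)\in\mathcal{A}$ if and only if $t\geq p^{*}$; hence $\mathcal{A}$ is disjoint from the vertical half-line $\{0\}\times(-\infty,p^{*})$. The separating hyperplane theorem then provides a nonzero $(\tilde\mu,\lambda)\in\R^{m}\times\R$ with $\langle\tilde\mu,v\rangle+\lambda t\geq\lambda p^{*}$ for every $(v,t)\in\mathcal{A}$ (a routine check shows the separating constant may be taken to be $\lambda p^{*}$ and that $\lambda\geq0$). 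Evaluating at $v=Mx-b$, $t=\langle c,x\rangle$ for arbitrary $x\in\cK$ gives $\langle M^{T}\tilde\mu+\lambda c,\,x\rangle\geq\lambda p^{*}+\langle\tilde\mu,b\rangle$ for all $x\in\cK$; since the left side is linear over a cone, boundedness below forces $M^{T}\tilde\mu+\lambda c\in\cK^{*}$ and the infimum to equal $0$, whence $-\langle\tilde\mu,b\rangle\geq\lambda p^{*}$. If $\lambda>0$, set $y:=-\tilde\mu/\lambda$ and $z:=c-M^{T}y=(M^{T}\tilde\mu+\lambda c)/\lambda\in\cK^{*}$; then $(y,z)$ is dual feasible and $\langle b,y\rangle=-\langle\tilde\mu,b\rangle/\lambda\geq p^{*}$, as needed.

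The hard part will be excluding $\lambda=0$, which is the only place strict feasibility is used. After the harmless reduction to $M$ of full row rank (delete redundant equality constraints, which changes neither the primal feasible set nor $p^{*}$), suppose $\lambda=0$; then $\langle M^{T}\tilde\mu,x\rangle\geq\langle\tilde\mu,b\rangle$ for all $x\in\cK$ with $\tilde\mu\neq0$, which forces $M^{T}\tilde\mu\in\cK^{*}$ and $\langle\tilde\mu,b\rangle\leq0$. Taking a strictly feasible $\bar x$ (so $M\bar x=b$ and $\bar x$ lies in the interior of $\cK$) we get $\langle M^{T}\tilde\mu,\bar x\rangle=\langle\tilde\mu,b\rangle\leq0$, while pairing $M^{T}\tilde\mu\in\cK^{*}$ with $\bar x\in\cK$ gives $\langle M^{T}\tilde\mu,\bar x\rangle\geq0$; hence $\langle M^{T}\tilde\mu,\bar x\rangle=0$, and since a nonzero element of $\cK^{*}$ is strictly positive on every interior point of the solid cone $\cK$, this yields $M^{T}\tilde\mu=0$, whereupon full row rank of $M$ forces $\tilde\mu=0$, contradicting $(\tilde\mu,\lambda)\neq0$. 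Thus $\lambda>0$ and the pair $(y,z)$ above completes the argument. I expect this degenerate-hyperplane bookkeeping to be the main obstacle; everything else is routine manipulation of the separation theorem. Since a conic program is a special convex program and the stated hypothesis is precisely Slater's condition, one could alternatively just invoke the standard convex-optimization literature (e.g.~\cite{BoydVan}), which is the route the paper takes, using this lemma only as a black box to validate the alternative dual~\eqref{prog:relaxed-dual}.
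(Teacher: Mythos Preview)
Your proof is correct: it is precisely the standard separating-hyperplane argument for Slater's condition in conic form, with the degenerate $\lambda=0$ case ruled out using an interior feasible point. The paper, however, does not prove this lemma at all --- it simply records it as a standard result and cites~\cite{BoydVan}, exactly as you anticipate in your final sentence. So there is nothing to compare; you have supplied a proof where the paper deliberately omits one.

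Two small points worth tightening if you keep your write-up. First, your argument that a nonzero $z\in\cK^{*}$ is strictly positive on interior points of $\cK$ requires $\cK$ to be solid; the paper's stated definition of ``proper'' (closed, convex, pointed, nonempty) does not explicitly include nonempty interior, though this is implicit in the Boyd--Vandenberghe convention and holds for all cones actually used in the paper. Second, the ``routine check'' about the separating constant equalling $\lambda p^{*}$ and $\lambda\geq0$ is indeed routine but relies on the separated set $\{0\}\times(-\infty,p^{*})$ being unbounded below in the $t$-coordinate, which forces the $t$-component of the normal to be nonnegative; you might spell this out in one line. Neither point is a genuine gap.
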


Consider a family of conic programs parameterized by $b$, denoted
$P(b)$, with optimum $p^*(b)$.  Say that $b$ is valid if there exists
some $x\in\cK$ such that $Mx=b$, and denote by $\cB$ the set of valid
$b$.  Consider now the program $\tilde{D}(b)$ defined by
$$\max_{y\in\mathbb{R}^m}\langle b,y\rangle\ \ \ \text{subj.\ to
}\ \ p^*(b')\geq\langle y,b'\rangle\ \ \forall b'\in\cB$$

\begin{lemma}
	If $P(b)$ has optimum $p^*(b)$ and $\tilde{D}(b)$ has optimum
	$\tilde{d}^*(b)$, then $\tilde{d}^*(b)\leq p^*(b)$.  Furthermore, if
	$P(b)$ is strictly feasible, then $\tilde{d}^*(b)=p^*(b)=d^*(b)$.
\end{lemma}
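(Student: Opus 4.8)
The plan is to prove the two assertions separately, in each case reducing to the standard weak/strong duality lemmas for conic programs quoted just above, plus the defining property of the dual cone $\cK^*$ and the identity $\langle M^Ty,x\rangle=\langle y,Mx\rangle$.

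First I would handle the inequality $\tilde d^*(b)\le p^*(b)$, which is essentially immediate. If $b\notin\cB$ then $P(b)$ is infeasible, so $p^*(b)=+\infty$ and there is nothing to prove; hence assume $b\in\cB$. Any feasible $y$ for $\tilde D(b)$ must satisfy $p^*(b')\ge\langle y,b'\rangle$ for \emph{all} $b'\in\cB$, and specializing this to $b'=b$ gives $\langle y,b\rangle\le p^*(b)$. Taking the supremum over feasible $y$ (which is $-\infty$, vacuously $\le p^*(b)$, if $\tilde D(b)$ is infeasible) yields $\tilde d^*(b)\le p^*(b)$. The point is just that $\tilde D(b)$ always carries $b$ itself among its defining constraints.

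For the strong-duality half, assume $P(b)$ is strictly feasible. I would first dispose of the degenerate cases. If $p^*(b)=-\infty$, then the constraint of $\tilde D(b)$ at $b'=b$ cannot be satisfied, so $\tilde D(b)$ is infeasible and $\tilde d^*(b)=-\infty$; likewise $D(b)$ is then infeasible, so $d^*(b)=-\infty$, and all three agree. Now suppose $p^*(b)$ is finite. By the strong duality lemma $p^*(b)=d^*(b)$, so $D(b)$ is feasible with finite value. The crux is to show that every point $(y,z)$ feasible for the ordinary dual $D(b)$ is also feasible for $\tilde D(b)$: for any $b'\in\cB$ and any $x'\in\cK$ with $Mx'=b'$,
$$\langle y,b'\rangle=\langle M^Ty,x'\rangle=\langle c-z,x'\rangle\le\langle c,x'\rangle,$$
since $\langle z,x'\rangle\ge0$ because $z\in\cK^*$ and $x'\in\cK$. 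As this holds for every feasible $x'$ of $P(b')$, taking the infimum over such $x'$ gives $\langle y,b'\rangle\le p^*(b')$, which is exactly the constraint of $\tilde D(b)$. Then, choosing for each $\epsilon>0$ an $\epsilon$-optimal solution $(y_\epsilon,z_\epsilon)$ of $D(b)$ — which exists since $D(b)$ is feasible with finite optimum $d^*(b)$ — we obtain a feasible point $y_\epsilon$ of $\tilde D(b)$ with objective $\langle y_\epsilon,b\rangle\ge d^*(b)-\epsilon=p^*(b)-\epsilon$, so $\tilde d^*(b)\ge p^*(b)-\epsilon$. Letting $\epsilon\to0$ gives $\tilde d^*(b)\ge p^*(b)$, which together with weak duality forces $\tilde d^*(b)=p^*(b)=d^*(b)$.

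The only delicate point I anticipate is the bookkeeping around unbounded/infeasible instances, combined with the fact that the ordinary dual optimum need not be attained even under primal strict feasibility; this is why I would phrase the final step via $\epsilon$-optimal dual solutions rather than an exact maximizer. (Invoking the sharper form of Slater's theorem that also guarantees attainment of the dual optimum would let one drop the $\epsilon$ and use an exact optimal $(y^*,z^*)$, but the $\epsilon$-version keeps the argument self-contained relative to the two duality lemmas already stated.) Everything else is routine.
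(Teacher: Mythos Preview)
Your proof is correct and follows essentially the same route as the paper: show that any $y$ feasible for the ordinary dual $D(b)$ is automatically feasible for $\tilde D(b)$ via the dual-cone inequality $\langle z,x'\rangle\ge 0$, and then invoke strong duality. Your version is in fact more careful than the paper's, which tacitly assumes the primal optimizer $x^*(b')$ exists for every $b'\in\cB$; by working with arbitrary feasible $x'$ and $\epsilon$-optimal dual points you sidestep both that and the dual-attainment issue (though note that once the feasible-set inclusion is established, $\tilde d^*(b)\ge d^*(b)$ follows directly without needing $\epsilon$-optimizers).
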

\begin{proof}
	For the first part, note that the set of constraints in $\tilde{D}$ include
	$p^*(b)\geq\langle y,b\rangle$, so $\tilde{d}^*(b)\leq p^*(b)$.
	
	For the second part, consider the dual problem $D(b)$ and write the
	constraint as $c-M^Ty\in\cK^*$. Take the inner product of $c-M^Ty$
	with $x^*(b')$ (the optimal argument for the primal with parameter
	$b'\in\cB$) to give
	$\langle c,x^*(b')\rangle - \langle M^Ty,x^*\rangle=p^*(b')-\langle
	y,Mx^*\rangle=p^*(b')-\langle y,b'\rangle$.
	Since $c-M^Ty\in\cK^*$, from $x^*(b')\in\cK$ we have that
	$p^*(b')-\langle y,b'\rangle\geq 0$.  Thus, for any $b'\in\cB$ we have
	$\langle y^*(b),b'\rangle\leq p^*(b')$.  The constraints in $D$ thus
	imply those in $\tilde{D}$ and so $\tilde{d}^*(b)\geq d^*(b)$.  If
	$P(b)$ is strictly feasible then by strong duality, $p^*(b)=d^*(b)$,
	so, combining with the first part, $\tilde{d}^*(b)=p^*(b)=d^*(b)$.
\end{proof}
\begin{remark}
	The previous lemma implies that we can think of $\tilde{D}$ as an
	alternative dual to $P$.
\end{remark}

\end{document}